  \providecommand\BibTeX{{%
    \normalfont B\kern-0.5em{\scshape i\kern-0.25em b}\kern-0.8em\TeX}}}
\newcommand{\R}{\ensuremath{\mathbb{R}}}
\newcommand{\dir}[1]{\ensuremath{\mathsf{d}#1}}
\newcommand{\dS}{\ensuremath{\mathtt{d}_{\mathsf{dE}}}}
\newcommand{\dQ}{\ensuremath{\mathtt{d}_{Q}}}
\newcommand{\dQi}{\ensuremath{\mathtt{d}_{Q_1}}}
\newcommand{\dQii}{\ensuremath{\mathtt{d}_{Q_2}}}
\newcommand{\dgen}{\ensuremath{\mathtt{d}}}
\newcommand{\ddQ}{\ensuremath{\bar{\mathtt{d}}_{Q}}}
\newcommand{\DQ}{\ensuremath{\tilde{\mathtt{d}}_{Q}}}
\newcommand{\dQk}{\ensuremath{\mathtt{d}^\leftrightarrow_{Q}}}
\newcommand{\dQW}{\ensuremath{\mathtt{d}_{Q,W}}}
\newcommand{\dQki}{\ensuremath{\mathtt{d}^\leftrightarrow_{Q_1}}}
\newcommand{\dQkii}{\ensuremath{\mathtt{d}^\leftrightarrow_{Q_2}}}
\newcommand{\dQWk}{\ensuremath{\mathtt{d}^\leftrightarrow_{Q,W}}}
\renewcommand{\c}[1]{\ensuremath{\EuScript{#1}}}
\newcommand{\PreserveBackslash}[1]{\let\temp=\\#1\let\\=\temp}
\newcolumntype{C}[1]{>{\PreserveBackslash\centering}p{#1}}
\newcolumntype{R}[1]{>{\PreserveBackslash\raggedleft}p{#1}}
\newcolumntype{L}[1]{>{\PreserveBackslash\raggedright}p{#1}}
\begin{document}

%%
%% The "title" command has an optional parameter,
%% allowing the author to define a "short title" to be used in page headers.
\title{Simple Distances for Trajectories via Landmarks}

\author{Jeff M. Phillips
}
\email{jeffp@cs.utah.edu }
\affiliation{%
  \institution{School of Computing, University of Utah.}
  \city{Salt Lake City, Utah}
  \country{USA}
}

\author{Pingfan Tang
}
\email{tang1984@cs.utah.edu}
\affiliation{%
  \institution{School of Computing, University of Utah.}
  \city{Salt Lake City, Utah}
  \country{USA}
}

%%
%% The abstract is a short summary of the work to be presented in the
%% article.
\begin{abstract}
We develop a new class of distances for objects including lines, hyperplanes, and trajectories, based on the distance to a set of landmarks.  These distances easily and interpretably map objects to a Euclidean space, are simple to compute, and perform well in data analysis tasks.  For trajectories, they match and in some cases significantly out-perform all state-of-the-art other metrics, can effortlessly be used in $k$-means clustering, and directly plugged into approximate nearest neighbor approaches which immediately out-perform the best recent advances in trajectory similarity search by several orders of magnitude.
These distances do not require a geometry distorting dual (common in the line or halfspace case) or complicated alignment (common in trajectory case).  We show reasonable and often simple conditions under which these distances are metrics.
\end{abstract}

%%
%% The code below is generated by the tool at http://dl.acm.org/ccs.cfm.
%% Please copy and paste the code instead of the example below.
%%%
%\begin{CCSXML}
%<ccs2012>
% <concept>
%  <concept_id>10010520.10010553.10010562</concept_id>
%  <concept_desc>Computer systems organization~Embedded systems</concept_desc>
%  <concept_significance>500</concept_significance>
% </concept>
% <concept>
%  <concept_id>10010520.10010575.10010755</concept_id>
%  <concept_desc>Computer systems organization~Redundancy</concept_desc>
%  <concept_significance>300</concept_significance>
% </concept>
% <concept>
%  <concept_id>10010520.10010553.10010554</concept_id>
%  <concept_desc>Computer systems organization~Robotics</concept_desc>
%  <concept_significance>100</concept_significance>
% </concept>
% <concept>
%  <concept_id>10003033.10003083.10003095</concept_id>
%  <concept_desc>Networks~Network reliability</concept_desc>
%  <concept_significance>100</concept_significance>
% </concept>
%</ccs2012>
%\end{CCSXML}
%
%\ccsdesc[500]{Computer systems organization~Embedded systems}
%\ccsdesc[300]{Computer systems organization~Redundancy}
%\ccsdesc{Computer systems organization~Robotics}
%\ccsdesc[100]{Networks~Network reliability}

\begin{CCSXML}
	<ccs2012>
	<concept>
	<concept_id>10010147.10010257.10010321.10010336</concept_id>
	<concept_desc>Computing methodologies~Feature selection</concept_desc>
	<concept_significance>500</concept_significance>
	</concept>
	<concept>
	<concept_id>10003752.10010061</concept_id>
	<concept_desc>Theory of computation~Randomness, geometry and discrete structures</concept_desc>
	<concept_significance>300</concept_significance>
	</concept>
	</ccs2012>
\end{CCSXML}

\ccsdesc[500]{Computing methodologies~Feature selection}
\ccsdesc[300]{Theory of computation~Randomness, geometry and discrete structures}

%%
%% Keywords. The author(s) should pick words that accurately describe
%% the work being presented. Separate the keywords with commas.
%\keywords{datasets, neural networks, gaze detection, text tagging}
\keywords{trajectory similarity, trajectory classification, sketching}

%% A "teaser" image appears between the author and affiliation
%% information and the body of the document, and typically spans the
%% page.
%\begin{teaserfigure}
%  \includegraphics[width=\textwidth]{sampleteaser}
%  \caption{Seattle Mariners at Spring Training, 2010.}
%  \Description{Enjoying the baseball game from the third-base
%  seats. Ichiro Suzuki preparing to bat.}
%  \label{fig:teaser}
%\end{teaserfigure}

%%
%% This command processes the author and affiliation and title
%% information and builds the first part of the formatted document.
\maketitle

\section{Introduction}
The \emph{choice} of a distance is often the most important modeling decision in any data analysis task.  This choice is what determines which objects are close and which are far.  However, this task is often taken lightly or made just based on what provides the simplest or easiest to compute option.

In this paper we explore what we believe to be a new and natural family of distances between objects, focusing on two cases when the objects are hyperplanes (e.g., regressors or separators), or when they are trajectories.  Our proposed distance $\dQ$ uses a set $Q$ of landmark points, which could be the dataset that regressors or separators are trained on, or in the case of trajectories these may be points of interest for which a trajectory passing nearby has specific meaning.  However, in a general case, $Q$ can be chosen as arbitrary or random points placed to cover a domain of focus.
Then the new distances, instead of being directly between the objects themselves, are based on how they interact with the set of landmarks.  In the simplest variant, for $n$ landmarks $Q$, for any object $J$ we create an $n$-dimensional vector  $v_J = (v_1, v_2, \ldots, v_n)$ of  the distance from $q_i \in Q$ to $J$, and the distance between two objects $J_1$ and $J_2$ is the Euclidean distance between the vectors $\|v_{J_1} - v_{J_2}\|$.
In other words, we \emph{vectorize} the distance between complex objects.

In this paper we explore several variants of this formulation, derive convenient mathematical properties, and demonstrate its efficacy in several data analysis scenarios.

\paragraph*{Key properties of a distance.}
A definition of a distance $\dgen$ is the key building block in most data analysis tasks.  For instance, it is at the heart of any assignment-based clustering (e.g., $k$-means) or for nearest-neighbor searching and analysis.  We can also define a radial-basis kernel $K(p,q) = \exp(-\dgen(p,q)^2)$ (or similarly), which is required for kernel SVM classification, kernel regression, and kernel density estimation.  A change in the distance, directly affects the meaning and modeling inherent in each of these tasks.  So the first consideration in choosing a distance should always be, does it capture the properties between the objects that matter?

As we will observe, by having a distance depend on a set of landmarks $Q$, then we can tune it to focus on certain regions.  In the case of regressors or separators (e.g., infinite lines, hyperplanes) this makes sure the distance is determined by how these infinite objects interact with the support of the data.  In the case of trajectories, the distance can be adjusted to focus on one or more locations of interest (e.g., a sporting event or school) or regions of interest (e.g., how someone passes through an airport, but not how they get there), as opposed to its full geometry.

A generic desired property of a distance is that it should be a metric: for instance this is essential in the analysis for the Gonzalez algorithm~\cite{Gon85} for $k$-center clustering, and many other contexts such as nearest-neighbor searching.

Another generic goal is analyzing the distance's metric balls.  That is, given a set of objects  $\c{J}$ and a distance $\dgen : \c{J} \times \c{J} \to \R$, let $B(J,r) = \{J' \in \c{J} \mid \dgen(J,J') \leq r\}$ be a metric ball around $J \in \c{J}$ of radius $r$.  Then we can define a range space $(\c{J}, \c{R})$ where $\c{R} = \{B(J,r) \mid J \in \c{J}, r \geq 0\}$, and consider its VC-dimension~\cite{VC71}.  When the VC-dimension $\nu$ is small, it implies that the metric balls cannot interact with each other in a too complex way, indicating the distance is roughly as well-behaved as a $\nu$-dimensional Euclidean ball.  More directly, this implies, decision boundaries to classify objects can be learned with only $\eps$-fraction generalization error using $O(\nu/\eps \cdot \log(1/\eps))$ samples if the data is separable, or $O(\nu/\eps^2)$ samples if the data is not separable~\cite{LLS01}.  Similar bounds can be shown for other tasks such as preserving kernel density estimates derived from such distances~\cite{JoshiKommarajuPhillips2011}.  In other words, this ensures that many tasks are stable with respect to the underlying family of objects $\c{J}$.

%A fourth goal, when a distance $\dQ$ depends on a landmark set $Q$, is that we want the distance to be stable with respect to this $Q$.  That is, for any (potentially large) set $Q$, there should be another one $\tilde Q$ with bounded size so that $\dQ$ is guaranteed to be close to $\dgen_{\tilde Q}$ for all inputs.  Such a property implies that the distance is resilient to modifications to the landmark set $Q$, and (in our case) will imply that the cost of computing $\dQ$ can always be reduced to something manageable, independent of the size of $Q$.

\paragraph{Main results.}
We define a new data dependent distance $\dQ$ for trajectories and for linear models (e.g., regressors, separators) built from a landmark data set $Q$.  For the simpler cases of linear models (in Section \ref{distance 1D}), we show it is a metric as long as $Q$ is full rank.  We also show that its metric balls have VC-dimension bounded only by the ambient dimension and not on the size of $Q$.  We find this surprising because the distance corresponds to an embedding in $|Q|$-dimensional Euclidean space where an immediate bound for the VC-dimension is $|Q|+1$; and indeed this will be the best bound we have for most of the trajectory variants.
We show how to directly extend all of these definitions of lines to trajectories, with a somewhat unintuitive and restrictive distance measure $\dQ^\leftrightarrow$.

For the pressing scenario of trajectories, in Section \ref{sec:curves}, we introduce two more intuitive variants $\dQ$ and $\dQ^\pi$.  We describe simple conditions for $Q$ under which they are metrics.  We can immediately see that both distances are pseudometrics (they satisfy triangle inequality, and are symmetric, but might have distinct objects with distance $0$).  We show they satisfy the final $0$-property of a metric as long as the waypoints are distinct and $Q$ is sufficiently dense.  For all new variants we demonstrate that they are at the least as effective for classification tasks (via KNN classifiers) as compared to the best of $9$ other common metrics, and \emph{in some cases  significantly outperforms all of these measures}.  Moreover, the previous competing variants are typically significantly more complicated or computationally intensive, and may require parameter tuning.

In contrast to most of these trajectory distance alternatives, all of our proposed distances are very simple to compute and work with.  They map curves (or hyperplanes) to a $|Q|$-dimensional parameter space where Euclidean distance (or similar) is used.
In $\dQ$ for curves, each coordinate $v_i$ is the distance to the closest point on the curve from $q_i \in Q$.
In $\dQ^\pi$ each ``coordinate'' is actually the $d$ coordinates of the closest point on the curve (not just the distance).
In $\dQ^\leftrightarrow$ each ``coordinate'' $v_i$ is actually $k$ values, to the distance to the closest point on the $k$ lines extending the $k$ lines segments of the curve.
These mappings are effective with only $10$ or $20$ landmark points $Q$.  And because they have a familiar Euclidean structure, we can immediately invoke favorite algorithms in this space, from Lloyds for $k$-means clustering, linear and kernel SVM, and highly-engineered approximate nearest neighbor libraries.
In comparison to recent trajectory similarity search systems~\cite{XLP2017,SLB2018}, we show using $\dQ$ is much simpler and several orders of magnitude faster.

In summary, this paper introduces a family of metrics for regressors, separators, and piecewise-linear curves which are incredibly simple to use, provide a sketch vector in Euclidean space, have many other desirable mathematical properties, and perform as well as and often significantly better than any existing measure.

\section{Distance Between Lines and Hyperplanes} \label{distance 1D}

As a warm up to the general case, we define a new landmark-based distance $\dQ$ between two lines, and give the condition under which it is a metric.
Then we generalize to hyperplanes, and provide the general metric proof, the VC-dimension of metric ball proof, and some algorithmic implications.  We conclude with a direct extension to trajectories.

%%%%%%%%%%%%%%%%%%%%%%%%%%%%%%%%%%
\subsection{Warm Up: Distance Between Lines}
We begin by reviewing alternatives, starting with the default \emph{dual Euclidean distance}.
Consider the least square regression problem in $\R^2$: given $Q=\{(x_1,y_1),\cdots,$ $(x_n,y_n)\}$ $\subset \R^2$, return a line $\ell:y=ax+b$ such that $(a,b)=\text{arg }\min_{(a,b)\in \R^2}$ $\sum_{i=1}^n(ax_i+b-y_i)^2$.
If $\ell_1: y=a_1x+b_1$ is an alternate fit to this data, then to measure the difference in these variants, we can define a distance between $\ell$ and $\ell_1$. A simple and commonly used distance (which we called the \emph{dual-Euclidean distance}) is
\[
\dS(\ell,\ell_1):=\sqrt{(a-a_1)^2+(b-b_1)^2}.
\]
This can be viewed as dualizing the lines into a space defined by their parameters (slope $a$ and intercept $b$), and then taking the Euclidean distance between these parametric points.
However, as shown in Figure \ref{fig:ddE}(Left), if both $\ell_1$ and $\ell_2$ have the same slope $a_1 = a_2$, and are offset the same amount from $\ell$ ($|b - b_1| = |b-b_2|$), then $\dS(\ell,\ell_1) = \dS(\ell, \ell_2)$, although intuitively $\ell_1$ does a much more similar job to $\ell$ with respect to $Q$ than does $\ell_2$.

\begin{figure}[b]
	\vspace{-.1in}
	%\hspace{-.0in}
	\includegraphics[width=0.49\linewidth]{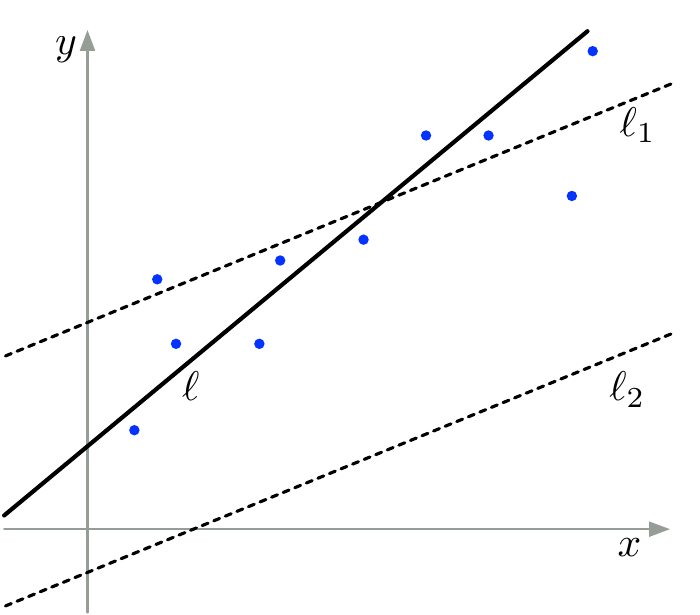}
     \includegraphics[width=0.49\linewidth]{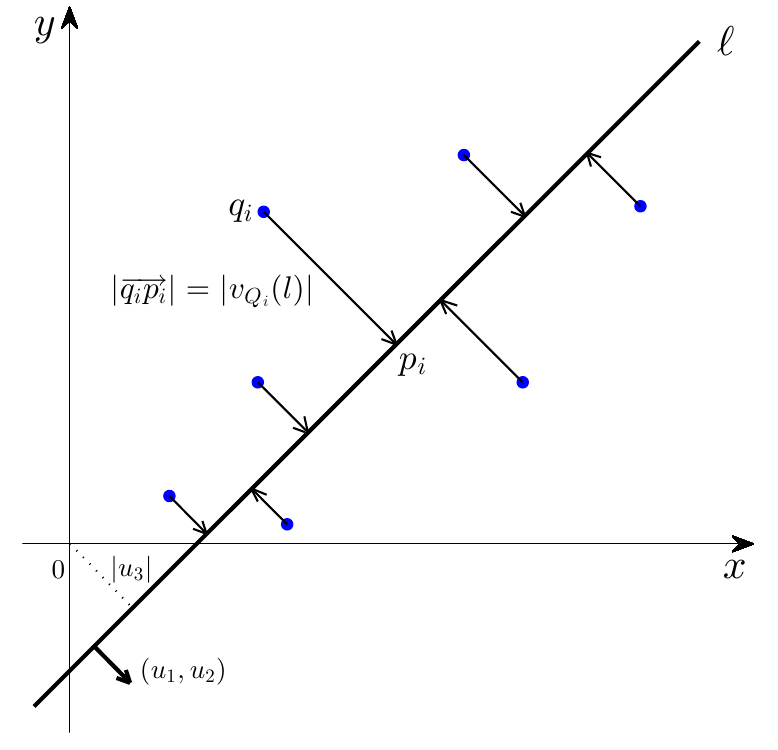}\\
	\vspace{-0.1in}
	\caption{
		Left: $\dS(\ell,\ell_1) = \dS(\ell, \ell_2)$, but which of $\ell_1$ and $\ell_2$ is more similar to $\ell$ with respect to $Q$?
	 	Right: Each $p_i$ is the projection of $q_i$ on $\ell$.}
	\label{fig:ddE}
	%\vspace{-.1in}
\end{figure}

%%%%%%%%%%%%%%%%%%%%%%%%%%%%%%%%%%%%
%\paragraph*{Other potential distances.}
More generically, a geometric object is usually described by an (often compact) set in $\R^d$.
There are many ways to define and compute distances between such objects~\cite{AG96,AMWW88,EGDWS1988,EGG1990}.  These can be based on the minimum~\cite{EGDWS1988,EGG1990} or maximum (e.g., Hausdorff)~\cite{AG96,AMWW88} distance between objects.  We review more later in the context of trajectories in Section \ref{sec:related}.  For lines or hyperplanes which extend infinitely and may intersect at single points, such measures are not meaningful.

%%%%%%%%%%%%%%%%%%%%%%%%%%%%%%%%%%%
\paragraph*{Our formulation.}
Suppose $Q=\{q_1, q_2,
\cdots,  q_n\} \subset \R^2$ where $q_i$ has coordinates $(x_i,y_i)$ for $1\leq i \leq n$, and $\ell$ is a line in $\R^2$, then $\ell$ can be uniquely expressed as
%\begin{equation}   \label{the unique form of a line}
\[
\ell =\{(x,y)\in \R^2 \ |\ \ u_1x+u_2y+u_3=0\},
\]
%\end{equation}
where $(u_1,u_2, u_3)\in \mathbb{U}^3$.
Here
$\b{U}^3 =\{u=(u_1,u_2,u_3)\in \R^3 \mid u_1^2+u_2^2=1$ and the first nonzero entry of $ u$  is positive$\}$,
is a canonical way to normalize $u$ where $(u_1, u_2)$ is unit normal vector and  $u_3$ is an offset parameter.
Let $v_{Q_i}(\ell)=u_1x_i+u_2y_i+u_3$; it is the signed distance from $q_i = (x_i, y_i)$ to the closest point on $\ell$.
Then $v_Q(\ell)=\left(v_{Q_1}(\ell),v_{Q_2}(\ell),\ldots,v_{Q_n}(\ell)\right)$ is the $n$-dimensional vector of these distances.
For two lines $\ell_1$, $\ell_2$ in $\R^2$, we can now define
\begin{align*}
\dQ(\ell_1,\ell_2)
&=
\Big\|\frac{1}{\sqrt{n}}(v_Q(\ell_1)-v_Q(\ell_2))\Big\|\\
&=
\Big(\sum_{i=1}^n\frac{1}{n}(v_{Q_i}(\ell_1)-v_{Q_i}(\ell_2))^2\Big)^{\frac{1}{2}}.
\end{align*}

As shown in Figure \ref{fig:ddE}(Right),
%fig def of v_Qi(l)},
$|v_{Q_i}(\ell)|$ is the distance from $q_i$ to $\ell$. With the help of $Q$, we convert each line $\ell$ in $\R^2$ to point $\frac{1}{\sqrt{n}}v_Q(\ell)$ in $\R^n$, and use the Euclidean distance between two points to define the distance between the original two lines.  Via this Euclidean embedding, it directly follows that $\dQ$ is symmetric and follows the triangle inequality.
The following theorem shows, under reasonable assumptions of $Q$, no two different lines can be mapped to the same point in $\R^n$, so $\dQ$ is a metric.

\begin{theorem}\label{well-difined metric}
	Suppose in $Q=\{(x_1,y_1),(x_2,y_2),\cdots,$ $(x_n,y_n)\}\subset \R^2$ there are three non-collinear points, and $\c{L}=\{\ell \mid  \ell \text{ is a line in } \R^2 \}$, then
	%for any $\ell_1,\ell_2\in \c{L}$,
	$\dQ$ is a metric in $\c{L}$.
\end{theorem}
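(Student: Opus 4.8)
The plan is to exploit the structure already noted in the text: $\dQ$ is induced by the map $\Phi\colon\c{L}\to\R^n$, $\Phi(\ell)=\tfrac{1}{\sqrt n}\,v_Q(\ell)$, via $\dQ(\ell_1,\ell_2)=\|\Phi(\ell_1)-\Phi(\ell_2)\|$. Hence non-negativity, symmetry, the triangle inequality, and $\dQ(\ell,\ell)=0$ are all inherited directly from the Euclidean norm on $\R^n$ (the factor $1/\sqrt n$ affects none of them). The only axiom left to verify — and the only one that uses the hypothesis on $Q$ — is the identity of indiscernibles, $\dQ(\ell_1,\ell_2)=0\Rightarrow\ell_1=\ell_2$. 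Since $\dQ(\ell_1,\ell_2)=0$ iff $\Phi(\ell_1)=\Phi(\ell_2)$ iff $v_Q(\ell_1)=v_Q(\ell_2)$, it suffices to show that $\ell\mapsto v_Q(\ell)$ is injective on $\c{L}$.

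To prove injectivity, write $\ell_1$ and $\ell_2$ in their unique normalized forms \eqref{the unique form of a line}, with parameters $u=(u_1,u_2,u_3)\in\b{U}^3$ and $w=(w_1,w_2,w_3)\in\b{U}^3$; the point of the set $\b{U}^3$ is precisely that this representation is one-to-one on lines, which is also what makes $v_Q(\ell)$ a function of $\ell$ and not of a chosen equation for it. Then $v_Q(\ell_1)=v_Q(\ell_2)$ says
\[
(u_1-w_1)\,x_i+(u_2-w_2)\,y_i+(u_3-w_3)=0\qquad\text{for all }i=1,\dots,n .
\]
Write $(a,b,c)=u-w$ and choose indices $i_1,i_2,i_3$ for which $q_{i_1},q_{i_2},q_{i_3}$ are non-collinear; the three corresponding equations form a linear system $M\,(a,b,c)^{\!\top}=0$, where $M$ is the $3\times 3$ matrix with rows $(x_{i_j},y_{i_j},1)$. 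Non-collinearity of these three points is exactly the statement $\det M\neq 0$: if $(a,b)\neq(0,0)$ then $ax+by+c=0$ would be a genuine line passing through all three, which is impossible, while $(a,b)=(0,0)$ forces $c=0$ as well. Hence $(a,b,c)=(0,0,0)$, so $u=w$, and by uniqueness of the normalized form $\ell_1=\ell_2$. This establishes $\dQ(\ell_1,\ell_2)=0\Rightarrow\ell_1=\ell_2$, so $\dQ$ is a metric on $\c{L}$.

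The step deserving the most care — though it is hardly an obstacle — is keeping the $\b{U}^3$ normalization in view, since that is what guarantees $v_Q$ (and hence \eqref{definition of distance in L2 norm}) is genuinely a function of the line; the actual mathematical content is just the elementary fact that no nontrivial affine functional on $\R^2$ vanishes at three non-collinear points. It is worth recording, although not needed for the theorem, that the hypothesis is essentially the natural one: if every point of $Q$ lies on a common line $m$, then any line $\ell$ that is neither parallel nor perpendicular to $m$ has the same distance vector $v_Q(\cdot)$ as its mirror image across $m$ (the points of $Q$ being fixed by that reflection), so $v_Q$ fails to be injective and $\dQ$ collapses to a pseudometric.
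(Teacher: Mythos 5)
Your proof is correct and follows essentially the same route as the paper's: both reduce the problem to identity of indiscernibles (the other metric axioms being inherited from the Euclidean embedding $\ell \mapsto \tfrac{1}{\sqrt n} v_Q(\ell)$), and both conclude by observing that $v_Q(\ell_1)=v_Q(\ell_2)$ forces the difference of the $\b{U}^3$-parameters to satisfy a $3\times 3$ linear system whose matrix is nonsingular precisely because three points of $Q$ are non-collinear. The closing aside about reflections across a common line is not needed for the theorem (and the signed-distance vector of the mirror image may differ by a sign under the $\b{U}^3$ normalization), but it does not affect the validity of the argument.
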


\begin{proof}
	The function $\dQ(\cdot,\cdot)$ is symmetric and by mapping to $\R^n$ satisfies the triangle inequality, and $\ell_1=\ell_2$ implies $\dQ(\ell_1,\ell_2)=0$; we now show if $\dQ(\ell_1,\ell_2)=0$, then $\ell_1=\ell_2$.
	
	Without loss of generality, we assume $(x_1,y_1),$ $(x_2,y_2),$ $(x_3,y_3) \in Q$ are not on the same line, which implies
	\begin{small}
		\begin{equation}   \label{three points are not on the same line}
		\left|\begin{array}{ccc}
		x_1 &  y_1    & 1 \\
		x_2 &  y_2   & 1\\
		x_3 &  y_3 & 1
		\end{array}\right| \neq 0.
		\end{equation}
	\end{small}
	Suppose $\ell_1$ and $\ell_2$ are expressed in the form:
	\[ %\begin{equation}
	\begin{split}
	\ell_1=&\ \{ (x,y)\in \R \ |\ u_{1}^{(1)}x+u_{2}^{(1)}y+u_{3}^{(1)}=0 \},  \\
	\ell_2=&\ \{ (x,y)\in \R \ |\ u_{1}^{(2)}x+u_{2}^{(2)}y+u_{3}^{(2)}=0 \},
	\end{split}
	\] %\end{equation}
	where $(u_{1}^{(1)},u_{2}^{(1)},u_{3}^{(1)}),(u_{1}^{(2)},u_{2}^{(2)},u_{3}^{(2)})\in \mathbb{U}^3$ represent lines $\ell_1$ and $\ell_2$, respectively.
	If $\dQ(\ell_1,\ell_2)=0$, then we have
	\[
	x_i(u_{1}^{(1)}-u_1^{(2)})+y_i(u_2^{(1)}-u_2^{(2)})+(u_3^{(1)}-u_3^{(2)})=0
	\]
	for $i=1,2,3$. We can write this as the system
	\begin{small}
		\[ %\begin{equation}   %\label{three points are not on the same line}
		\left[\begin{array}{ccc}
		x_1 &  y_1    & 1 \\
		x_2 &  y_2   & 1\\
		x_3 &  y_3 & 1
		\end{array}\right]
		\left[\begin{array}{c}
		u_{1}^{(1)}-u_{1}^{(2)} \\
		u_{2}^{(1)}-u_{2}^{(2)}\\
		u_{3}^{(1)}-u_{3}^{(2)}
		\end{array}\right]
		= 0.
		\] %\end{equation}
	\end{small}
	Using \eqref{three points are not on the same line}, we know it has the unique solution $[ u_{1}^{(1)}-u_{1}^{(2)}, u_{2}^{(1)}-u_{2}^{(2)}, u_{3}^{(1)}-u_{3}^{(2)}] ^T=[0,0,0]^T$.
	So, we have $ u_{1}^{(1)}=u_{1}^{(2)}$, $u_{2}^{(1)}=u_{2}^{(2)}$ and $u_{3}^{(1)}=_{3}^{(2)}$, and thus $\ell_1=\ell_2$.
\end{proof}

%%%%%%%%%%%%%%%%%%%%%%%%%%%%%%%%%%%
\paragraph*{Remark.}
In the above formulation, the absolute value $|v_{Q_i}(\ell)|$ is the distance from $(x_i,y_i)$ to the line $\ell$, i.e. $|v_{Q_i}(\ell)|=\min_{(x,y)\in \ell}((x-x_i)^2+(y-y_i)^2)^\frac{1}{2}$. Moreover, if $\ell$ is parallel to $\ell'$, then $|v_{Q_i}(\ell)-v_{Q_i}(\ell')|=\min_{(x,y)\in \ell,(x',y')\in \ell'}((x-x')^2+(y-y')^2)^\frac{1}{2}$ for any $i\in [n]$, which means $\dQ$ is a generalization of the natural offset distance between two parallel lines.

%%%%%%%%%%%%%%%%%%%%%%%%%%%%%%%%
\paragraph*{Remark.}
There are several other nicely defined variants of this distance.
For a line $\ell$ we could define $\bar{v}_{Q_i}(\ell) = |v_{Q_i}(\ell)|$, as the \emph{unsigned} distance from $q_i \in Q$ to the line $\ell$.  When we consider the distance from $q_i$ to some bounded object (e.g., a trajectory in place of $\ell$), this distance is more natural.  %However, we are not able to show a constant VC-dimension for the metric balls associated with this distance.
%
%In another variant, we could define $\vec{v}_{Q_i}(\ell)$ as a \emph{vector} from $q_i$ to the closest point on $\ell$.  Then we can construct a matrix $V_{Q,\ell}$ by stacking the vectors.  Then another natural distance metric would be the Frobenius norm of the difference of the two matrices $\|V_{Q, \ell_1} - V_{Q, \ell_2}\|_F$.  \jeff{Is this equivalent to unsigned version?}
%
%For both of these variants,
We are able to show in Appendix \ref{sec:metric} that under similar mild restrictions on $Q$ that this is a metric;  %For instance for the unsigned variant,
the condition requires $5$ points instead of $3$.  However, we are not able to show constant-size VC-dimension for its metric balls (as we do for $\dQ$ in Section \ref{sec:VC}).  There we also introduce another matrix Frobenius norm variant.
%And the matrix Frobenius variant is more complex, without much perceived advantage.

%%%%%%%%%%%%%%%%%%%%%%%%%%%%%%%%%%%%%%%%%%%%%%%%%%%%%%%%%%%%%%%%%%
%%%%%%%%%%%%%%%%%%%%%%%%%%%%%%%%%%%%%%%%%%%%%%%%%%%%%%%%%%%%%%%%%%
\subsection{Distance Between Hyperplanes}
\label{distance 2D}

%In this section, we generalize $\dQ$ to the distance between two hyperplanes, bound the VC dimension of the range space induced by this distance, and show how to reduce the size of $Q$.
Now let $\c{H}=\{h \mid  h \text{ is a hyperplane  in } \R^d \}$ represent the space of all hyperplanes.
Suppose $Q=\{q_1,q_2,\cdots,q_n\}\subset \R^d$, where $q_i$ has the coordinate $(x_{i,1},x_{i,2}.\cdots,x_{i,d})$. Any hyperplane $h \in \c{H}$ can be uniquely expressed in the form
\[
h=\big\{x=(x_1,\cdots,x_d)\in \R^d \ |\ \sum\nolimits_{j=1}^d u_jx_j+u_{d+1}=0 \big\},
\]
where $(u_1,\cdots,u_{d+1})$ is a vector in
$\mathbb{U}^{d+1} :=\{u=(u_1,$
$\cdots,u_{d+1})\in \R^{d+1}  \mid \sum_{j=1}^d u_j^2=1 $ $\text{ and the first }$ nonzero entry of $u \text{ is positive}\}$,
i.e. $(u_1,\cdots,u_d)$ is the unit normal vector of $h$, and $u_{d+1}$ is the offset.
We introduce the notation
$v_Q(h)=(v_{Q_1}(h),\cdots,v_{Q_n}(h))$ where $v_{Q_i}(h)$ is again the signed distance from $q_i$ to the closest point on $h$.   We can specify
$v_{Q_i}(h)=\sum_{j=1}^d u_jx_{i,j}+u_{d+1}$,
which is a dot-product with the unit normal of $h$, plus offset $u_{d+1}$.  Now for two hyperplanes $h_1,h_2$ in $\R^d$ define
%\begin{small}
%\begin{equation}  \label{definition of distance for two hyperplanes in L2 norm}
%\begin{split}
\begin{align*}
\dQ(h_1,h_2) &:=\big\|\frac{1}{\sqrt{n}}(v_Q(h_1)-v_Q(h_2))\big\|
\\ &= \Big(\sum_{i=1}^n\frac{1}{n}(v_{Q_i}(h_1)-v_{Q_i}(h_2))^2\Big)^{\frac{1}{2}}.
\end{align*}
%\end{split}
%\end{equation}
%\end{small}

For $Q \subset \R^d$, similar to $\dQ$ in $\R^2$, we want to consider the case that there are $d+1$ points in $Q$ which are not on the same hyperplane.  We refer to such a point set $Q$ as \emph{full rank} since if we treat the points as rows, and stack them to form a matrix, then that matrix is full rank.
Like lines in $\R^2$, a hyperplane can also be mapped to a point in $\R^n$, and if $Q$ is full rank, then no two hyperplanes will be mapped to the same point in $\R^n$.
So, similar to Theorem \ref{well-difined metric}, we can prove $\dQ$ is a metric in $\c{H}$.

\begin{theorem}\label{well-difined metric for hyperplanes}
	If $Q=\{q_1,q_2,\cdots,q_n\}\subset \R^d$ is full rank, then %for any $h_1,h_2\in \c{H}$,
	$\dQ$
	is a metric in $\c{H}$.
\end{theorem}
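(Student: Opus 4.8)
The plan is to follow the template of Theorem~\ref{well-difined metric} essentially verbatim, just in higher dimension. The function $\dQ(\cdot,\cdot)$ is symmetric and, because it is defined as the Euclidean norm of $\frac{1}{\sqrt n}(v_Q(h_1)-v_Q(h_2))$ in $\R^n$, it automatically inherits the triangle inequality from the Euclidean norm. Also $h_1=h_2$ clearly gives $\dQ(h_1,h_2)=0$. So the only content is the converse: if $\dQ(h_1,h_2)=0$ then $h_1=h_2$. This is where full rank of $Q$ is used.

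First I would fix the canonical representations $u^{(1)},u^{(2)}\in\mathbb{U}^{d+1}$ of $h_1,h_2$, so that $v_{Q_i}(h_k)=\sum_{j=1}^d u_j^{(k)}x_{i,j}+u_{d+1}^{(k)}$. Since $\dQ(h_1,h_2)=0$ forces every coordinate $v_{Q_i}(h_1)-v_{Q_i}(h_2)$ to be zero, setting $z_j=u_j^{(1)}-u_j^{(2)}$ for $j=1,\dots,d+1$ gives the linear system $\sum_{j=1}^d x_{i,j} z_j + z_{d+1}=0$ for all $i=1,\dots,n$. By the full rank hypothesis, some $d+1$ of the points $q_i$, say $q_{i_1},\dots,q_{i_{d+1}}$, do not lie on a common hyperplane, which is exactly the statement that the $(d+1)\times(d+1)$ matrix whose rows are $(x_{i_\ell,1},\dots,x_{i_\ell,d},1)$ is nonsingular. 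Restricting the system to these $d+1$ rows then yields a square homogeneous system with nonsingular coefficient matrix, so $z=(z_1,\dots,z_{d+1})=0$, i.e. $u^{(1)}=u^{(2)}$, hence $h_1=h_2$. I would conclude by noting this establishes positivity, completing the list of metric axioms.

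I should be careful on one technical point: the ``full rank'' wording in the excerpt says the $n\times d$ matrix of points (rows $= q_i$) is full rank, whereas what the argument actually needs is that the augmented $n\times(d+1)$ matrix (appending a column of ones) has rank $d+1$ — equivalently, that $d+1$ of the points are affinely independent, i.e.\ not all on a common hyperplane. The sentence ``there are $d+1$ points in $Q$ which are not on the same hyperplane'' preceding the theorem is the condition I will actually invoke, and I would state this equivalence explicitly (rank $d+1$ of the augmented matrix $\iff$ $\exists$ $d+1$ affinely independent points) so the proof is airtight; this is the only place where care is needed, and it is the analogue of the determinant condition~\eqref{three points are not on the same line} used in the $\R^2$ case.

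Overall I do not expect a genuine obstacle here — the result is a routine generalization, and the ``main obstacle'' is purely bookkeeping: making sure the right rank condition is named and used, and that the reduction from the $n$ equations to a square invertible subsystem is spelled out cleanly. The deeper content of the paper (the VC-dimension bound, the sensitivity/leverage connection) lies elsewhere; this theorem is the expected warm-up counterpart of Theorem~\ref{well-difined metric}.
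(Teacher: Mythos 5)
Your proposal is correct and matches the paper's intended argument exactly: the paper gives no separate proof for this theorem, stating only that it follows ``similar to Theorem~\ref{well-difined metric}'' by reducing $\dQ(h_1,h_2)=0$ to a homogeneous linear system on $u^{(1)}-u^{(2)}$ and invoking a nonsingular $(d+1)\times(d+1)$ subsystem. Your observation that the operative hypothesis is really affine independence of some $d+1$ points (rank $d+1$ of the augmented matrix with the appended column of ones), which is what the paper's prose definition of ``full rank'' actually describes, is a worthwhile clarification but does not change the argument.
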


%%%%%%%%%%%%%%%%%%%%%%%%%%%%%%%%%%%
\paragraph*{Remark.}
The distance can be generalized to weighted point sets and continuous probability distributions.  Suppose $Q=\{q_1,\cdots,q_n\}$ $\subset \R^d$, $W=\{w_1,\cdots,w_n\}\subset(0,\infty)$, and $\mu$ is a probability measure on $\R^d$.
For two hyperplanes $h_1,h_2$ in $\R^d$,  we define
\begin{small}
	\begin{align*}         \label{def of d_QW}
	\dQW(h_1,h_2)
	&=\Big(\sum_{i=1}^n w_i(v_{Q_i}(h_1)-v_{Q_i}(h_2))^2\Big)^{\frac{1}{2}},\\
	\mathtt{d}_\mu(h_1,h_2)
	&=\Big(\int _{\R^d} (v_{x}(h_1)-v_{x}(h_2))^2 \dir \mu(x)\ \Big)^{\frac{1}{2}},
	\end{align*}
	where $v_{x}(\cdot)$ is defined in the same way as $v_{Q_i}(\cdot)$ for $x\in \R^d$.
\end{small}

%%%%%%%%%%%%%%%%%%%%%%%%%%%%%%%%%%%%%%%%%%%%%%%%%%%%%%%%%%%%%%%%%%
\subsection{VC-Dimension of Metric Balls for $\dQ$}
\label{sec:VC}

The distance $\dQ$ can induce a range space $(\c{H},\c{R}_Q)$, where again $\c{H}$ is the collection of all hyperplanes in $\R^d$, and
%$\c{H}=\{h\ |\  h \text{ is a hyperplane in } \R^d \}$,
$\c{R}_Q=\{B_Q(h,r)\ | \ h\in \c{H}, r\geq 0\}$ with metric ball $B_Q(h,r)=\{h'\in\c{H}\ |\ \dQ(h,h')\leq r\}$. We prove that the VC dimension~\cite{VC71} of this range space only depends on $d$, and is independent of the number of points in $Q$.

\begin{theorem}\label{the bound of VC dimension}
	Suppose $Q\subset \R^d$ is full rank, then the VC-dimension of the range space $(\c{H},\c{R}_Q)$ is at most $\frac{1}{2}(d^2+5d+6)$.
\end{theorem}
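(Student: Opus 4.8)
The plan is to pull the metric balls back to the parameter space $\mathbb{U}^{d+1}\subset\R^{d+1}$, observe that there they are sublevel sets of quadratic polynomials, and then invoke the textbook linearization that turns low‑degree polynomial sublevel sets into halfspaces, for which the VC-dimension is known exactly. The one fact that makes everything go through is the linearity built into the definition: since $v_{Q_i}(h)=\sum_{j=1}^d u_j x_{i,j}+u_{d+1}$ is an affine function of the parameter vector $u=u(h)\in\mathbb{U}^{d+1}$, stacking the rows $(x_{i,1},\dots,x_{i,d},1)$ into an $n\times(d+1)$ matrix $X$ yields $v_Q(h)=Xu(h)$, and therefore
\[
\dQ(h_1,h_2)^2 \;=\; \tfrac1n\bigl\| X\bigl(u(h_1)-u(h_2)\bigr)\bigr\|^2 \;=\; \bigl(u(h_1)-u(h_2)\bigr)^{\top} M \bigl(u(h_1)-u(h_2)\bigr), \qquad M:=\tfrac1n X^{\top}X .
\]
Here $M$ is a single $(d+1)\times(d+1)$ positive semidefinite matrix depending only on the fixed set $Q$. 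Hence, identifying $h$ with $u(h)$, the metric ball $B_Q(h,r)$ is exactly the set of $u'\in\mathbb{U}^{d+1}$ with $(u'-u(h))^{\top}M(u'-u(h))-r^2\le 0$: the trace on $\mathbb{U}^{d+1}$ of the $0$-sublevel set of a polynomial in $u'_1,\dots,u'_{d+1}$ of degree at most $2$.

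Next I would linearize. Let $m_1(u'),\dots,m_D(u')$ enumerate the nonconstant monomials of degree at most $2$ in the $d+1$ variables $u'_1,\dots,u'_{d+1}$, and let $\psi\colon\R^{d+1}\to\R^D$ be $\psi(u')=(m_1(u'),\dots,m_D(u'))$; then every inequality $p(u')\le 0$ with $p$ a polynomial of degree at most $2$ becomes an affine halfspace constraint $\langle a,\psi(u')\rangle+b\le 0$ in $\R^D$. Thus $h\mapsto\psi(u(h))$ carries the range space $(\c{H},\c{R}_Q)$ into the range space whose ranges are the affine halfspaces of $\R^D$ (and this map is injective, since $u(\cdot)$ is the canonical parametrization and $\psi$ records $u'_1,\dots,u'_{d+1}$ among its coordinates). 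Consequently, if $\c{R}_Q$ shatters $m$ hyperplanes then $m$ distinct points of $\R^D$ are shattered by affine halfspaces, forcing $m\le D+1$. It remains only to count: the number of monomials of degree at most $2$ in $d+1$ variables is $1+(d+1)+\binom{d+2}{2}=\tfrac12(d+2)(d+3)$, so $D=\tfrac12(d+2)(d+3)-1$ and the bound is $D+1=\tfrac12(d^2+5d+6)$, as claimed.

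I do not expect a deep obstacle: the distance is engineered to make this linearization work, and the apparent paradox noted in the introduction (the embedding into $\R^n$ suggesting VC-dimension $\sim|Q|$) evaporates once one sees that the image of $\c{H}$ in $\R^n$ lies in a $(d+1)$-dimensional subspace. The only care needed is bookkeeping — checking that shattering transfers along $h\mapsto\psi(u(h))$ (in particular that distinctness is preserved), being consistent about which side of the quadric a ball occupies, and getting the monomial count exactly right. One remark worth recording: the same setup in fact gives the sharper bound $d+2$, because full-rankness of $Q$ makes $M$ positive definite, so the change of variables $w=M^{1/2}u$ turns every $B_Q(h,r)$ into an ordinary Euclidean ball of $\R^{d+1}$ (varying centre and radius, fixed shape), and Euclidean balls in $\R^{d+1}$ have VC-dimension $d+2$; restricting the ground set to the image of $\mathbb{U}^{d+1}$ only lowers it. If one prefers a more self-contained argument that avoids the spectral decomposition of $M$, the monomial-counting argument above still delivers the stated $\tfrac12(d^2+5d+6)$.
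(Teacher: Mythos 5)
Your proof is correct and follows essentially the same route as the paper: express $\dQ(h,h_0)^2\le r^2$ as a degree-$2$ polynomial inequality in the canonical parameters $u(h)$, lift via the monomial map to a halfspace condition, and invoke the VC-dimension of halfspaces; your monomial count reproduces the paper's bound of $\frac{1}{2}(d^2+5d+6)$ exactly. Your closing remark is also correct and is a genuine sharpening the paper does not record: full rank of $Q$ makes $M=\frac{1}{n}X^{\top}X$ positive definite, so the substitution $w=M^{1/2}u$ turns every metric ball into the trace of an ordinary Euclidean ball in $\R^{d+1}$, whence the VC-dimension is in fact at most $d+2$.
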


\begin{proof}
	For any $B_Q(h_0,r) \in \c{R}_Q$, suppose $Q=\{x_1,\cdots,x_n\}$ with $x_i=(x_{i,1},\cdots,x_{i,d})$ and $h\in B_Q(h_0,r)$.
	This implies
	$\dQ(h,h_0)$ $\leq r$, so if $h$ is represented by a unique vector $(u_1,\cdots,u_{d+1})\in \mathbb{U}^{d+1}$, then we have
	\begin{small}
		\begin{equation}   \label{expression of a_i and c}
		\sum_{i=1}^n\frac{1}{n}\Big( \sum_{j=1}^d u_j x_{i,j}+u_{d+1} -v_{Q_i}(h_0) \Big)^2\leq r^2.
		\end{equation}
	\end{small}
	Since this can be viewed as a polynomial of $u_1,\cdots,u_{d+1}$, we can use a standard lifting map to convert it to a linear equation about new variables, and then use the VC-dimension of the collection of halfspaces to prove the result.
	
	To this end, we introduce the following data parameters $a_j$ [for $0 \leq j \leq d+1$] and $a_{j,j'}$ [for $1 \leq j \leq j' \leq d+1$] which only depend on $Q$, $h_0$, and $r$.  That is these only depend on the metric $\dQ$ and the choice of metric ball.
	
	\begin{small}
		\[
		\begin{split}
		&
		a_{0}=\sum_{i=1}^nv_{Q_i}(h_0)^2-nr^2, \;\;\;
		a_{d+1}=-2\sum_{i=1}^n v_{Q_i}(h_0),  \;\;\;\\
		&a_j=-2\sum_{i=1}^n x_{i,j}v_{Q_i}(h_0) \; [\text{for } 1\leq j\leq d],
		\\ &
		a_{d+1,d+1}=n,  \;\;\;\;
		a_{j,d+1}=2\sum_{i=1}^n x_{i,j} \text{ [ for } 1\leq j\leq d],
		\\&
		a_{j,j}=\sum_{i=1}^nx_{i,j}^2 \text{ [for } 1\leq j\leq d],   \;\;\; \text{ and } \;\;\;\\
		&a_{j,j'}=2\sum_{i=1}^n x_{i,j}x_{i,j'} \text{[for } 1\leq j<j'\leq d].
		\end{split}
		\]
	\end{small}
	We also introduce another set of new variables $y_j$ [for $1 \leq j \leq d+1$] and $y_{j,j'}$ [for $1 \leq j\leq j' \leq d+1$] which only depend on the choice of $h$:
	\begin{small}
		\[
		\begin{split}
		&y_j=u_j \text{ [for } 1\leq j\leq d+1]
		\;\;\; \text{ and } \;\;\;\\
		&y_{j,j'}=u_ju_{j'} \text{ [for } 1\leq j\leq j'\leq d+1].
		\end{split}
		\]
	\end{small}
	Now \eqref{expression of a_i and c} can be further rewritten as
	\begin{small}
		\[
		\sum_{j=1}^{d+1}a_jy_j+ \sum_{1\leq j\leq j'\leq d+1}a_{j,j'}y_{j,j'}+a_0\leq0.
		\]
	\end{small}
Since the $a_j$ and $a_{j,j'}$ only depend on $\dQ$, $h_0$, and $r$, and the above equation holds for any $y_j$ and $y_{j,j'}$ implied by an $h \in B_Q(h_0, r)$, then it converts $B_Q(h_0, r)$ into a halfspace in $\R^{d'}$ where
	%where the coefficients $a_j$ and $a_{j,j'}$ only depend on $Q,h_0,r$, and the variables $y_j,y_{j,j'}$ only depend on
	%the $h$. So, the set $B_Q(h_0,r)$ can be converted to a halfspace in $\R^{d'}$ where
	$d'=2(d+1)+ {d+1 \choose 2}=\frac{1}{2}(d^2+5d+4)$.
	Since the VC-dimension of  halfspaces in $\R^{d'}$ is $d'+1$, the VC dimension of $(\c{H},\c{R}_Q)$ is at most $d'+1=\frac{1}{2}(d^2+5d+6)$.
\end{proof}

%%%%%%%%%%%%%%%%%%%%%%%%%%%%%%%
\paragraph*{Remark.}
This distance, metric property, and VC-dimension result extend to operate between any objects, such as polynomial models of regression, when linearized to hyperplanes in $\R^d$.

\subsection{Applications in Analysis}
\label{sec:app}

The new distance $\dQ$ for hyperplanes has many applications in statistical and algorithmic data analysis where hyperplanes map to linear models.  %We provide two examples.
%We briefly sketch a few, with more elaboration in Appendix \ref{app:app}.
%
%There are numerous ways to generate a regression model for a set $Q$, for instance lasso or ridge regression with different regularization parameters, comparing the trained models with different random test-train splits of the data, or from various data sets $X_1, X_2, \sim Q$ from bootstrapped samples of $Q$.  We can cluster these models using Lloyd's algorithms for $k$-means, or using Gonzalez algorithm~\cite{Gon85} for $k$-center which only requires $\dQ$ is a metric.  This can help determine if this set of models is multi-modal and hence there are distinct ways to model the data.
%
For instance,
given a large varieties of regression models $H = \{h_1, h_2,$ $\ldots, h_m\}$ (e.g., stemming from different algorithms or model parameters) we can define a Gaussian-like kernel $K(h_1,h_2) = \exp(-\dQ(h_1, h_2)^2)$ and kernel density estimate $\kde_H(h) = \frac{1}{|H|} \sum_{h_i \in H} K(h,h_i)$.  The VC-dimension $\nu$ of the metric balls of $\dQ$ implies numerous stability and approximation properties of the KDE.  For instance, given a sample $S \subset H$ of size $O(\nu/\eps^2)$ ensures that with constant probability $\|\kde_H - \kde_S\|_\infty \leq \eps$~\cite{JoshiKommarajuPhillips2011}.
The embedding implies we can use Llloyd's algorithm for $k$-means clustering, and the metric property implies Gonzalez algorithm~\cite{Gon85} for $k$-center clustering will give a $2$-approximation.

\subsection{Direct Extension (literally) to Trajectories}

In this section, we show how $\dQ$ can be simply generalized to the distance between two piecewise-linear curves,
while retaining the many nice properties described above.
Let $\Gamma_k=\{\gamma \mid  \gamma \text{ is a curve in } \R^2$
$\text{ defined by $k$ ordered line segments}\}$ represent the space of all $k$-piecewise linear curves.

For any curve $\gamma \in \Gamma_k$, let its $k$ segments be $\langle s_1, s_2, \ldots, s_k \rangle$, and let these map to $k$ lines $\ell_1, \ldots, \ell_k$ where each $\ell_j$ contains $s_j$ (literally an ``extension'' of $s_j$ to a line $\ell_j$, $\ddot\smile$).  Next add two more lines: $\ell_0$ which is perpendicular to $\ell_1$ and passes through the first end point of $s_1$, and $\ell_{k+1}$ which is perpendicular to $\ell_k$ and passes through the last end point of $s_k$ (in high dimensions, some canonical choice of $\ell_0$ and $\ell_{k+1}$ is needed).  We now represent $\gamma$ as the ordered set of $k+2$ lines $(\ell_0, \ell_1, \ldots, \ell_k, \ell_{k+1})$.
This mapping is $1$to$1$, since segments $s_i$ and $s_{i+1}$ share a common end point, and this defines the intersection between $\ell_i$ and $\ell_{i+1}$.  The intersections with the added lines $\ell_0$ and $\ell_{k+1}$ define first and last endpoints of $s_1$ and $s_k$, and these endpoints are sufficient to define $\gamma$.

Now for two curves $\gamma^{(1)},\gamma^{(2)} \in \Gamma_k$, we define the distance using their line representations $(\ell_0^{(1)}, \ldots, \ell_{k+1}^{(1)})$ and $(\ell_0^{(2)}, \ldots, \ell_{k+1}^{(2)})$, respectively, as
\[
\dQk(\gamma^{(1)},\gamma^{(2)}):=\frac{1}{k+2}\Big(\sum\nolimits_{i=0}^{k+1} \dQ \big(\ell_i^{(1)},\ell_i^{(2)}\big)\Big).
\]

\paragraph*{Metric.}
If $\dQk(\gamma^{(1)},\gamma^{(2)})=0$, then $\dQ\big(\ell_i^{(1)},\ell_i^{(2)}\big)=0$ for all $i\in[k]$, which implies $\ell_i^{(1)}=\ell_i^{(2)}$ if $Q$ is full rank.  Combined with the $1$to$1$ nature of the mapping from $\gamma = (s_1, \ldots, s_k)$ to $(\ell_0, \ldots, \ell_{k+1})$, we have that if $Q$ is full rank, then $\dQk$ is a metric over $\Gamma_k$.

\paragraph*{VC dimension.}
The distance $\dQk(\cdot,\cdot)$ can induce a range space $(\Gamma_k,\c{S}_{Q,k})$, where again $\Gamma_k$ is the collection of all $k$-piecewise linear curves in $\R^2$, and
%$\c{H}=\{h\ |\  h \text{ is a hyperplane in } \R^d \}$,
$\c{S}_{Q,k}=\{B_Q(\gamma,r)\ | \ \gamma\in \Gamma_k, r\geq 0\}$ with metric ball $B_Q(\gamma,r)=\{\gamma'\in \Gamma_k \ |\ \dQk(\gamma,\gamma')\leq r\}$. Using the straightforward extensions of the method in the proof of Theorem \ref{the bound of VC dimension}, we can show the VC dimension of this range space only depends on $k$, and is independent of the number of points in $Q$.
Specifically, for full rank $Q \subset \R^2$, the VC-dimension of $(\Gamma_k, \c{S}_{Q,k})$ is at most $9k + 19$.

While retaining all above mathematical properties, this distance is unintuitive, and as we show in Section \ref{sec:traj-analysis}, can perform less than optimally.  We next develop other trajectory distances which are more intuitive, but have weaker mathematical properties.

%For $\dQk(\cdot,\cdot)$, the vector representation of a curve
%$\gamma \in \Gamma_k$ and set $Q$  is $[v_Q(\ell_0), v_Q(\ell_1), \ldots, v_Q(\ell_k), v_Q(\ell_{k+1})]$,
%the concatenation of $k+2$ vectors for each $\ell_i$, so we can use SVM to classify trajectories.
%Although the distance $\dQk(\cdot,\cdot)$ has a metric property and induces metric balls with bounded VC-dimension,
%its performance in classification is not very effective,
%so we introduce other two definitions for the distance between trajectories in Section \ref{sec:curves}.

%%%%%%%%%%%%%%%%%%%%%%%%%%%%%%%%%%%%%%%%%%%%%%%%%%%%%%%%%%%%%%%%%%%
%%%%%%%%%%%%%%%%%%%%%%%%%%%%%%%%%%%%%%%%%%%%%%%%%%%%%%%%%%%%%%%%%%%
\section{Landmark Distances Between Trajectories}
\label{sec:curves}

In this section, we define two variants of $\dQ$ for trajectories, focused on their modeling as piecewise-linear curves on $\R^2$.  We let $\Gamma$ define the set of such curves, and they are specified by a series of critical points $\langle c_0, c_1, \ldots, c_k \rangle$.  The curve $\gamma \in \Gamma$ is the subset of $\R^2$ defined by the $k$ segments $s_1, s_2, \ldots, s_k$ where $s_i = c_{i-1} c_i$ is the continuous set of points between critical points $c_{i-1}$ and $c_i$.
For notational convenience, we will describe all curves as having $k$ segments, but the distance will not require this.  Moreover, since we model the trajectory as a continuous subset of $\R^2$, it will not distinguish trajectories of different speeds or moving in opposite directions but following the same paths.

Now for a curve $\gamma \in \Gamma$ and size $n$ point set $Q \subset \b{R}^2$, define
$v_i = \min_{p \in \gamma} \|q_i - p\|$ and $p_i = \arg\min_{p \in \gamma} \|q_i - p\|$; see Figure \ref{fig:traj-dist}.
%If $\arg\min_{p \in \gamma} \|q_i - p\|$ is not unique, then we take the point with smallest $x$-cordinate
%(or smallest $y$-coordinate when more than two points have the same smallest $x$-coordinate) as $p_i$.
For two curves $\gamma^{(1)}$ and $\gamma^{(2)}$ denote these values as $v_i^{(1)}$, $p_i^{(1)}$ and $v_i^{(2)}$, $p_i^{(2)}$ respectively. Our distances are then defined as:
\begin{align*}
\dQ\big(\gamma^{(1)},\gamma^{(2)}\big) &=  \Big(\frac{1}{n}\sum_{i=1}^{n}\big(v_i^{(1)}-v_i^{(2)}\big)^2\Big)^{\frac{1}{2}},
\\
\dQ^{\pi}\big(\gamma^{(1)},\gamma^{(2)}\big) & =\frac{1}{n}\sum_{i=1}^{n}\big(\|p_i^{(1)}-p_i^{(2)}\|\big).
\end{align*}
The standard variant $\dQ$ is the analog of the version for halfspaces, where as the second variant $\dQ^\pi$ (the \emph{projected landmark distance}) projects $Q$ onto the closest points of the curves, and then computes the average distances with respect to these projected points.

\begin{figure}[t]
	\vspace{-2mm}
	\begin{center}
		\includegraphics[width=.8\linewidth]{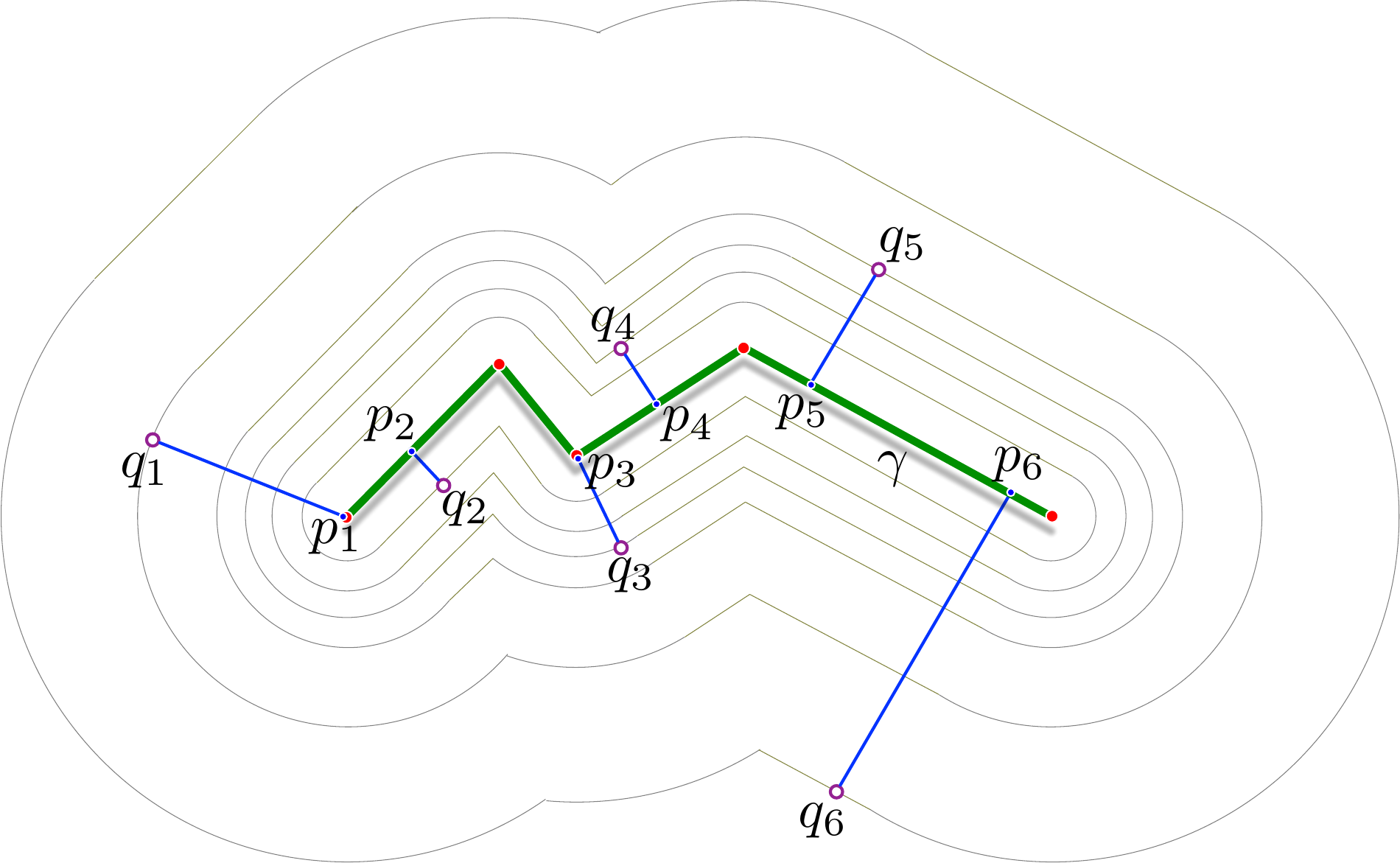}
	\end{center}
	\vspace{-3mm}
	\caption{Illustrating $q_i$ and $p_i$ on a trajectory for $\dQ$ and $\dQ^\pi$.\label{fig:traj-dist}}
	\vspace{-1mm}
\end{figure}

%%%%%%%%%%%%%%%%%%%%%%%%%%%%%%%%%%%%%%%%%%%%%%%%%%%%%%%%%%%%%%%%%%%
\subsection{Metric Properties}
\label{traj-metric}

In this section, we show a reasonable condition for the trajectories and $Q$ so that both variants are metrics.  As with lines and halfspaces, these distances are always pseudometrics: the symmetry and triangle inequality are direct consequences of the embedding to Euclidean space.
The only restriction of the trajectories is to ensure that two distinct curves do not have a distance $0$, and in our arguments this requires that the critical points have some non-zero separation from other parts of the curve.   These restrictions may not be necessary, but it makes the proofs simple enough.  Then we basically just require that $Q$ is sufficiently dense; if we decide many of these points are irrelevant, we can reduce the weights on those points (keeping them non-zero) and the metric properties still hold.

We define a family of curves $\Gamma_{\tau} \subset \Gamma$ so each $\gamma \in \Gamma_{\tau}$ has two restrictions:
(R1) Each angle $\angle_{[c_{i-1}, c_i, c_{i+1}]}$ about an internal critical point $c_i$ is non-zero (i.e., in $(0, \pi)$).
(R2) Each critical point $c_i$ is \emph{$\tau$-separated}, that is the ball $B(c_i, \tau) = \{x \in \R^2 \mid \|x-c_i\| \leq \tau\}$ only intersects the two adjacent segments $s_{i-1}$ and $s_i$ of $\gamma$, or one adjacent segment for end points (i.e., only the $s_1$ for $c_0$ and $s_k$ for $c_k$, if $\gamma$ has $k$ line segments).  The $\tau$-separated property, for instance, enforces that critical points are at least a distance $\tau$ apart.

We next restrict that all curves (and $Q$) lie in a sufficiently large bounded region $\Omega \subset \R^2$.  Let $\Gamma_{\tau}(\Omega)$ be the subset of $\Gamma_{\tau}$ where all curves $\gamma$ have all critical points within $\Omega$, and in particular, no $c_i \in \gamma \in \Gamma_{\tau}(\Omega)$ is within a distance $\tau$ of the boundary of $\Omega$.
Now for $\eta > 0$, define an infinite grid $G_\eta = \{g_v \in \R^2 |\  g_v = \eta v \text{ for } v = (v_1, v_2) \in \mathbb{Z}^2 \}$, where $\mathbb{Z}$ is all integers.

\begin{figure}%[htbp]
	\includegraphics[width=0.46\linewidth]{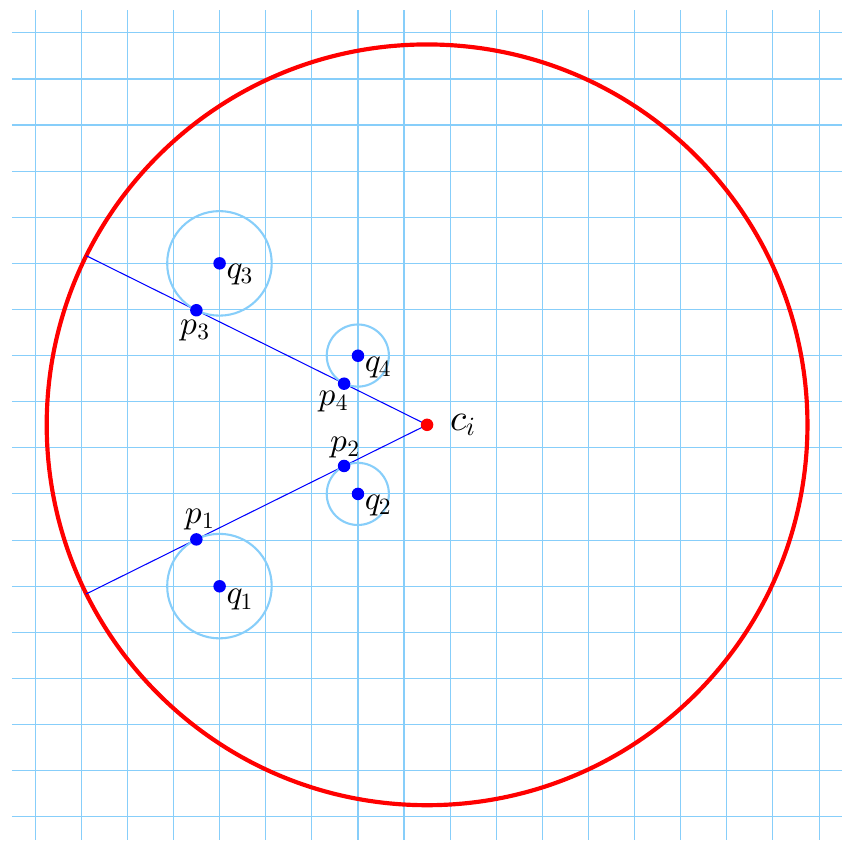}
	\vspace{-.16in}
	\caption{$c_i$ is a critical point of $\gamma^{(1)}$ } \label{metric_property}
	\vspace{-.2in}
\end{figure}

\begin{theorem}  \label{well-difined metric for dQ1 and dQ2}
	For $Q = G_\eta \cap \Omega$ and $\eta \leq \frac{\tau}{16}$, both $\dQ$ and $\dQ^\pi$ are metrics in $\Gamma_{\tau}(\Omega)$.
	%
	%For any bounded region $\Omega\subset \R^2$ and $\tau>0$,
	%suppose $Q= G_\eta \cap B(c,2r)$, where $c,r$ s.t. $\Omega \subset B(c,r)$.
	%If $\eta\leq\frac{\tau}{16}$, then both $\dQ^1$ and $\dQ^2$ are metrics in  $\Gamma_{k,\tau}(\Omega)$.
\end{theorem}

\begin{proof}
	We prove this theorem for $\dQ^\pi$, and the proof for $\dQ$ is similar and given in Appendix \ref{metric peroperties for dQ1 and dQ2}.
	Suppose $\gamma^{(1)},\gamma^{(2)}\in \Gamma_{\tau}(\Omega)$ have critical points $c_0, c_1,...c_{k}$ and $c_0', c_1',...c_{k'}'$ respectively. We only need to show if $\dQ^\pi(\gamma^{(1)},\gamma^{(2)})=0$ then $\gamma^{(1)}=\gamma^{(2)}$.
	Here, if two piecewise-linear curves have the same critical points and their orders are the same or reverse of each other, then these two curves are regarded as the same curve.
	
The argument follows 4 steps assuming $\dQ^\pi(\gamma^{(1)},\gamma^{(2)})=0$:
 (Step 1) Around each critical point $c_i$ of $\gamma^{(1)}$, we can identify at least $4$ points $q_1, q_2, q_3, q_4$ that map to $p_1, p_2, p_3, p_4$, two each on the two segments adjacent to $c_i$.
 (Step 2) The segments between defined by $\overline{p_1 p_2}$ and $\overline{p_3 p_4}$ must also be part of $\gamma^{(2)}$.
 (Step 3) The line extension of those two line segment must intersect at $c_i$, and this must also be critical point on $\gamma^{(2)}$
 (Step 4) Because these Steps 1-3 can be repeated for all critical points on $\gamma{(1)}$ and on $\gamma^{(2)}$, they must share critical points and connecting line segments, and be the same curves.
	
	We formalize these steps based on three observations:
	(O1) For $q\in Q, \gamma\in \Gamma$, $p=\arg$ $\min_{p'\in \gamma}\|p'-q\|$, suppose
	$l$ is the tangent line of the circle $C(q,\|q-p\|)$
	where $q$ is center and $\|q-p\|$ its radius, at point $p$. If $l\cap B(p,\delta)$
	is not apart of $\gamma$  for all $\delta>0$ , then $p$ must be a critical point of $\gamma$.
	(O2) If $\gamma\in \Gamma_{\tau}$, then in any ball with radius $\frac{\tau}{2}$, there is at most one critical point of $\gamma$.
	(O3) If a point moves along $\gamma\in \Gamma$, then it can only stop or change direction at critical points.

\underline{Step 1:}	
Suppose $c_i=(x_i,y_i)$ ($1\leq i \leq k-1$) is a critical point of $\gamma^{(1)}$, and consider a ball $B(c_i,\frac{1}{2}\tau)$, as shown in Figure \ref{metric_property}.
	Since the side length of each grid cell is $\eta\leq \frac{1}{16}\tau$, from the $\tau$-separated property (R2) we know for any  $q\in Q\cap B(c_i, \frac{\tau}{2})$, $p=\arg \min_{p'\in \gamma^{(1)}}\|p'-q\|$ is in $B(c_i,\frac{\tau}{2})$.
	So, there exist two points $q_1,q_2$ that are mapped to points $p_1,p_2$ on one line segment of $\gamma^{(1)}$ and
	another two points $q_3,q_4$ are mapped to points $p_3,p_4$ on the other line segment of $\gamma^{(1)}$ in $B(c_i,\frac{\tau}{2})$.
	Since $\dQ^\pi(\gamma^{(1)},\gamma^{(2)})=0$, we know $p_1,p_2,p_3,p_4$ are also on $\gamma^{(2)}$.
	
\underline{Step 2:}	
We assert the line segment $p_1p_2$ must be a part of $\gamma^{(2)}$. From (O2), we know  $p_1$ and $p_2$ cannot both be the critical point of $\gamma^{(2)}$ at the same time, so we assume $p_1$ is not a critical point. Thus, from (O1) we know a small part of tangent line $l$
	of circle $C(q_1,\|q_1-p_1\|)$ at $p_1$ is a part of $\gamma^{(2)}$. If $p_2$ is a critical point of $\gamma^{(2)}$, then from (O3) and (O2) we know the line segment $p_1p_2$ must be a part of $\gamma^{(2)}$. If $p_2$ is not a critical point of $\gamma^{(2)}$, then from
	(O1) we know a small part of tangent line $l$ of circle $C(q_1,\|q_1-p_1\|)$ at $p_2$ is a part of $\gamma^{(2)}$. So, in this case,
	(O3) and (O2) implies the line segment $p_1p_2$ is a part of $\gamma^{(2)}$.
	Using a similar argument, we know the line segment $p_3p_4$ is also a part of $\gamma^{(2)}$.
	
\underline{Step 3:}	
We extend the line $\overline{p_1p_2}$ from $p_1$ to $p_2$ and the line $\overline{p_3p_4}$ from $p_3$ to $p_4$.
	Suppose they intersect with the boundary of $B(c_i, \frac{\tau}{2})$ at $p_2'$ and $p_4'$ respectively.
	Since $\gamma^{(2)}$ cannot go into the interior of any ball with centers in $Q\cap B(c_i,\frac{\tau}{2})$, from (O3) we know
	there must be one critical point in line segment $p_2p_2'$. For the same reason, there must be one critical point in line segment $p_4p_4'$. Thus, (O2) implies $c_i$ is a critical points of $\gamma^{(2)}$.

\underline{Step 4:}	
Considering that $\gamma^{(2)}$ has to pass through $p_1,p_2,p_3,p_4$ and $c_i$, from $\tau$-separated property (R2), we know $\gamma^{(1)}$ and $\gamma^{(2)}$ must overlap with each other in $B(c_i,\frac{\tau}{2})$.
	For two endpoints $c_0$ and $c_k$ we can make the same argument, which means in a neighborhood of each critical point of $\gamma^{(1)}$, $\gamma^{(1)}$ overlaps with $\gamma^{(2)}$. This means $\{c_0,c_1,\cdots,c_k\}$ is a subset of $\{c_0',c_1',\cdots,$ $c_{k'}'\}$. Using the same argument $\{c_0',c_1',\cdots,c_{k'}'\}$ is a subset of \{$c_0,c_1,\cdots,c_k\}$.
	Therefore, $k=k'$ and we know $\gamma^{(1)}$ and $\gamma^{(2)}$ must have the same critical points and their orders must be the same or reverse of each other.
\end{proof}

\paragraph*{Remark.}
We did not try to optimize constants.  The point is that for most families of trajectories, with $Q$ sufficiently dense our distances are metrics, not just pseudometrics.  In practice these distances will work for small sets $Q$ (see below).

%%%%%%%%%%%%%%%%%%%%%%%%%%%%%%%%%%%%%%%%%%%%%%%%%%%%%%%%%%%%%%%%%%%
\section{Trajectories Analysis via New Distances}
\label{sec:traj-analysis}
We demonstrate that  $\dQ$ and $\dQ^\pi$ (and to lesser extent $\dQ^{\leftrightarrow}$) work effectively on real world problems.  These approaches achieve state-of-the-art performance, are incredibly simple to use, and their sketched representation plugs directly into $k$-means clustering, KNN or SVM classifiers, or ANN libraries.  We show that only a small number of landmarks are needed for good accuracy, and when certain landmarks are especially meaningful, our approaches can be easily tuned to achieve very high accuracy.

%%%%%%%%%%%%%%%%%%%%%%%%%%%%%%%%%%%%%%%
\subsection{Related Trajectory Distances, and Landmarks}
\label{sec:related}
There are by now numerous definitions of trajectories, with a variety of different aspects they can model and take into account.

We compare the classification errors found using $\dQk$, $\dQ$ and $\dQ^\pi$ with a series of representative distances for trajectories.  These are:
Euclidean distance among the critical points (Eu)~\cite{ZZKH2006},
discrete Frechet distance (dF)~\cite{TEHM1994},
dynamic time warping distance (DTW)~\cite{YJF1998},
discrete Hausdorff distance (dH)~\cite{FM2008},
longest common subsequence distance (LCSS)~\cite{VKG2002},
edit distance for real sequences (EDR)~\cite{COO2005}.
We also compare against the recently proposed
locality sensitive hashing distance (LSH1$_Q$), and the ordered version of  locality sensitive hashing distance (LSH2$_Q$)~\cite{ACKGS2018}, which consider the intersection of the trajectories with a set of disks.  This is conceptually similar to our methods, where we can think of the landmarks $Q$ as the centers of disks (as we do in experiments), and their approach requires a radius parameter $r$ for all disks, and is not a metric.
The definitions of these distances are given in Appendix \ref{def of distances between traojectories}.

To find the best parameters to minimized the error, for LCSS we tested $\eps\in \{0.001,0.005,0.01,0.015,\cdots,0.055\}$, $\delta\in\{1,2,3,\cdots,$ $10\}$, and for EDR we tested $\varepsilon\in \{0.001,0.005,0.01,0.015,\cdots,$ $0.055\}$, and for LSH1$_Q$ and LSH2$_Q$ we tested $r \in \{0.005,0.01,0.02,$ $\cdots,0.11\}$.
Since in all experiments (except Section \ref{using dQ in knn}), each trajectory is represented by a sequence of 10 critical points, it is enough to take the largest value of $\delta$ as 10 for LCSS. We only show the best results in this section, but provide the results of other parameter settings in Appendix \ref{The error of LCSS, EAR and LSH with Other Parameters}.

Zhang \etal~\cite{ZZKH2006} conducted a large comparison of trajectory distances and showed that in most cases Eu is general enough, efficient, and a superior or nearly as good model as any other ; we include dF and DTW as examples which search over all possible alignments and thus do not require the same number of or aligned critical points on both curves.  The restriction that trajectories have the same number of critical points is also not required for dH, EDR, LSH1$_Q$, and LSH2$_Q$, but in comparisons we always first reduce all trajectories to $10$ critical points (with Douglas-Peucker), except in Section \ref{using dQ in knn}, so a fair comparison to all metrics can be made.
%, and for discrete Frechet distance, we use the algorithm in ~\cite{TEHM1994} to compute it.

Even beyond the recent trajectory LSH paper~\cite{ACKGS2018}, the use of waypoints to provide a distance between trajectories is not new.  However, they are typically used in other contexts, such as annotating with geolocated social media~\cite{WLLWH15}.  Or for instance, in the context of a line of work~\cite{DRA2006,EKNY2005,RTJ2010} seeking to find the $k$ nearest time-encoded trajectories to a given point at a specific time, Lin \etal~\cite{DRA2006} use a set of landmarks $Q$ to map trajectories and query points into the Voronoi cells of $Q$ to quickly help in pruning.

\subsection{Warm-up: $k$-means Clustering} \label{Warm-up: k-means Clustering}
As a warm up, we consider clustering the $42$ trajectories from user $id_{155}$ in the Geolife GPS trajectory dataset~\cite{geolife-gps-trajectory-dataset-user-guide}.   We randomly choose $20$ spread-out Beijing POIs as the landmark set $Q$, shown as orange dots in Figure \ref{fig:cluster}.    Using $\dQ$, this maps each trajectory $\gamma$ to $\R^{20}$, and we directly run Lloyd's algorithm for $k$-means clustering with $k=2,3$, and color-code the corresponding trajectories in Figure \ref{fig:cluster}.  We observe that although the trajectories are intertwined, there is a central-city cluster found in both cases, and either $1$ or $2$ clusters found on the north side.

\begin{figure}[t]%[htbp]
	\includegraphics[width=0.49\linewidth]{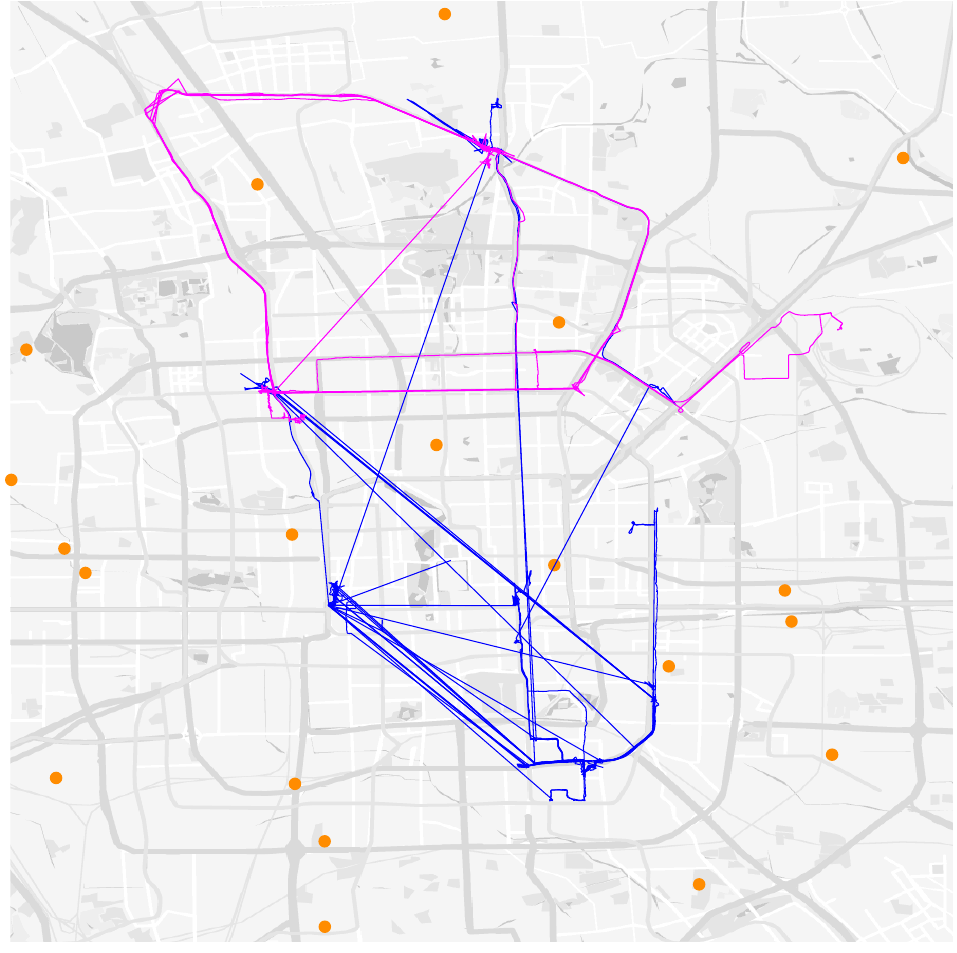}
	\includegraphics[width=0.49\linewidth]{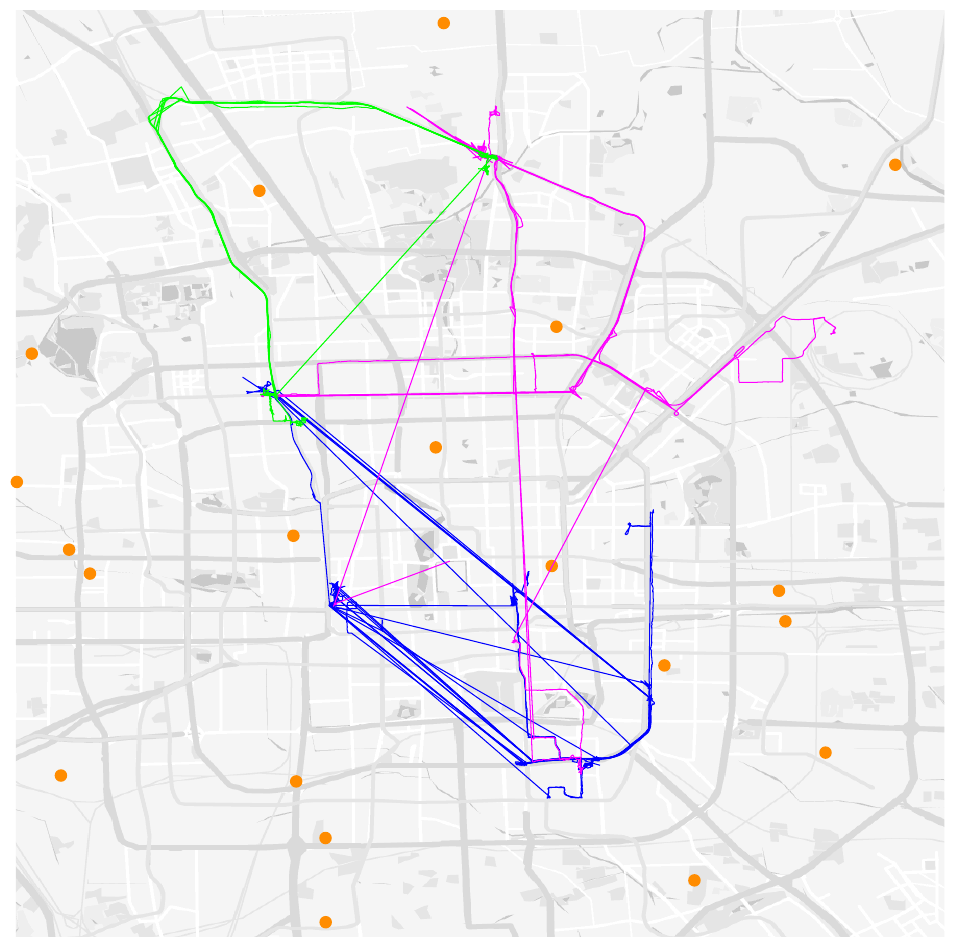}
	\vspace{-.15in}
	\caption{$2$ or $3$ clusters (color-coded) under $k$-means on $\dQ$ with $20$ landmarks $Q$ shown overlaid on Beijing.\label{fig:cluster}}
\end{figure}

\begin{table*}[t]
	\center
	\caption{Classification error on Beijing Drivers with KNN.}
	%  (For LCSS, $\varepsilon=0.01,\delta=1$ and for EDR $\varepsilon=0.005$.)}
	\label{table KNN real data1}
	\vspace{-.1in}
	
	\begin{tabular}{r|ccccccccccc}
		%{|p{1.2cm}||p{0.87cm}|p{0.87cm}|p{0.87cm}|p{0.87cm}|p{0.87cm}|}
		\hline
		&$\dQk$ & $\dQ$ & $\dQ^\pi$ & Eu & dF & DTW & dH  &  LCSS & EDR & LSH1$_Q$ & LSH2$_Q$\\
		best param & - & - & - & - & - &- &- & $\eps$=0.005,$\delta$=10 & $\eps$=0.005 & $r$=0.06 & $r$=0.1 \\
		\hline
		mean     & 0.1703 & 0.0817 & 0.0724 & 0.0811 & 0.1045 & 0.0722 & 0.0883  & 0.0714 & 0.0802 & 0.1290 &   0.2409\\
		median   & 0.1458 & 0.0667 & 0.0581 & 0.0654 & 0.0873 & 0.0571 & 0.0722  & 0.0500 & 0.0554 & 0.0949 &   0.2182 \\
		variance & 0.0108 & 0.0039 & 0.0033 & 0.0040 & 0.0054 & 0.0036 & 0.0043  & 0.0054 & 0.0070 & 0.0128 &   0.0210 \\
		\hline
		
	\end{tabular}
\end{table*}

\begin{table}[b]
	\center
	\caption{Classification error on Beijing Drivers with SVM.}  \label{table SVM_real data1}
	\vspace{-.1in}
	\begin{tabular}{cr|cccc}
		\hline
		kernel & statistics& $\dQk$ &$\dQ$ & $\dQ^\pi$ & Eu \\
		\hline
		\multirow{3}{*}{linear}    &mean     &  0.2170&  0.2066&  0.2046&  0.2173\\
		                           &median   &  0.1987&  0.1851&  0.1892&  0.2000\\
	              	               &variance &  0.0140&  0.0158&  0.0149&  0.0164\\
		\hline
		\multirow{3}{*}{quadratic} &mean     &  0.2327&  0.2190&  0.2000&  0.2377\\
		                           &median   &  0.2000&  0.1778&  0.1455&  0.1949\\
		                           &variance &  0.0200&  0.0281&  0.0284&  0.0278\\
		\hline
		\multirow{3}{*}{Gaussian}  &mean     &  0.1725&  0.0727&  0.0733&  0.0845\\
	                          	   &median   &  0.1509&  0.0587&  0.0588&  0.0690\\
		                           &variance &  0.0110&  0.0035&  0.0036&  0.0045\\
		\hline
	\end{tabular}

\end{table}

%%%%%%%%%%%%%%%%%%%%%%%%%%%%%%%%%%%%%%%
\subsection{Classifying Trajectories 1: Beijing Drivers} \label{Classifying Trajectories 1: Beijing Drivers}
We also consider classifying trajectories from users in the Geolife dataset~\cite{geolife-gps-trajectory-dataset-user-guide} with the same $20$ POI landmarks $Q$ as in the clustering example.
There are 182 users, and each user has several trajectories in Beijing. We only consider those trajectories with more than
10 critical points, and if a user has less than 10 such trajectories, then we remove this user. Thus, 54 users are removed, and in the remaining 128 users, 20 of them have more than 200 trajectories. For each of these users, we just randomly sample 200 trajectories (without replacement), to avoid severe imbalance in classification -- dealing with the imbalance challenge is not the focus of this paper.

Suppose two users with  $id_1$ and $id_2$ have two sets of trajectories $\Gamma^{(1)}$ and $\Gamma^{(2)}$ respectively.
Letting $|\Gamma^{(1)}|=m_1$ and $|\Gamma^{(2)}|=m_2$, we randomly sample $\lfloor \frac{3m_1}{10}\rfloor$ trajectories from $\Gamma^{(1)}$ and $\lfloor \frac{3m_2}{10}\rfloor$ trajectories from $\Gamma^{(2)}$ respectively to form a test set, and use the other trajectories in $\Gamma^{(1)}\cup \Gamma^{(2)}$ as the training data. Then we choose an algorithm and metric to do classification, and compute the error. For users with  $id_1$ and $id_2$, we do this 10 times and take the mean error as $\text{error}(id_1,id_2)$. We compute $\text{error}(id_1,id_2)$ for all 8128 pairs of 128 uses, and then output the mean, median, and variance of these  8128 errors.

For all of these 10 distances, we use the KNN classification ($K=5$); see Table \ref{table KNN real data1}.  The lowest error rates of about $7\%$ error is achieved by $\dQ^\pi$, DTW and LCSS; they are within the variance bounds of each other.  Then $\dQ$, Eu, and EDR achieve error about $8\%$, again within the error bounds of each other.  Other metrics perform worse with for example, dF at $10\%$, LSH1$_Q$ at $13\%$, $\dQk$ at $17\%$, and LSH2$_Q$ at $24\%$ error.

For  $\dQ$, $\dQ^\pi$ and Eu, since they map a trajectory to a vector in Euclidean space, we can also directly use SVM to classify these vectors.  We use fitcsvm in matlab R2018b and set `IterationLimit' (the maximum iteration number) as 200{,}000 for all kernel functions, and set `KernelScale' as `auto' for Gaussian kernel.
From Table \ref{table SVM_real data1}, we can see for SVM with three kinds of kernel functions, both $\dQ$ and $\dQ^\pi$ are better than Eu.  In the case of Gaussian SVM, both $\dQ$ and $\dQ^\pi$ achieve an error rate of about $7\%$ which is less than the about $8\%$ achieved by Eu, and this difference is larger than the variances.
Again in this SVM setting $\dQk$ performs much worse (for Gaussian kernels) or comparable to other measures, about the same as Eu, (linear quadratic kernels).

\begin{table}[b]
	\vspace{-.1in}\center
	\caption{Classification error on Beijing with $|Q|=200$.\label{tbl:mean200}}
	\vspace{-.1in}
	\begin{tabular}{C{0.3cm}R{1cm}|C{0.6cm}ccc}
		\hline
		& statistics & KNN & linear-SVM & quad-SVM & Gauss-SVM   \\
		\hline
		\multirow{3}{*}{$\dQ$}      & mean     & 0.0801 & 0.1419 & 0.1398 & 0.0722   \\
	                  	            & median   & 0.0650 & 0.1125 & 0.0909 & 0.0581   \\
		                            & variance & 0.0038 & 0.0112 & 0.0203 & 0.0035   \\
		\hline
		\multirow{3}{*}{ $\dQ^\pi$} &mean      & \pmb{0.0708} & 0.1432 & 0.2606 & 0.0726  \\
	                        	    &median    & 0.0558 & 0.1179 & 0.2222 & 0.0583  \\
		                            &variance  & 0.0033 & 0.0104 & 0.0373 & 0.0036  \\
		\hline
		\multirow{3}{*}{$\dQk$}     & mean     & 0.1711&  0.2362&   0.2673&    0.1735\\
	   	                            & median   & 0.1471&  0.2200&   0.2460&    0.1529\\
		                            & variance & 0.0108&  0.0138&   0.0212&    0.0110\\
		\hline
	\end{tabular}

\end{table}

As we increase the size of $Q$ to 200 (chosen at random), then both $\dQ$ and $\dQ^\pi$ slightly improve in performance, but not drastically, and $\dQk$ performs about the same.  The error statistics is shown in Table \ref{tbl:mean200}, from which we can see for KNN, the performance of $\dQ$ is better than Euclidean distance, and $\dQ^\pi$  provides the smallest error (mean error $0.0708$, smaller than $0.0714$ of LCSS).
Moreover, we can see as $|Q|$ increases, the error of $\dQ$ and $\dQ^\pi$ with three kernel functions all decrease, except $\dQ^\pi$ with quadratic kernel. When we use quadratic kernel, the algorithm takes a long time to converge, and for $|Q|=200$, the dimension of vectors used in $\dQ^\pi$ is 400, so the algorithm may not converge within 200000 iterations. The relatively small improvement also demonstrates that even with a small size, random $Q$, the distances still perform at or near the state-of-the-art.

\begin{table}[b]
	\center
	\caption{Classification error on Bus vs. Car.}
	\label{table KNN and SVM real data2 Q1}
	%{table KNN real data2 Q1}
	\vspace{-.1in}
	\begin{tabular}%{lr|ccc}
		{L{1cm}R{2.7cm}|ccl}
		\hline
		&distance &  mean  &  median  & variance \\
		\hline
	\multirow{16}{*}{KNN}	&$\dQki$    & 0.2027  & 0.1944 & 0.0042 \\
	    &$\dQkii$   & 0.2148  & 0.2222 & 0.0039 \\
	    &$\dQi$     & 0.2331  & 0.2222 &  0.0045\\
	    &$\dQii$    & 0.2229  &  0.2222 & 0.0041 \\
	    &$\dQi^\pi$ & 0.2608  &  0.2500  & 0.0039 \\
		&$\dQii^\pi$   &  0.2505 & 0.2500  & 0.0039 \\
		&Eu & 0.3323 & 0.3333   & 0.0044 \\
		&dF &  0.3431  & 0.3333 & 0.0045 \\
		&DTW & 0.3118 & 0.3056 & 0.0046 \\
		&dH & 0.3284 & 0.3333 & 0.0039 \\
		&LCSS ($\eps$=0.015,$\delta$=3) &  0.2448 &  0.2500 &  0.0037 \\
		&EDR ( $\eps$=0.015) &  0.2640 & 0.2500 & 0.0039 \\
		&LSH1$_{Q_1}$ ($r$=0.02) & 0.2673 & 0.2778 & 0.0020 \\
		&LSH2$_{Q_1}$ ($r$=0.08) & 0.2516 &  0.2500 & 0.0022 \\
		&LSH1$_{Q_2}$ ($r$=0.03) & 0.2209 & 0.2222 &  0.0039 \\
		&LSH2$_{Q_2}$ ($r$=0.05) & 0.2690 & 0.2778  & 0.0022 \\
		\hline
		\multirow{7}{*}{linear SVM}
		&Eu & 0.3624 & 0.3611 &  0.00007 \\
		&$\dQki$ & 0.3652 & 0.3611 & 0.0002 \\
		&$\dQkii$ &  0.3655 & 0.3611 & 0.0002 \\
		&$\dQi$  & 0.3611 & 0.3611 & 0  \\
		&$\dQii$ & 0.3611  & 0.3611 & 0   \\
		&$\dQi^\pi$ &  0.3611 & 0.3611 & 0 \\
		&$\dQii^\pi$ & 0.3612 & 0.3611 & 0.000003 \\
		\hline
		\multirow{7}{*}{quadratic SVM}
		&Eu & 0.3609 & 0.3611 & 0.0044\\
		&$\dQki$ &  0.3645 & 0.3611 & 0.0004  \\
		&$\dQkii$ & 0.3140 & 0.3056 & 0.0017 \\
		&$\dQi$  & 0.3617 & 0.3611 &  0.00003\\
		&$\dQii$ & 0.3625 & 0.3611 & 0.00008\\
		&$\dQi^\pi$ & 0.2644 &  0.2500 & 0.0042\\
		&$\dQii^\pi$ &  0.2828 & 0.2778 &  0.0045\\
		\hline
		\multirow{7}{*}{Gaussian SVM}
		&Eu &  0.2239 & 0.2222 & 0.0034\\
		&$\dQki$ &  0.1940  & 0.1944 & 0.0031 \\
		&$\dQkii$ & 0.2120 & 0.2222 & 0.0032\\
		&$\dQi$  & 0.1894 &  0.1944 & 0.0029 \\
		&$\dQii$ &  0.1968  & 0.1944 & 0.0033\\
		&$\dQi^\pi$ & \pmb{0.1659} & 0.1667 &  0.0033 \\
		&$\dQii^\pi$ & 0.1731 &  0.1667  & 0.0033 \\
		\hline
	\end{tabular}
\end{table}

\subsection{Classifying Trajectories 2: Bus versus Car}
\label{Classifying Trajectories 2: Bus versus Car}

As another example, we consider the GPS Trajectories Data Set ~\cite{GTDS2016} in UCI machine learning repository.
There are 87 car trajectories, and 76 bus trajectories in Aracaju, a city of Brazil. We remove those trajectories having less than 10 critical points, and then 78 car trajectories and 45 bus trajectories are left.
For these 123 trajectories
%we use Douglas-Peucker algorithm to convert each one to a piecewise-linear curve with 10 critical points,
are shown in Figure \ref{data_synthetic and real data 2}(Left), where pink curves are car trajectories and blue curves are bus trajectories. We hand-pick 10 points as $Q_1$ such that each point is close to one class of trajectories, and randomly generate 20 points as $Q_2$. Each time we randomly choose 23 car trajectories and 13 bus trajectories as test data, and use other trajectories as training data to perform classification experiments, and compute the error. We do this 1000 times and then compute the mean, median and variance of the error for each algorithm.
%To find the best parameters to minimized the error, for LCSS we tested $\varepsilon\in \{0.01,0.02,\cdots,0.1\}$, $\delta\in\{2,3,\cdots,10\}$, and for EDR we tested $\varepsilon\in \{0.01,0.02,\cdots,0.1\}$, and for LSH1$_Q$ and LSH2$_Q$ we tested $r \in \{0.01,0.02,\cdots,0.1\}$.  We only show the best results.

The results are shown in Table \ref	{table KNN and SVM real data2 Q1}, and we see the KNN classification results using all $14$ distance, using either $Q_1$ (10 chosen near data) or $Q_2$ (20 randomly chosen).  The results are slightly better for $Q_2$ in almost all distances $\dQ$, $\dQ^\pi$, LSH$1_Q$, and LSH$2_Q$ -- except $\dQk$.  In these experiments on $Q_2$, the best mean error (about $21\%$ to $22\%$) is achieved by $\dQk$, $\dQ$, and LSH$1_Q$ (which required a parameter search).  The best error is about $20\%$ by $\dQk$ using $Q_2$.
While $\dQ^\pi$, LCSS, EDR, and LSH$2_Q$ achieve error between $25\%$ and $27\%$.  Noticeably, the methods which were competitive with $\dQ$ and $\dQ^\pi$ on the Beijing Drivers data are EDR, which required a parameter tuned, as well as DTW and Eu, which now have error rate above $31\%$.  As a baseline, always predicting ``car" obtains $36\%$ error.

We show the results of applying SVM in Table \ref{table KNN and SVM real data2 Q1}. Again the difference is small between $Q_2$ and $Q_1$.  And while the linear and quadratic SVM do not perform that well; for the Gaussian kernel on $\dQ$ and $\dQ^\pi$ the mean error is only $16\%$ to $20\%$, and $19\%$ to $21\%$ for $\dQk$.  The overall best is ${\dQ}_1^\pi$ achieving a mean error of $16.59\%$.

%%%%%%%%%%%%%%%%%%%%%%%%%%%%%%%%%%%%%%%
\subsection{Classifying Trajectories 3: Landmark-Sensitivity}
\label{Classifying Trajectories 3: Landmark-Sensitivity}

To show the further advantage of $\dQ$ and $\dQ^\pi$, we create a synthetic data set that appears random, except one set of trajectories pass nearby a POI and the others do not.
We randomly generate two classes of trajectories on the map of Beijing, and each class has 30 trajectories. Each trajectory has 10 critical points, and all blue trajectories passes through some point close to the city center, and all pink trajectories do not.
We hand-pick a point at the Palace Museum, the center of the city, and randomly choose other 9 points to form the set $Q$.
As shown in Figure \ref{data_synthetic and real data 2}(Right), these trajectories are a mess and largely indistinguishable, except that the blue set passes near the landmark: Palace Museum.  We next show that $\dQ$ and $\dQ^\pi$ which are landmark-aware (e.g., POI-aware) have significantly more power in distinguishing these classes.

\begin{figure}[b]
	\centering
	\includegraphics[width=3.1cm]{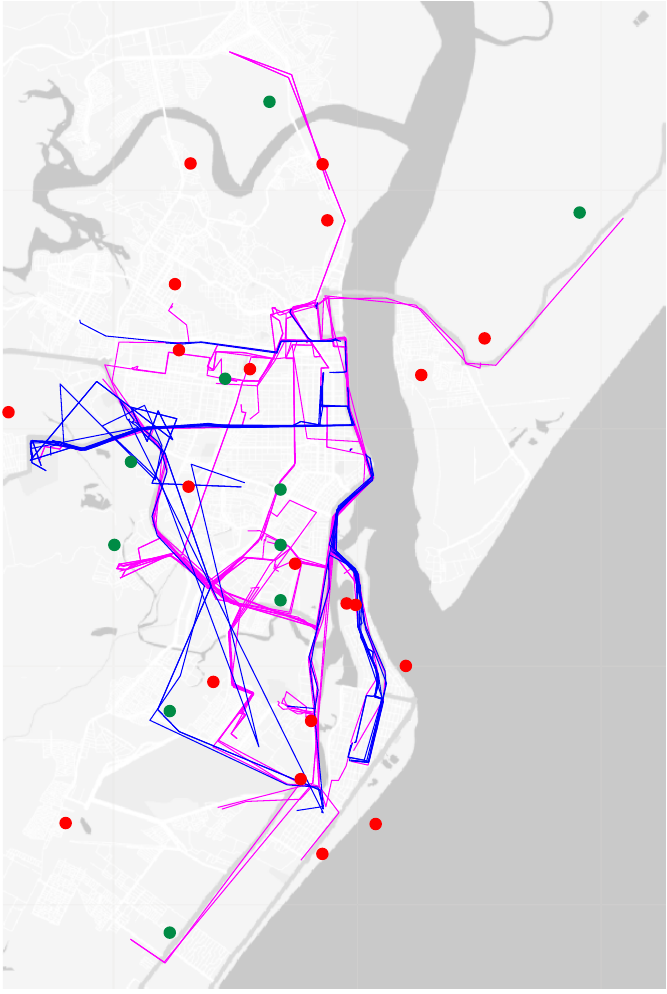}
	\hspace{0.02cm}
	\includegraphics[width=4.6cm]{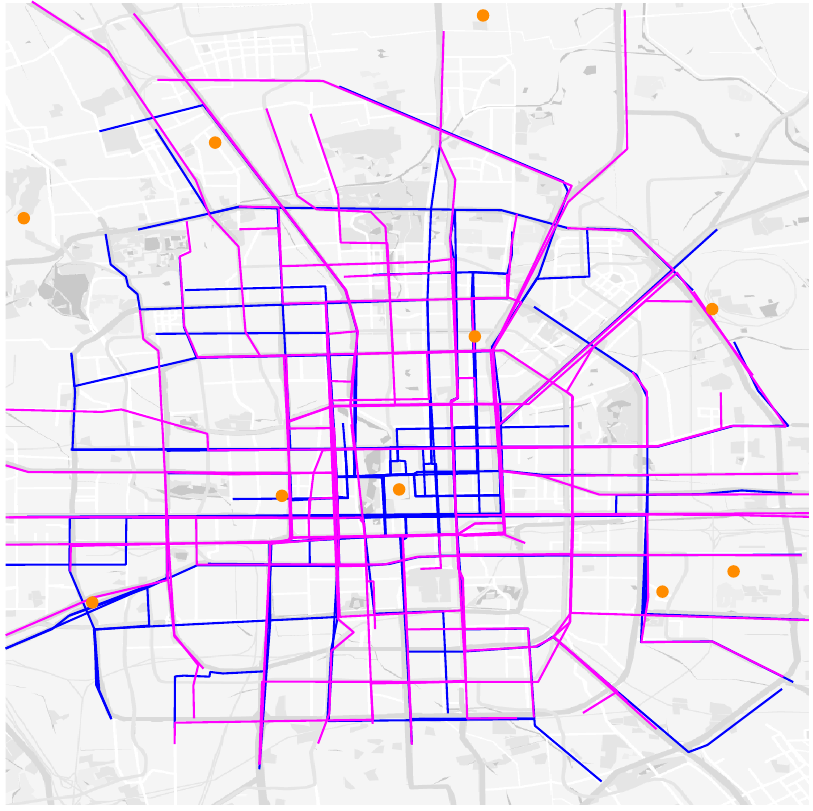}
	
	\vspace{-0.1in}
	\caption{\small Left: Bus (blue) and car (pink) trajectories with landmark sets $Q_1$ (green points), $Q_2$ (red points).
		Right: Two classes of trajectories and $Q$ (orange points).}
	\label{data_synthetic and real data 2}
	
\end{figure}

We randomly choose 21 trajectories from each class to form a training data set of size 42, and use the other trajectories as test data.  Each time, we record the error, and repeat this 1000 times to output the mean, median and variance of these errors.

\begin{table}[b]
	\vspace{-1mm}
	\center
	\caption{Landmark-sensitive classification error with KNN.}  \label{table KNN the synthetic_data}
	\vspace{-.1in}
	\begin{tabular}{r|ccc}
		\hline
		distance &  mean  &  median  & variance \\
		\hline
		Eu & 0.5226 & 0.5000 & 0.0100 \\
		dF & 0.5056 & 0.5000 & 0.0096 \\
		DTW & 0.4777 &0.5000 & 0.0107 \\
		dH & 0.4627 & 0.4444& 0.0105\\
		LCSS ($\varepsilon=0.001,\delta=8$) &0.3437  & 0.3333 & 0.0066 \\
		EDR($\varepsilon=0.02$) &  0.3916 &  0.3889  & 0.0068 \\
		LSH1$_Q$ (r=0.01) & 0.2524 & 0.2222 & 0.0098 \\
		LSH2$_Q$ (r=0.02) & 0.3248 & 0.3333 & 0.0084 \\
		\hline
		$\dQ$  & 0.4729 & 0.5000 & 0.0101 \\
		$\dQW$ ($w_1=0.3$)  & 0.4133 & 0.3889 & 0.0111 \\
		$\dQW$ ($w_1=0.6$)  & 0.2687 & 0.2778 & 0.0094\\
		$\dQW$ ($w_1=0.9$)  & 0.0592 & 0.0556 & 0.0037\\
		\hline
		$\dQ^\pi$  &  0.4385 & 0.4444 &  0.0092 \\
		$\dQW^\pi$ ($w_1=0.3$) & 0.3846 & 0.3889 & 0.0085 \\
		$\dQW^\pi$ ($w_1=0.6$) & 0.2396 & 0.2222  & 0.0065 \\
		$\dQW^\pi$ ($w_1=0.9$) & 0.1002 & 0.0556 & 0.0067 \\
		\hline
		$\dQk$  & 0.4711  & 0.4444 & 0.0106  \\
		$\dQWk$ ($w_1=0.3$) & 0.4468 & 0.4444 & 0.0113 \\
		$\dQWk$ ($w_1=0.6$) & 0.4377 & 0.4444 & 0.0112 \\
		$\dQWk$ ($w_1=0.9$) & 0.4466 & 0.4444 & 0.0100 \\
		\hline
	\end{tabular}
	\vspace{-2mm}
\end{table}

Table \ref{table KNN the synthetic_data} shows the KNN classification results.  Distances Eu and dF provide no advantage over a random classifier (which would report error $0.5$).  $\dQ^\pi$, $\dQ$, $\dQk$, DTW, and Hausdorff achieve only slight advantage over random classifiers, with error rates about $43\%$ to $48\%$, with the best achieved by $\dQ^\pi$.  This extends to the SVM approaches in Table \ref{table SVM_synthetic data}.  The best parameter free approach is $\dQ^\pi$ at $43.85\%$ error.
The parameterized distances LCSS, EDR, LSH$1_Q$, and LSH$2_Q$ perform better with error rates $25\%$ to $40\%$; but these can be sensitive to the parameter choices -- we only show the best results.
%, the rest are in Appendix \ref{SVM_linear_quadratic_synthetic_data}.

%and DTW provide little or no advantage over a random classifier (which would report error $0.5$).  \jeff{add more}  And $\dQ$ and $\dQ^\pi$ can achieve slightly better error.
%
%Table \ref{table SVM_synthetic data} shows that for SVM-based approaches, the advantages of $\dQ$ and $\dQ^\pi$ are now pronounced; intuitively, the SVM only needs to find a margin and support points which separate along the coordinate specific to $q_1$.  Table \ref{table SVM_synthetic data} shows that Eu performs no better than random, but $\dQ$ achieves error of $0.22$ for linear and $0.15$ for quadratic kernels.

\begin{table}[b]
	\center
	\vspace{0mm}
	\caption{Landmark-sensitive classification error with SVM.}  \label{table SVM_synthetic data}
	\vspace{-.1in}
	\begin{tabular}{cr|cccc}
		\hline
		kernel & statistics& $\dQk$ & $\dQ $ & $\dQ^\pi$ & Eu \\
		\hline
		\multirow{3}{*}{linear}    &mean     & 0.5000 & 0.4586 & 0.4941 & 0.5887 \\
		                           &median   & 0.5000 & 0.4444 & 0.5000 & 0.6111 \\
	  	                           &variance & 0.0095 & 0.0097 & 0.0099 & 0.0085 \\
		\hline
		\multirow{3}{*}{quadratic} &mean     & 0.5403 & 0.4617 & 0.5574 & 0.4795 \\
		                           &median   & 0.5556 & 0.4444 & 0.5556 & 0.5000 \\
		                           &variance & 0.0092 & 0.0094 & 0.0101 & 0.0112 \\
		\hline
		\multirow{3}{*}{Gaussian}  &mean     & 0.5059 & 0.4567 & 0.4556 & 0.5906 \\
		                           &median   & 0.5000 & 0.4444 & 0.4444 & 0.6111 \\
		                           &variance & 0.0092 & 0.0089 & 0.0099 & 0.0088 \\
		\hline
	\end{tabular}
\end{table}

\begin{table}[b]
	\vspace{-.1in}
	\center
	\caption{Landmark-sensitive classification error with weighted Gaussian SVM.}
	\label{table SVM synthetic data gaussian}
	\vspace{-.1in}
	\begin{tabular}{c|ccc}
		\hline
		metrics &  mean  &  median  & variance \\
		\hline
		$\dQW$ ($w_1=0.3$)  &  0.1487 & 0.1667    & 0.0065  \\
		$\dQW$ ($w_1=0.6$)  &  0.0303 & 0 & 0.0014 \\
		$\dQW$ ($w_1=0.9$)  &  \pmb{0.0159} & 0 &  0.0007 \\
		\hline
		$\dQW^\pi$ ($w_1=0.3$) & 0.2997  & 0.2778 & 0.0088  \\
		$\dQW^\pi$ ($w_1=0.6$) &  0.1053 & 0.1111 & 0.0049  \\
		$\dQW^\pi$ ($w_1=0.9$) & 0.0316  & 0      & 0.0015 \\
		\hline
		$\dQWk$ ($w_1=0.3$) & 0.4942 & 0.5000 & 0.0095 \\
		$\dQWk$ ($w_1=0.6$) & 0.4726 & 0.5000 & 0.0095 \\
		$\dQWk$ ($w_1=0.9$) & 0.4687 & 0.4444 & 0.0095 \\
		\hline
	\end{tabular}
\end{table}

Next we can consider re-weighting the importance of the landmarks $Q$, for instance in the case where one particular POI (in this case $q_1$) is known to have a specific meaning in the classification task (e.g., did someone stop by the sporting event, or a military point of interest).
Suppose $w_i>0$ is a weight of $q_i\in Q$, and $W=(w_1,w_2,...,w_n)$.
Then we can generalize the definitions to:
\begin{equation*} %\label{def of dq1 and dq2 with weight}
\small
\begin{split}
\dQW(\gamma^{(1)},\gamma^{(2)})=&\Big(\sum\nolimits_{i=1}^{n}w_i\big(d_i^{(1)}-d_i^{(2)}\big)^2\Big)^{\frac{1}{2}},\\
\dQW^\pi(\gamma^{(1)},\gamma^{(2)})=&\sum\nolimits_{i=1}^{n}w_i\big(\big\|p_i^{(1)}-p_i^{(2)}\big\|\big).
\end{split}
\normalsize
\end{equation*}
Let $w_1\in(0,1)$ be the weight of $q_1$, and $w_i=\frac{1}{9}(1-w_1)$  (for $2\leq i \leq 10$) be the weight of all other points in $Q$.
%For KNN, we choose $K=5$, and for SVM set 'IterationLimit' as 200000 for all kernel functions, and 'KernelScale' as 'auto' for gaussian kernel.

Now observe in Table \ref{table KNN the synthetic_data} that the landmark-based distance using a KNN classifier can achieve very low error ($6\%$ for $\dQW$ and $10\%$ for $\dQW^\pi$) as we gradually increase the weight of the point $q_1$ from $w_1 = 0.1$ (i.e., $\dQ$ or $\dQ^\pi$) to $w_1 = 0.9$ to emphasize a desired POI.
The result is even more pronounced for the Gaussian SVM, as shown in Table \ref{table SVM synthetic data gaussian}; similar plots are shown for linear and quadratic kernels in Appendix \ref{SVM_linear_quadratic_synthetic_data}.  As $w_1$ is increased from (uniform) $0.1$ to $0.9$, the mean error decreases from $45\%$ to $1.5\%$ for $\dQW$ and from $45\%$ to $3\%$ for $\dQW^\pi$.
Thus, while all other distances we tried are only slightly better than random unless their parameters are tuned, by emphasizing a particular POI (a very intuitive adjustment), we achieve almost no error in classifying these trajectories.

\subsection{Using $\dQ$ in Nearest Neighbor Search}
\label{using dQ in knn}
We demonstrate that $\dQ$'s sketched representation of the trajectories in $\mathbb{R}^{|Q|}$ allows for \emph{extremely efficient k-nearest neighbor search}.
We consider two representative methods~\cite{XLP2017,SLB2018} for comparison; but do all, e.g., \cite{YRWSX2016}) which require timing information.

As a first comparison, consider a recent heavily-optimized kNN search algorithm focusing on Hausdorff and dF distances~\cite{XLP2017}; this system, DFT, is optimized for distributed algorithms on a cluster, but show results on $1$ node which we compare against.
%Through personal communication,
We obtained a random sample of the GEN-TRAJ data set containing $m$ = 3 million trajectories, using 36GB of storage (larger than their 30.9GB dataset~\cite{XLP2017}).  From \emph{their} Figure 10, their indexes take 2000 to 6000 sec to build, and kNN queries require 50 to 200 seconds for $k=10$.

Another distributed system DITA~\cite{SLB2018} for trajectory similarity search focuses on DTW, returning all trajectories within a threshold.  In their~\cite{SLB2018} Figures 7(a) and 8(a), using 256 cores they achieve query time between $0.001$ and $0.01$ seconds on Beijing (10.4GB) and Chengdu (28GB) datasets.

To perform kNN queries using $\dQ$ we can sketch trajectories as $|Q|$-dimensional vectors and use Euclidean distance.  Hence, once we create the sketches, we can use any of the highly optimized packages for kNN Euclidean queries (c.f., \url{http://ann-benchmarks.com}); we choose a consistent top performer K-Graph (\url{https://github.com/aaalgo/kgraph}) with settings: recall=0.99 and max\_iteration=50.
We run on a desktop with a 6-core Intel Xeon CPU ES-1650 v3 @3.5GHz processor, and 128GB RAM; the same processor as in DFT~\cite{XLP2017}.

For experiments, we randomly choose a set of landmarks among the trajectories with $|Q| = \{12, 20, 28, 36, 44, 52\}$.
From these $Q$ we preprocess the data to derive $m \times |Q|$ sketches, a txt file we pass to K-Graph.  Then K-Graph builds an index, and allows queries.  The preprocessing time (to build sketch), sketch file size, time to build K-Graph's index, that index size, and the average query time are shown in Table \ref{table running time experiment}. For all these different values of $|Q|$, the K-Graph algorithm reaches recall=0.99 within 7 iterations.

The preprocessing and index building times take $38$ to $160$ seconds and $106$ to $129$ seconds, respectively.  By comparison, it takes $673$ seconds to load the raw data into memory.  Combined they are an order of magnitude faster than the index build time for Hausdorff in DFT~\cite{XLP2017}.  The sketch size is only $300$ to $1500$ MB, and the index sizes are $1000$ MB; reducing the size by $1$ or $2$ orders of magnitude from the original size.
Finally, the query times are only $0.00032$ to $0.00042$ seconds; that is $5$ orders of magnitude faster than the DFT index optimized for Hausdorff distance! and $1$ to $2$ orders of magnitude faster than DITA optimized for DTW and using 256 cores on smaller data.
Thus, using $\dQ$ (and existing libraries) allows for small data sketches, and extremely efficient kNN queries.

\begin{table}[t]
	\center
	\caption{The running time experiment of KNN search.}
	\label{table running time experiment}
	\vspace{-.1in}
	\begin{tabular}{r|C{0.4cm}C{0.4cm}C{0.4cm}ccc}
		\hline
		$|Q|$  & 12 & 20  & 28 & 36 & 44 & 52  \\
		\hline
		preprocessing time (s) &38  &62  & 88 &114  & 138& 160\\
		sketch size (MB) &337  & 560 &785  &1012  & 1331 & 1536 \\
		index time (s) & 106 &109 &114 & 119 &124 &129 \\
		index file size (MB) & 999 & 999 & 1005 & 1002 & 1007 & 1001 \\
		query time ($10^{-4}$s)& 4.2 & 3.7 & 4.2 & 3.2& 3.5 & 3.7 \\
		\hline
	\end{tabular}
\vspace{-4mm}
\end{table}

%%%%%%%%%%%%%%%%%%%%%%%%%%%%%%%%%%%%%%%%%%%%%%%%%%%%%%%%%%%%%%%%%%%%
%%%%%%%%%%%%%%%%%%%%%%%%%%%%%%%%%%%%%%%%%%%%%%%%%%%%%%%%%%%%%%%%%%%%
\vspace{-2mm}
\section{Conclusion and Discussion}
\label{sec:conclude}
\vspace{-2mm}
We introduce a new family of landmark-based distances $\dQ$, with applications to trajectories and hyperplanes (regressors, separators).  These have nice mathematical properties, e.g., being psuedo-metrics, metrics, and bounded VC-dimension metric balls.  On trajectories, new metrics $\dQ$ and $\dQ^\pi$ are the most general and best or competitive against all other distances in \emph{all} analysis tasks; see Table \ref{tbl:all}.
%are either the best or nearly best at each task.

\begin{table}[b!]
\caption{\label{tbl:all}
Distances on analysis tasks as: best $\bullet$, competitive {\color{gray} $\bullet$}, near competitive $\circ$; possible \checkmark or possible but slower {\tiny \checkmark}.}

\vspace{-2mm}
\begin{tabular}{r|ccccccccccc}
\hline
 task & \dQ & $\dQ^\pi$ & \dQk & \hspace{-1mm}Eu & \hspace{-1mm}dF & \hspace{-1mm}DTW\hspace{-1mm} & \hspace{-1mm}dH & \hspace{-2mm}LCSS\hspace{-2mm} & EDR &
 \hspace{-2mm}LSH$_Q$\hspace{-2mm}
\\  \hline
\hspace{-2mm}easy clust & \checkmark & \checkmark & \checkmark & \checkmark & - & - & - & - & - & -
\\
learn 1 & {\color{gray} $\bullet$} & $\bullet$  & -  & {\color{gray} $\bullet$} &  $\circ$  & {\color{gray} $\bullet$} & {\color{gray} $\bullet$} & {\color{gray} $\bullet$} & {\color{gray} $\bullet$} & -
\\
\hspace{-2mm}learn 2 & {\color{gray} $\bullet$} & $\bullet$  & $\circ$  & $\circ$ & -  & {\color{gray} $\bullet$} & - & $\circ$ & {\color{gray} $\bullet$} & $\circ$
\\
\hspace{-2mm}learn 3 & $\bullet$ & {\color{gray} $\bullet$}  & -  & - & -  & - & - & - & - & -
\\
fast NN & \checkmark & {\tiny \checkmark} & {\tiny \checkmark}  & {\tiny \checkmark} & - & {\tiny \checkmark} & - & - & - & {\tiny \checkmark}
\\ \hline
 any $k$ & \checkmark & \checkmark & - & -  & \checkmark & \checkmark & \checkmark & \checkmark & \checkmark & \checkmark
\\ \hline
\end{tabular}
\end{table}

The landmark set $Q$ can be randomly chosen and small, or its points can hold specific meaning in which case, the interpretation and discriminatory ability of the distances are greatly enhanced.
A companion paper~\cite{PT19} provides an in depth theoretical study of how many landmarks are required to preserve certain errors, how to chose them, and when curves can be explicitly recovered from them.  In the present paper, we simply empirically show that in most cases $20$ random landmarks are sufficient.

These provide meaningful \emph{vectorized representations}.
They are general and simple to compute and work with.  We believe many applications of these sorts of vectorized distances will be discovered.  And there are more mathematical questions to ask about the geometric and statistical power of these landmark-based distances.

%\vspace{-1mm}
\paragraph*{Software.}
Code for reproducing experiments
in Section \ref{sec:traj-analysis}
%\ref{Warm-up: k-means Clustering},
%\ref{Classifying Trajectories 1: Beijing Drivers},
%\ref{Classifying Trajectories 2: Bus versus Car} and
%\ref{Classifying Trajectories 3: Landmark-Sensitivity},
is available here
\href{https://drive.google.com/open?id=1Z_Na1nfioM_We8b1FnTU5UVuOYCjbP-j}
{https://drive.google.com/open?id=1Z\_Na1nfioM\_We8b1F}
\href{https://drive.google.com/open?id=1Z_Na1nfioM_We8b1FnTU5UVuOYCjbP-j}{nTU5UVuOYCjbP-j}

\vspace{-3mm}
\bibliographystyle{ACM-Reference-Format}
%\bibliography{distance}
%%% -*-BibTeX-*-
%%% Do NOT edit. File created by BibTeX with style
%%% ACM-Reference-Format-Journals [18-Jan-2012].

%%
%% If your work has an appendix, this is the place to put it.

%\vspace{-1mm}
\paragraph*{Appendix.}
A full version with appendix for extended proofs and more experimental details available here~\cite{PT19a}.

\clearpage

\appendix

\section{Metric Property for Unsigned Variant on the Distance}
\label{sec:metric}

Suppose $Q=\{q_1,q_2,\cdots,q_n\}\subset \R^2$, $\ell_1,\ell_2\in \c{L}=\{\ell \mid  \ell \text{ is a line in }$ $\R^2 \}$.
%There are two variants of $\dQ(\ell_1,\ell_2)$ in \eqref{definition of distance in L2 norm}.
%
%%%%%%%%%%%%%%%%%%%%%%%%%%%%%%%%%%%%%%%%%%%%%%%%%%%%%%%%%%%%%%%%%%%%%%
%\subsection{Absolute Value Variant}
Given $\ell\in \c{L}$, we write $\ell$ in the form as before and define
$\bar{v}_Q(\ell)=\left(\bar{v}_{Q_1}(\ell),\bar{v}_{Q_2}(\ell),\ldots,\bar{v}_{Q_n}(\ell)\right)$ where
$\bar{v}_{Q_i}(\ell)= |u_1 x_i+ u_2 y_i + u_3|$ and $(x_i,y_i)$ is the coordinates of $q_i\in Q$,
and then define the first variant of $\dQ$ as
\begin{small}
	\begin{equation} \label{first variant of dQ}
	\begin{split}
	&\ddQ(\ell_1,\ell_2):=\big\|\frac{1}{\sqrt{n}}(\bar{v}_Q(\ell_1)-\bar{v}_Q(\ell_2))\big\|\\
	=&\big(\sum_{i=1}^n\frac{1}{n}(\bar{v}_{Q_i}(\ell_1)-\bar{v}_{Q_i}(\ell_2))^2\big)^{\frac{1}{2}}.
	\end{split}
	\end{equation}
\end{small}
For \eqref{first variant of dQ}, we have the following theorem.

\begin{theorem}  \label{metric property variant1 of dQ}
	Suppose in $Q\subset \R^2$ there is a subset of five points, and any three points in this subset are non-collinear, then  $\ddQ$ is a metric in $\c{L}$.
\end{theorem}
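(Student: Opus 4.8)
The plan is to follow the same skeleton as the proof of Theorem~\ref{well-difined metric}: symmetry of $\ddQ$ is immediate from its definition, the triangle inequality follows from the Euclidean embedding $\ell \mapsto \frac{1}{\sqrt{n}}\bar{v}_Q(\ell)\in\R^n$, and $\ell_1=\ell_2$ clearly gives $\ddQ(\ell_1,\ell_2)=0$. Hence the entire content is the converse, that $\ddQ(\ell_1,\ell_2)=0$ implies $\ell_1=\ell_2$, i.e.\ that $\ell\mapsto\bar{v}_Q(\ell)$ is injective. I would write $\ell_1,\ell_2$ in the canonical form \eqref{the unique form of a line} with coefficient vectors $u^{(1)},u^{(2)}\in\mathbb{U}^3$ and set $g_j(x,y)=u_1^{(j)}x+u_2^{(j)}y+u_3^{(j)}$, an affine function on $\R^2$, so that $\bar{v}_{Q_i}(\ell_j)=|g_j(q_i)|$. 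From $\ddQ(\ell_1,\ell_2)=0$ we get $|g_1(q_i)|=|g_2(q_i)|$ for every $i$, hence for each $i$ either $(g_1-g_2)(q_i)=0$ or $(g_1+g_2)(q_i)=0$; that is, every $q_i$ lies on the zero set of the affine function $g_1-g_2$ or on the zero set of the affine function $g_1+g_2$.

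Next I would bring in the five-point hypothesis. Assign each of the five special points to one of the two zero sets (breaking ties arbitrarily for a point that happens to lie on both). By pigeonhole, one of the two zero sets, say $Z(g_1-g_2)$, contains at least three of the five special points, and by hypothesis these three are non-collinear. But the zero set of an affine function on $\R^2$ is either a line (when its linear part is nonzero), the empty set (when it is a nonzero constant), or all of $\R^2$ (when the function vanishes identically), and only the last possibility can contain three non-collinear points. Therefore $g_1-g_2\equiv0$ (or, in the symmetric case, $g_1+g_2\equiv0$).

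Finally, since the three coefficients of an affine function on $\R^2$ are uniquely determined, $g_1-g_2\equiv0$ forces $u^{(1)}=u^{(2)}$ and hence $\ell_1=\ell_2$. In the remaining case $g_1+g_2\equiv0$ we get $u^{(1)}=-u^{(2)}$; but $u^{(1)}\in\mathbb{U}^3$ has positive first nonzero entry while $-u^{(2)}$ has negative first nonzero entry (note $u^{(2)}\neq\mathbf 0$ since $(u_1^{(2)})^2+(u_2^{(2)})^2=1$), a contradiction, so this case cannot occur. This yields $\ell_1=\ell_2$ and completes the argument.

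The step I expect to be the crux is the pigeonhole combined with non-collinearity, and in particular explaining why \emph{five} points are required rather than the three of Theorem~\ref{well-difined metric}: splitting only three points between two zero sets guarantees just two points in one of them, which an honest line accommodates, so the argument would collapse; five is the smallest count forcing three points into a single zero set. The minor wrinkles to handle with care are that the assignment of points to the two zero sets is not canonical (a point with $g_1(q_i)=g_2(q_i)=0$ lies on both), and the degenerate subcase in which $g_1\pm g_2$ is a nonzero constant so its zero set is empty; both are dispatched by the ``assign arbitrarily, then pigeonhole'' phrasing above.
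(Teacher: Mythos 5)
Your proof is correct and follows essentially the same route as the paper's: the locus of points equidistant from $\ell_1$ and $\ell_2$ is contained in the union of two lines (the paper phrases these as the angle bisectors, you as the zero sets of $g_1\pm g_2$), and pigeonholing the five special points forces three non-collinear points onto one of them. Your algebraic formulation is in fact slightly more careful, since it transparently handles the parallel case (where one of the two zero sets is empty and the ``two bisectors'' of the paper do not literally exist) and pins down via the $\mathbb{U}^3$ normalization why $g_1+g_2\equiv 0$ is impossible.
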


\begin{proof}
	We only need to show if $\ddQ(\ell_1,\ell_2)=0$, then $\ell_1=\ell_2$.
	Suppose $\widetilde{Q}=\{q_1,\cdots, q_5\}\subset Q$, and any three points in $\widetilde{Q}$ are not on the same line.
	If $\ell_1 \neq \ell_2$, then let $\ell_1'$ and $\ell_2'$ be the two bisectors of the angles formed by $\ell_1$ and $\ell_2$. From $\ddQ(\ell_1,\ell_2)=0$, we know $\bar{v}_{Q_i}(\ell_1)=\bar{v}_{Q_i}(\ell_2)$ for $i\in[5]$, which means the distances from $q_i\in \widetilde{Q}$ to $\ell_1$ and to $\ell_2$ are equal. So, any point $q_i\in \widetilde{Q}$ must be either on  $\ell_1'$ or on $\ell_2'$, which implies there must be three collinear points in $\widetilde{Q}$.
	This is contradictory to the fact that any three points in $\widetilde{Q}$ are not on the same line.
\end{proof}

%%%%%%%%%%%%%%%%%%%%%%%%%%%%%%%%%%%
\paragraph*{Remark.}
Definition \eqref{first variant of dQ} can be generalized to hyperplanes in $\mathbb{R}^d$:
\begin{small}
	\begin{equation} \label{first variant of dQ in R^d}
	\ddQ(h_1,h_2):=\big(\sum_{i=1}^n\frac{1}{n}(\bar{v}_{Q_i}(h_1)-\bar{v}_{Q_i}(h_2))^2\big)^{\frac{1}{2}},
	\end{equation}
\end{small}
where $h_1,h_2 \in \c{H}=\{h\ |\  h \text{ is a hyperplane  in } \R^d \}$, and $\bar{v}_{Q_i}(h_j)$ is the distance from point $q_i$ in $Q\subset \mathbb{R}^d$ to $h_j$ ($j=1,2$). Using the similar method, we can show if there is a subset of $2d+1$ points
in $Q$ and any $d+1$ points in this subset are not on the same hyperplane, then \eqref{first variant of dQ in R^d} is a metric in $\c{H}$.

%%%%%%%%%%%%%%%%%%%%%%%%%%%%%%%%%%%%%%%%%%%%%%%%%%%%%%%%%%%%%%%%%%%%%
\subsection{Matrix Norm Variant}
In another variant of $\dQ$ we define $\tilde{v}_{Q_i}(\ell)$ as a \emph{vector} from $q_i$ to the closest point on $\ell$.
More specifically, suppose $\ell$ is in the same form as before, then the projection of point $q_i=(x_i,y_i)$ on $\ell$ is $(\tilde{x}_i,\tilde{y}_i)=(x_i\cos^2(\alpha)-y_i \sin(\alpha)\cos(\alpha)-c\sin(\alpha),
-x_i\cos(\alpha)\sin(\alpha)+y_i \sin^2(\alpha)-c\cos(\alpha))$, and we define $\tilde{v}_{Q_i}(\ell)=(\tilde{x}_i-x_i,\tilde{y}_i-y_i)$ for $(x_i,y_i)\in Q$, and an $n \times 2$ matrix $V_{Q,l}=[\tilde{v}_{Q_1}(\ell);\cdots;\tilde{v}_{Q_n}(\ell)]$ where $\tilde{v}_{Q_i}(\ell)$ is the $i$th row of $V_{Q,l}$. For $\ell_1,\ell_2\in \c{L}$ we define the distance between these two lines as
\begin{small}
	\begin{equation} \label{second variant of dQ}
	\DQ(\ell_1,\ell_2):=\|V_{Q, \ell_1} - V_{Q, \ell_2}\|_F,
	\end{equation}
\end{small}
where $\|\cdot\|_F$ is the Frobenius norm of matrices.
For \eqref{second variant of dQ}, we have the following theorem.

\begin{theorem}  \label{metric property variant2 of dQ}
	Suppose in $Q\subset \R^2$ there are two different points $q_1$ and $q_2$, then $\DQ$ is a metric in $\c{L}$.
\end{theorem}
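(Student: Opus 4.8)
The plan is to mimic the proofs of Theorems~\ref{well-difined metric} and~\ref{well-difined metric for hyperplanes}. Because $\DQ(\ell_1,\ell_2)=\|V_{Q,\ell_1}-V_{Q,\ell_2}\|_F$, stacking the rows of each $n\times 2$ matrix $V_{Q,\ell}$ into a single vector in $\R^{2n}$ realizes $\DQ$ as the ordinary Euclidean distance between the images of $\ell_1$ and $\ell_2$ under the map $\ell\mapsto V_{Q,\ell}$. Consequently $\DQ$ is symmetric, obeys the triangle inequality, and equals $0$ whenever $\ell_1=\ell_2$, all for free. The only nontrivial point is the converse: under the hypothesis that $Q$ contains two distinct points $q_1\ne q_2$, I would show $\DQ(\ell_1,\ell_2)=0$ implies $\ell_1=\ell_2$, i.e.\ that $\ell\mapsto V_{Q,\ell}$ is injective.

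To this end, assume $\DQ(\ell_1,\ell_2)=0$, so that $\tilde{v}_{Q_i}(\ell_1)=\tilde{v}_{Q_i}(\ell_2)$ for all $i\in[n]$, and in particular for $i=1,2$. The key geometric observation I would record first is that $\tilde{v}_{Q_i}(\ell)$ is the displacement from $q_i$ to its orthogonal projection onto $\ell$; hence $p_i:=q_i+\tilde{v}_{Q_i}(\ell_1)=q_i+\tilde{v}_{Q_i}(\ell_2)$ is a well-defined point lying on both $\ell_1$ and $\ell_2$, and moreover $\tilde{v}_{Q_i}(\ell_j)$ is orthogonal to $\ell_j$ whenever it is nonzero (equivalently, $\tilde{v}_{Q_i}(\ell_j)=0$ exactly when $q_i\in\ell_j$). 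Then I would split into two cases. If $p_1\ne p_2$, the lines $\ell_1$ and $\ell_2$ share two distinct points and therefore coincide. If $p_1=p_2=:p$, then since $q_1\ne q_2$ not both of $q_1,q_2$ can equal $p$, so without loss of generality $q_1\ne p$; thus $q_1-p=\tilde{v}_{Q_1}(\ell_1)=\tilde{v}_{Q_1}(\ell_2)$ is a common nonzero normal vector to both lines, and since both pass through $p$ and a line in $\R^2$ is determined by a point on it together with a normal direction, $\ell_1=\ell_2$.

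The computation is essentially trivial, so the only real care goes into the case $p_1=p_2$: this is the single degenerate configuration (the two lines crossing exactly at the shared projection point of $q_1$ and $q_2$), and it is precisely here that the hypothesis $q_1\ne q_2$ is needed, to guarantee that one of the displacement vectors $\tilde{v}_{Q_1}(\cdot),\tilde{v}_{Q_2}(\cdot)$ is nonzero and hence fixes the common normal direction. I would also note in passing that $V_{Q,\ell}$, being built from orthogonal-projection displacements, does not depend on the sign normalization of the representation~\eqref{the unique form of a line}, so unlike in Theorem~\ref{well-difined metric} no canonical choice of normal direction needs to be invoked.
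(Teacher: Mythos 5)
Your proof is correct and follows essentially the same route as the paper's: both reduce to showing injectivity of $\ell\mapsto V_{Q,\ell}$ and then use the two facts that two distinct shared points determine a line and that a shared point together with a shared nonzero normal direction determines a line, the only difference being that you split cases on $p_1=p_2$ versus $p_1\neq p_2$ where the paper splits on whether both displacement vectors vanish. (One minor slip: with the paper's convention $\tilde{v}_{Q_1}(\ell_1)=p-q_1$, not $q_1-p$, but this sign does not affect the argument.)
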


\begin{proof}
	We only need to show if $\DQ(\ell_1,\ell_2)=0$, then $\ell_1=\ell_2$.
	There are two cases.
	
	(1) $\tilde{v}_{Q_1}(\ell_1)=(0,0)$ and $\tilde{v}_{Q_2}(\ell_1)=(0,0)$.
	From $\DQ(\ell_1,\ell_2)=0$ we know $\tilde{v}_{Q_1}(\ell_2)=0$ and $\tilde{v}_{Q_2}(\ell_2)=0$, which means
	$q_1$ and $q_2$ are on both $\ell_1$ and $\ell_2$, so $\ell_1=\ell_2$.
	
	(2) $\tilde{v}_{Q_1}(\ell_1)\neq(0,0)$  or $\tilde{v}_{Q_2}(\ell_1)\neq (0,0)$. In this case, without loss of generality
	we assume  $\tilde{v}_{Q_1}(\ell_1)\neq (0,0)$. From $\DQ(\ell_1,\ell_2)=0$ we have $\tilde{v}_{Q_1}(\ell_2)=\tilde{v}_{Q_1}(\ell_1)\neq (0,0)$, so introducing the notation $(\tilde{x}_i-x_i,\tilde{y}_i-y_i)=\tilde{v}_{Q_1}(\ell_1)$, we know $(\tilde{x}_i, \tilde{y}_i)$ is on
	$\ell_1$ and $\ell_2$, and $\tilde{v}_{Q_1}(\ell_1)$ is the normal direction of $\ell_1$ and $\ell_2$. Since a point and a normal direction can uniquely determine a line, we have $\ell_1=\ell_2$.
\end{proof}

%%%%%%%%%%%%%%%%%%%%%%%%%%%%%%%%%%%
\paragraph*{Remark.}
Definition \eqref{second variant of dQ} can be generalized to hyperplanes in $\mathbb{R}^d$:
\begin{small}
	\begin{equation} \label{second variant of dQ in R^d}
	\DQ(h_1,h_2):=\|V_{Q, h_1} - V_{Q, h_2}\|_F,
	\end{equation}
\end{small}
where $h_1,h_2 \in \c{H}=\{h\ |\  h \text{ is a hyperplane  in } \R^d \}$,
and $V_{Q, h_j}$ ($j=1,2$) is an $n\times d$ matrix with each row being a projection vector from a point in $Q$ to $h_j$. Using the similar method, we can show if there are $d$ different points in $Q$, then \eqref{second variant of dQ in R^d} is a metric in $\c{H}$.

%%%%%%%%%%%%%%%%%%%%%%%%%%%%%%%%%%%%%%%%%%%%%%%%%%%%%%%%%%%%%%%%%%%
\section{Metric Properties for $\dQ$ on Trajectories}
\label{metric peroperties for dQ1 and dQ2}

In this section, we prove Theorem \ref{well-difined metric for dQ1 and dQ2} for $\dQ$. We first introduce the following lemma.

\begin{lemma}  \label{lemma metric property dQ}
	As shown in Figure \ref{fig lemma metric property dQ}, suppose the line $\ell_2$ passes through $q_1$ and $c$, $\ell_1$ is perpendicular to $\ell_2$ at $q_1$, and $c$ is on the right side of $\ell_1$.	If $q_2$ is outside the circle $C(q_1,\|q_1-c\|)$, on the left side of $\ell_2$ and above $\ell_2$ (the yellow-shaded region)), and $q_3$ is outside the circle $C(q_1,\|q_1-c\|)$, on the left side of $\ell_2$ and below $\ell_2$ (the pink-shaded region), then we have $B(q_1,\|q_1-c\|)\subset B(q_2,\|q_2-c\|)\cup B(q_2,\|q_2-c\|)$.
\end{lemma}

\begin{figure}[t] %[htbp]
	\includegraphics[width=0.43\linewidth]{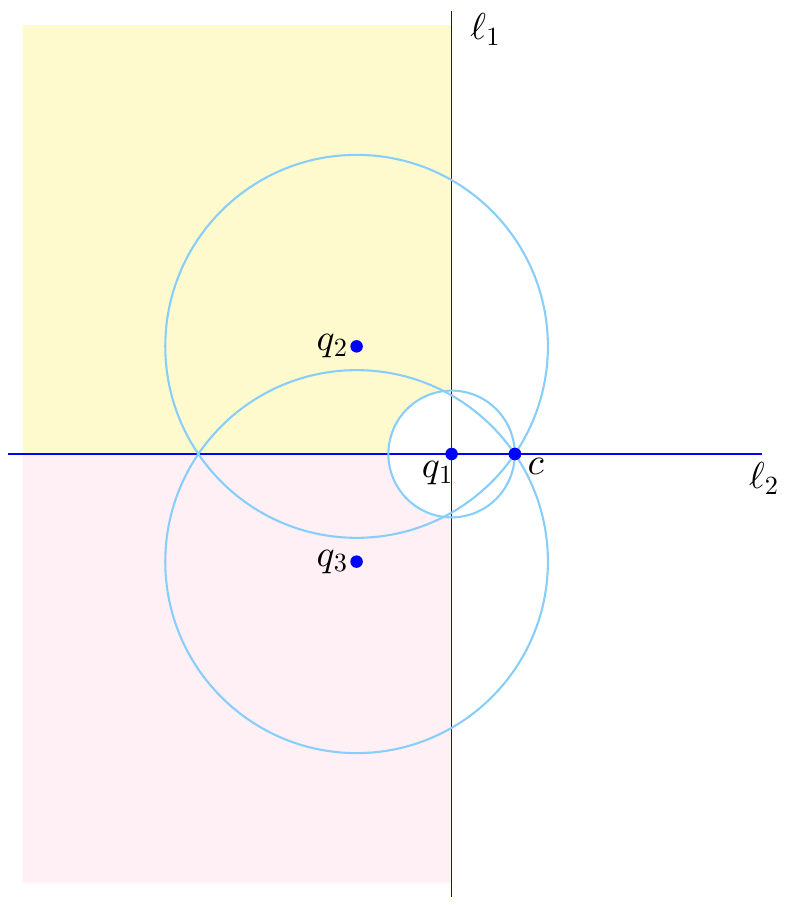}
	\hspace{0.02cm}
    \includegraphics[width=0.49\linewidth]{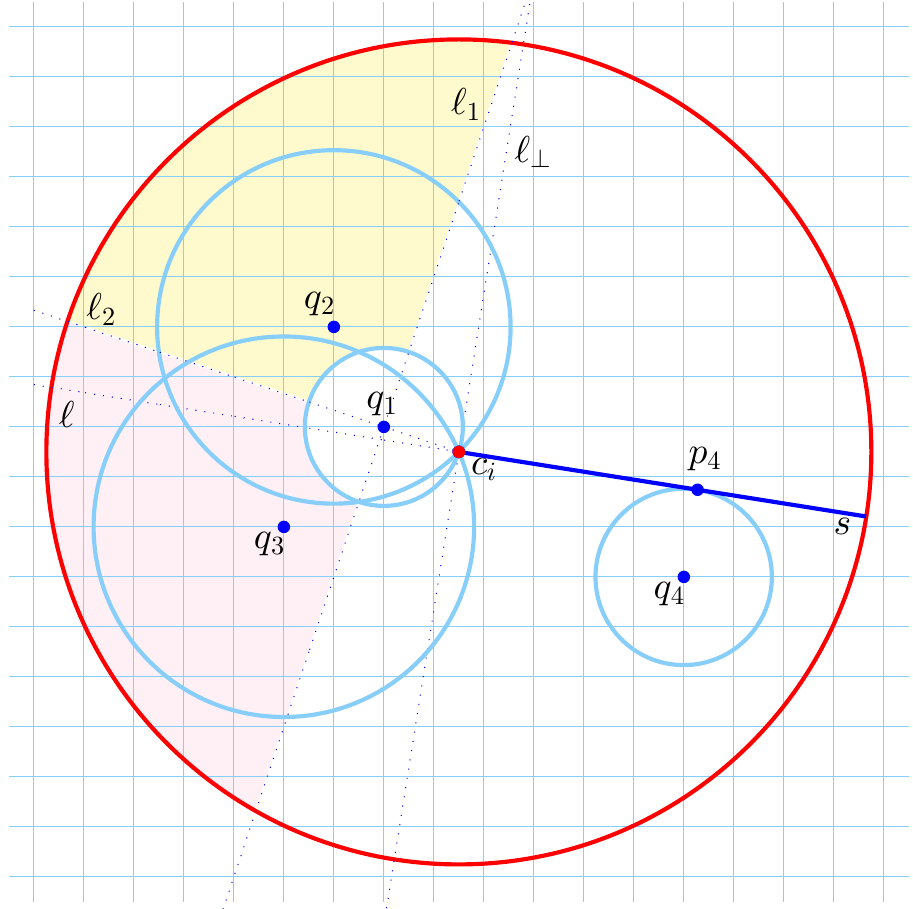}
	\caption{ Left: $\ell_1\perp \ell_2$ and $B(q_1,\|q_1-c\|)\subset B(q_2,\|q_2-c\|) \cup B(q_3,\|q_3-c\|)$.
              Right: $c_i$ is a critical point of $\gamma^{(1)}$ and $B(q_1,\|q_1-c_i\|)\subset B(q_2,\|q_2-c_i\|) \cup B(q_3,\|q_3-c_i\|)$.  } \label{fig lemma metric property dQ}
\end{figure}

\begin{figure}[t] %[htbp]
	\includegraphics[width=0.49\linewidth]{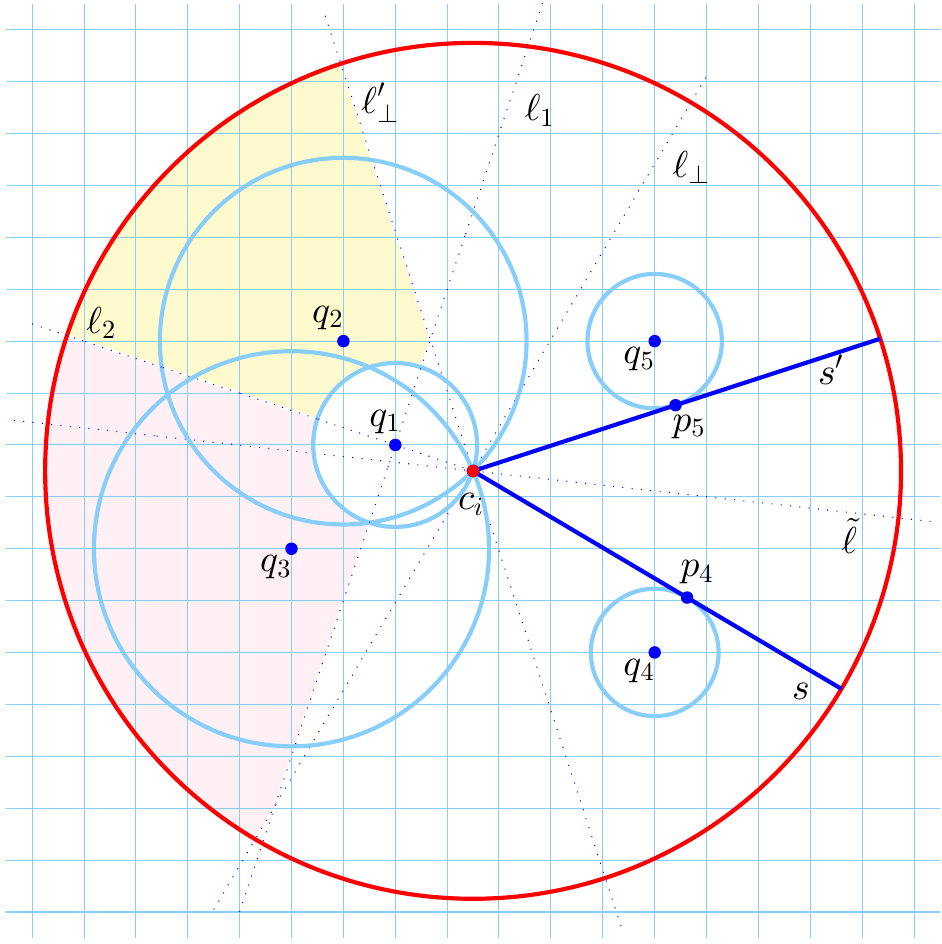}
	\hspace{0.02cm}
    \includegraphics[width=0.49\linewidth]{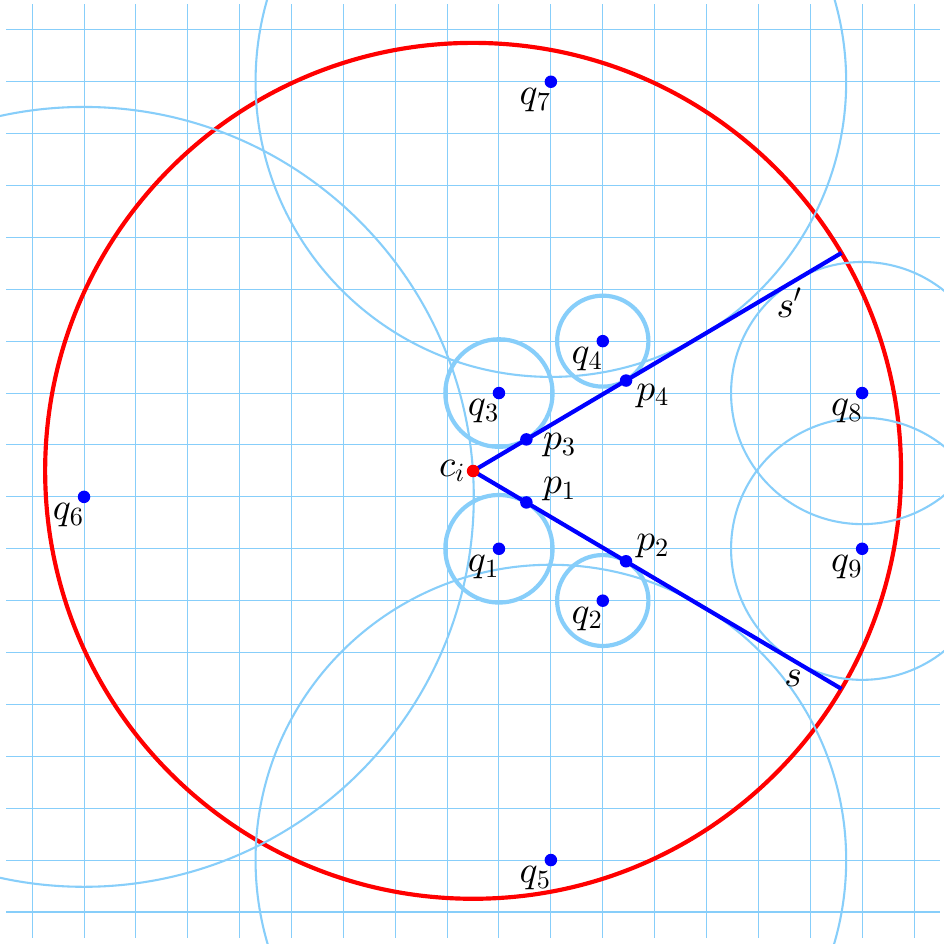}
	\caption{Left: $c_i$ is a critical point of $\gamma^{(1)}$ and $B(q_1,\|q_1-c_i\|)\subset$ $B(q_2,\|q_2-c_i\|)\cup$$B(q_3,\|q_3-c_i\|)$.
             Right: $B(q_1,\|q_1-p_1\|)$, $B(q_2,\|q_2-p_2\|)$ are tangent to $s$, and $B(q_3,\|q_3-p_3\|)$, $B(q_4,\|q_4-p_4\|)$ are tangent to $s'$. For each one of these four circles, any tangent line segment, except $s$, $s'$ cannot be extended outside $B(c_i,\frac{\tau}{2})$ without intersecting with any other circle.} \label{fig metric property dQ}
\end{figure}

\begin{proof}
	We use $q_1$ as the origin, $\ell_2$ as the $x$-axis and $\ell_1$ as the $y$-axis to build a coordinate system, and assume the coordinates of $c$, $q_2$ and $q_3$ are $(r,0)$, $(x_2,y_2)$ and $(x_3,y_3)$ respectively.
	So, we have $x_2^2+y_2^2> r^2$, $x_2^2+y_2^2> r^2$ and $x_2,x_3<0$, $y_2>0$ and $y_3<0$. Our goal is to prove if $x^2+y^2\leq r^2$ then either
	\begin{equation} \label{the case y>0}
	(x-x_2)^2+(y-y_2)^2\leq (x_2-r)^2+y_2^2,
	\end{equation}
	or
	\begin{equation}\label{the case y<0}
	(x-x_3)^2+(y-y_3)^2\leq (x_3-r)^2+y_3^2.
	\end{equation}
	
	If $y\geq0$, then from $x\leq r, x_2<0, y_2>0$ we have $(r-x)x_2\leq y y_2$, which is equivalent to $-2xx_2-2yy_2\leq -2rx_2$.
	Since  $x^2+y^2\leq r^2$, we obtain $x^2-2xx_2+y^2-2yy_2\leq -2rx_2+r^2$, which implies \eqref{the case y>0} is true.
	Similarly, if $y\leq 0$ then we can show \eqref{the case y<0} is true. Thus, the proof is completed.
	
\end{proof}

Now, we can give the proof of Theorem \ref{well-difined metric for dQ1 and dQ2} for $\dQ$.

\begin{proof}
	
	Suppose $\dQ(\gamma^{(1)},\gamma^{(2)})=0$, we only need to prove $\gamma^{(1)}=\gamma^{(2)}$.
    We draw a ball $B(c_i,\frac{1}{2}\tau)$ at a critical point $c_i$ ($1\leq i \leq k-1$) of $\gamma^{(1)}$.
    There are three possibilities.

    Case 1. As shown in Figure \ref{fig lemma metric property dQ}(Right), $c_i$ is an endpoint of $\gamma^{(1)}$, and $B(c_i,\frac{\tau}{2})$ contains one line segment $s$ of $\gamma^{(1)}$. In this case, we assume $s$ is part of line $\ell$, and draw a line $\ell_\perp$  through $c_i$ which is perpendicular to $\ell$. Then, we choose a point $q_1$ from $Q\cap B(c_i,\frac{\tau}{2})$, which is on the left side of $\ell_\perp$, close to $\ell$ and satisfies $\|q_1-c\|<2\eta$. Suppose $\ell_2$ is the line through $q_1$ and $c_i$, and $\ell_1$ is perpendicular to $\ell_2$ at $q_1$. We choose a point $q_2\in Q\cap B(c_i,\frac{\tau}{2})$ from the region that is outside $B(q_1,\|q_1-c_i\|)$, on the left side of $\ell_1$ and $\ell_\perp$, and above $\ell_2$ (the yellow-shaded region), and choose a point $q_3\in Q\cap B(c_i,\frac{\tau}{2})$ from the region that is outside $B(q_1,\|q_1-c_i\|)$, on the left side of $\ell_1$ and $\ell_\perp$, and below $\ell_2$ (the pink-shaded region). Obviously, $\{c_i\}=C(q_1,\|q_1-c_i\|) \cap C(q_2,\|q_2-c_i\|) \cap C(q_3,\|q_3-c_i\|)$, and from Lemma \ref{lemma metric property dQ}, we know $B(q_1,\|q_1-c_i\|) \subset B(q_2,\|q_2-c_i\|) \cup B(q_3,\|q_3-c_i\|)$. So, $c_i$ must be on $\gamma^{(2)}$. Since the tangent line of $C(q_1,\|q_1-c_i\|)$ at $c_i$ goes into the interior of $B(q_2,\|q_2-c_i\|)$ and $B(q_3,\|q_3-c_i\|)$, from $(O1)$ we know $c_i$
    must be a critical point of $\gamma^{(2)}$. There also exists $q_4\in B(c_i,\frac{\tau}{2})$ and $p_4\in s$ such that $B(q_4,\|q_4-p_4\|)$ is tangent to $s$ at point $p_4$. From (O1) and (O2) we know the tangent line segment of $C(q_4,\|q_4-p_4\|)$ through $c_i$ must be a part of $\gamma^{(2)}$, and this tangent line segment must be $s$ because the other tangent line segment through $c_i$ intersects with other circles. Thus, $s$ is a part of $\gamma^{(2)}$.

    Case 2. As shown in Figure \ref{fig metric property dQ}(Left), $c_i$ is an internal of $\gamma^{(1)}$, $B(c_i,\frac{\tau}{2})$ contains two line segments $s$, $s'$ of $\gamma^{(1)}$, and the angle between  $s$, $s'$ is at most $\frac{\pi}{4}$.
    In this case, we assume $\tilde{\ell}$ is the line bisecting the angle formed by $s$ and $s'$, and draw two lines $\ell_\perp$ and $\ell'_\perp$ which is perpendicular to $s$ and $s'$ at $c_i$ respectively. Then, we choose a point $q_1$ from $Q\cap B(c_i,\frac{\tau}{2})$, which is on the left side of $\ell_\perp$ and $\ell'_\perp$, close to $\tilde{\ell}$ and satisfies $\|q_1-c\|<2\eta$. Suppose $\ell_2$ is the line through $q_1$ and $c_i$, and $\ell_1$ is perpendicular to $\ell_2$ at $q_1$.  We choose a point $q_2\in Q\cap B(c_i,\frac{\tau}{2})$ from the region that is outside $B(q_1,\|q_1-c_i\|)$, on the left side of $\ell_1$, $\ell_\perp$ and $\ell'_\perp$ and above $\ell_2$ (the yellow-shaded region), and choose a point $q_3\in Q\cap B(c_i,\frac{\tau}{2})$ from the region that is outside $B(q_1,\|q_1-c_i\|)$, on the left side of $\ell_1$, $\ell_\perp$ and $\ell'_\perp$ and below $\ell_2$ (the pink-shaded region). Obviously, $\{c_i\}=C(q_1,\|q_1-c_i\|) \cap C(q_2,\|q_2-c_i\|) \cap C(q_3,\|q_3-c_i\|)$, and from Lemma \ref{lemma metric property dQ}, we know $B(q_1,\|q_1-c_i\|) \subset B(q_2,\|q_2-c_i\|) \cup B(q_3,\|q_3-c_i\|)$. So, $c_i$ must be on $\gamma^{(2)}$. Since the tangent line of $C(q_1,\|q_1-c_i\|)$ at $c_i$ goes into the interior of $B(q_2,\|q_2-c_i\|)$ and $B(q_3,\|q_3-c_i\|)$, from $(O1)$ we know $c_i$ must be a critical point of $\gamma^{(2)}$. There also exists $q_4,q_5\in B(c_i,\frac{\tau}{2})$ and $p_4\in s$, $p_5\in s'$ such that $B(q_4,\|q_4-p_4\|)$ is tangent to $s$ at point $p_4$, and $B(q_5,\|q_5-p_5\|)$ is tangent to $s'$ at point $p_5$.
    Using the similar argument in Case 1,we can show $s$ and $s'$ both belong to $\gamma^{(2)}$.

    Case 3. As shown in Figure \ref{fig metric property dQ}(Right), $c_i$ is an internal of $\gamma^{(1)}$, $B(c_i,\frac{\tau}{2})$ contains two line segments $s$, $s'$ of $\gamma^{(1)}$, and the angle between  $s$, $s'$ is greater than $\frac{\pi}{4}$. In this case, we choose four points $q_1,q_2,q_3,q_4$ from $Q\cap B(c_i,\frac{\tau}{2})$ such that the circles with center $q_1,q_2$ are tangent to $s$  at $p_1,p_2$, and the circles with center $q_3,q_4$ are tangent to $s'$ at $p_3,p_4$. Moreover, we can require $\|q_j-c_j\|\leq \eta$ for $1\leq j \leq 4 $ and these four circles do not intersect with each other.  Then, we can choose three points $q_5,q_6,q_7$ outside the angle region formed by $s$ and $s'$, and two points $q_8,q_9$ inside this angle region. Using $C_{j'}$ ($5\leq j'\leq 9$) to represent the circles corresponding to these five points, we can choose these points close to the boundary of $B(c_i,\frac{\tau}{2})$, and require $C_6$ contains $c_i$, $C_5,C_9$ are tangent to $s$, $C_7,C_8$ are tangent to $s'$, and $C_5\cap C_6\neq \emptyset$, $C_6\cap C_7\neq \emptyset$, and $C_8\cap C_9\neq \emptyset$. Thus, any tangent line segment of $C(q_j,\|q_j-p_j\|)$ ($1\leq j \leq 4$), except $s,s'$, can not be extended outside $B(c_i,\frac{\tau}{2})$ without intersecting with $\cup_{5\leq j'\leq 9}C_{j'}$. From (O1) and (O2) we know $\gamma^{(2)}$ must be tangent to $C(q_1,\|q_1-p_1\|)$ or
    $C(q_2,\|q_2-p_2\|)$, and without loss of generality we assume a tangent line segment of $C(q_1,\|q_1-p_1\|)$ is a part of $\gamma^{(2)}$.
    Since (O2), (O3) imply this tangent line segment must be extended outside $B(c_i,\frac{\tau}{2})$ without going into the interior of any other circle, we know
    $s\cap B(q_1,\delta)$ is a part of $\gamma^{(2)}$ for some $\delta>0$. Similarly, we have $s\cap B(q_3,\delta)$ is a part of $\gamma^{(2)}$ for some $\delta>0$. Since there is at most one critical point of $\gamma^{(2)}$ in $B(c_i,\frac{\tau}{2})$, from (O3) we know $c_i$ must be a critical point of $\gamma^{(2)}$. Thus, $s$ and $s'$ both belong to $\gamma^{(2)}$.

	From the discussion of above three cases, we know $\gamma^{(2)}$ overlaps with $\gamma^{(1)}$ in the ball $B(c_i,\frac{\tau}{2})$,
    and a similar argument leads to $\gamma^{(1)}=\gamma^{(2)}$.
	
\end{proof}

%%%%%%%%%%%%%%%%%%%%%%%%%%%%%%%%%%%%%%%%%%%%%%%%%%%%%%%%%%%%%%%%%%%
%%%%%%%%%%%%%%%%%%%%%%%%%%%%%%%%%%%%%%%%%%%%%%%%%%%%%%%%%%%%%%%%%%%
\section{Common Distance Measurements for Trajectories}
\label{def of distances between traojectories}

In this section, we briefly introduce the definition of Euclidian distance, discrete Frechet distance and dynamic time warping distance. Suppose $\gamma^{(1)}$ and $\gamma^{(2)}$ are two trajectories in $\R^2$ with critical points $c_0^{(1)}, c_1^{(1)},...c_{k_1}^{(1)}$ and $c_0^{(2)}, c_1^{(2)},...c_{k_2}^{(2)}$ respectively.

\paragraph*{Euclidean Distance.} It requires $k_1=k_2$ and takes the average Euclidean distance between corresponding critical points.
\begin{equation*}
\text{Eu}(\gamma^{(1)},\gamma^{(2)})=\frac{1}{k_1}\sum\nolimits_{i=0}^{k_1}\big\|c_i^{(1)}-c_i^{(2)}\big\|.
\end{equation*}

\paragraph*{Discrete Frechet Distance.} It measures the similarity between two piecewise-linear curves by taking into account their location and time ordering. Here, we introduce its definition in ~\cite{TEHM1994}.
Suppose $\mathcal{A}=\{a_0,a_1,\cdots,a_m\}\subset\{0,1,\cdots,k_1\}$,  $\mathcal{B}=\{b_0,b_1,\dots,b_m\}\subset\{0,1,\cdots,k_2\}$, and $a_0=b_0=0$, $a_m=k_1$, $b_m=k_2$. If for each $i\in\{0,\cdots,k_1-1\}$
we have  $a_{i+1}=a_i$ or $a_{i+1}=a_i+1$, and for each $i\in\{0,\cdots,k_2-1\}$, we have $b_{i+1}=b_i$ or $b_{i+1}=b_i+1$, then we say
$\mathcal{A}$ and $\mathcal{B}$ can determine a coupling $\mathcal{L}$ between $\gamma^{(1)}$ and $\gamma^{(2)}$, which is a sequence
{\small $\big(c_{a_0}^{(1)}, c_{b_0}^{(2)}\big),\big(c_{a_1}^{(1)}, c_{b_1}^{(2)}\big),\cdots,\big(c_{a_m}^{(1)}, c_{b_m}^{(2)}\big)$}.
We define the \textit{length} of $\mathcal{L}$ as $\|\mathcal{L}\|=\max_{0\leq i\leq m}\big\|c_{a_i}^{(1)}-c_{b_i}^{(2)}\big\|$. The discrete Frechet
distance is defined as:
\begin{small}
	\begin{equation*}
	\begin{split}
	&\text{dF}(\gamma^{(1)},\gamma^{(2)})\\
	=&\min\{\|\mathcal{L}\||\  \mathcal{L} \text{ is a a coupling between } \gamma^{(1)} \text{ and }\gamma^{(2)}\}.
	\end{split}
	\end{equation*}
\end{small}

\paragraph*{Dynamic Time Warping (DTW) Distance.} DTW ~\cite{YJF1998} is an algorithm to find the optimal matching between the critical points of two trajectories, and it does not require $k_1=k_2$. It is defined and computed by the recursion formula:
$D(i,j)=\big\|c_i^{(1)}-c_j^{(2)}\big\|+\min\big(D(i-1,j),\ D(i-1,j-1),\ D(i,j-1)\big)$,
where $D(0,j)=\|c_0^{(1)}-c_j^{(2)}\|$, $D(i,0)=\|c_i^{(1)}-c_0^{(2)}\|$, and DTW distance between $\gamma^{(1)}$ and $\gamma^{(2)}$ is defined as $\text{DTW}(\gamma^{(1)},\gamma^{(2)})=D(k_1,k_2)$.

\paragraph*{Discrete Hausdorff Distance.} It measure the spatial similarity between two trajectories ~\cite{FM2008}:
\begin{small}
	\begin{equation*}
	\text{dH}(\gamma^{(1)}, \gamma^{(2)})=\max (d(\gamma^{(1)}, \gamma^{(2)}),d(\gamma^{(2)}, \gamma^{(1)}))
	\end{equation*}
\end{small}
where $d(\gamma^{(1)}, \gamma^{(2)})=\max_{0\leq i\leq k_1}\min_{0\leq j\leq k_2}\|c_i^{(1)}-c_j^{(2)}\|$.

\paragraph*{Longest Common Subsequence Distance.} It finds the alignment between two sequences that maximize the length of common subsequence. Let $\text{Head}(\gamma^{(1)})$ be the first $k_1-1$ critical points of $\gamma^{(1)}$,
and $\text{Head}(\gamma^{(2)})$ be the first $k_2-1$ critical points of $\gamma^{(2)}$. Given  $\varepsilon, \delta>0$, the $\text{lcss}_{\varepsilon,\delta}(\gamma^{(1)},\gamma^{(2)})$ is defined as follows  ~\cite{ZZKH2006}:

\begin{small}
	\begin{equation*}
	\begin{split}
	&\text{lcss}_{\varepsilon,\delta}(\gamma^{(1)},\gamma^{(2)})=\\
	&\begin{cases}
	0,   \text{\ \ \ \ if $\gamma^{(1)}$ or $\gamma^{(2)}$ is empty}\\
	1+\text{lcss}_{\varepsilon,\delta}(\gamma^{(1)},\gamma^{(2)}),  \text{\ \ if $\|c_{k_1}^{(1)}-c_{k_2}^{(2)}\|<\varepsilon$ and $|k_1-k_2|<\delta$ }\\
	\max\big(\text{lcss}_{\varepsilon,\delta}(\text{Head}(\gamma^{(1)}),\gamma^{(2)}),
	\text{lcss}_{\varepsilon,\delta}(\gamma^{(1)},\text{Head}(\gamma^{(2)}))\big),\text{\ \ otherwise} \\
	\end{cases} \hspace{-0.1in}.
	\end{split}
	\end{equation*}
\end{small}
LCSS distance is defined as
$\text{LCSS}_{\varepsilon,\delta}(\gamma^{(1)},\gamma^{(2)})
=1-\frac{\text{lcss}_{\varepsilon,\delta}(\gamma^{(1)},\gamma^{(2)})}{\max (k_1,k_2)}$.

\paragraph*{Edit Distance for Real Sequences.}
It is similar to the edit distance on strings,
and seeking the minimum number of edit operations required to change one trajectory to another  ~\cite{COO2005}.
For EDR with $\varepsilon>0$, $\gamma^{(1)}$ and $\gamma^{(2)}$ are considered to be the same if $k_1=k_2$ and $\|c_i^{(1)}-c_i^{(2)}\|<\varepsilon$.

\paragraph*{Locality Sensitive Hashing Distance.}
Given a point set $Q\subset \R^2$, and $r>0$, It consider the disks with centers in $Q$ and radius equal to $r$. For LSH1$_Q$,
each trajectory is converted to a bit vector of length $|Q|$, and each bit represents the intersection of the trajectory with a disk.
and uses  Hamming distance of two bit vectors to define the distance between two curves.
For LSH2$_Q$,  each trajectories is converted to a sequence representing the
order in which the trajectory enters and exits the disks, and uses edit distance of two sequence to define the distance between two curves ~\cite{ACKGS2018}.

%%%%%%%%%%%%%%%%%%%%%%%%%%%%%%%%%%%%%%%%%%%%%%%%%%%%%%%%%%%%%%%%%%%
%%%%%%%%%%%%%%%%%%%%%%%%%%%%%%%%%%%%%%%%%%%%%%%%%%%%%%%%%%%%%%%%%%%
\section{More Trajectory Experiments}
\label{sec:more-exp}

%%%%%%%%%%%%%%%%%%%%%%%%%%%%%%%%%%%%%%%%%%%%%%%%%%%%%%%%%%%%%%%%%%%
%%%%%%%%%%%%%%%%%%%%%%%%%%%%%%%%%%%%%%%%%%%%%%%%%%%%%%%%%%%%%%%%%%%
\subsection{Different Weightings}
\label{SVM_linear_quadratic_synthetic_data}

For SVM with linear kernel and quadratic kernel, as we increase $w_1$, the error of $\dQ$ and $\dQ^\pi$
also decreases, although  not as obvious
as Gaussian kernel. The results are shown in Table \ref{table SVM synthetic data linear}
and Table \ref{table SVM synthetic data quadratic}.

\begin{table}[htbp]
	\begin{small}
		\center
		\caption{\small Landmark-sensitive classification error with weighted linear SVM.}
	
		\label{table SVM synthetic data linear}
		\begin{tabular}{c|ccc}
			\hline
			metrics &  mean  &  median  & variance \\
			\hline
			$\dQW$ ($w_1=0.3$)  & 0.3309 & 0.3333 & 0.0070 \\
			$\dQW$ ($w_1=0.6$)  & 0.3083 & 0.3333 & 0.0104 \\
			$\dQW$ ($w_1=0.9$)  & 0.3051 & 0.3333 & 0.0119 \\
			\hline
			$\dQW^\pi$ ($w_1=0.3$) & 0.4936 & 0.5000 & 0.0082 \\
			$\dQW^\pi$ ($w_1=0.6$) & 0.4191 & 0.4444 & 0.0049 \\
			$\dQW^\pi$ ($w_1=0.9$) & 0.4104 & 0.3889 & 0.0048 \\
			\hline
			$\dQWk$ ($w_1=0.3$) & 0.4372 & 0.4444 & 0.0081 \\
			$\dQWk$ ($w_1=0.6$) & 0.4340 & 0.4444 & 0.0080 \\
			$\dQWk$ ($w_1=0.9$) & 0.4329 & 0.4444 & 0.0080 \\
			\hline
		\end{tabular}
		
	\end{small}
\end{table}

\begin{table}[htbp]
	\begin{small}
		\center
		\caption{\small 	Landmark-sensitive classification error with weighted quadratic SVM.}
		\label{table SVM synthetic data quadratic}
		\begin{tabular}{c|ccc}
			\hline
			metrics &  mean  &  median  & variance \\
			\hline
			$\dQW$ ($w_1=0.3$)  & 0.3309 & 0.3333 & 0.0070 \\
			$\dQW$ ($w_1=0.6$)  & 0.3084 & 0.3333 & 0.0104 \\
			$\dQW$ ($w_1=0.9$)  & 0.3051 & 0.3333 & 0.0119 \\
			\hline
			$\dQW^\pi$ ($w_1=0.3$) & 0.5302 & 0.5000 & 0.0098 \\
			$\dQW^\pi$ ($w_1=0.6$) & 0.5270 & 0.5000 & 0.0105 \\
			$\dQW^\pi$ ($w_1=0.9$) & 0.3909 & 0.3889 & 0.0060 \\
			\hline
			$\dQWk$  ($w_1=0.3$) & 0.4367 & 0.4444 & 0.0081 \\
			$\dQWk$ ($w_1=0.6$) & 0.4333 & 0.4444 & 0.0079 \\
			$\dQWk$ ($w_1=0.9$) & 0.4322 & 0.4444 & 0.0079 \\
			\hline
		\end{tabular}
		
	\end{small}
\end{table}

\subsection{The error of LCSS, EAR and LSH with Other Parameters in Section \ref{sec:traj-analysis}}
\label{The error of LCSS, EAR and LSH with Other Parameters}

In the experiment of Section \ref{sec:traj-analysis}, the computation of LCSS, EAR LSH1$_Q$ and LSH2$_Q$
involves some parameters, and we only give the result of best parameter for these distances.
In this section, we describe the change of error statics for these distances with different parameters, and show
how we obtain the best parameter in each experiment. We use bold font to mark the smallest mean error and the corresponding median and variance.

%\section{Software}
%Our software and code for using these distances and reproducing experiments in Section
%\ref{Warm-up: k-means Clustering},
%\ref{Classifying Trajectories 1: Beijing Drivers},
%\ref{Classifying Trajectories 2: Bus versus Car} and
%\ref{Classifying Trajectories 3: Landmark-Sensitivity},
%have been uploaded to
%\href{https://drive.google.com/open?id=1Z_Na1nfioM_We8b1FnTU5UVuOYCjbP-j}
%{https://drive.google.com/open?id=1Z\_Na1nfioM\_We8b1FnTU5UVu}
%OYCjbP-j

\begin{table*}[t]
	\center
	\caption{Mean error of LCSS in Table \ref{table KNN real data1} with different parameters.}
	\label{Mean error of LCSS, table KNN real data1}
	\vspace{-.1in}
	
	\begin{tabular}{|r|cccccccccccc|}
		%{|p{1.2cm}||p{0.87cm}|p{0.87cm}|p{0.87cm}|p{0.87cm}|p{0.87cm}|}
		\hline
\diagbox{$\delta$}{mean}{$\varepsilon$}&
	0.0010&  0.0050&  0.0100&  0.0150&  0.0200&  0.0250&  0.0300&  0.0350&  0.0400&  0.0450&  0.0500&  0.0550\\
		\hline
1&    0.1115&    0.0822&    0.0856&    0.0969&    0.1105&    0.1297&    0.1532&    0.1749&    0.1902&    0.2003&    0.2087&    0.2182\\
2&    0.0940&    0.0785&    0.0840&    0.0954&    0.1085&    0.1278&    0.1511&    0.1731&    0.1879&    0.1987&    0.2064&    0.2164\\
3&    0.0901&    0.0769&    0.0833&    0.0946&    0.1078&    0.1271&    0.1503&    0.1718&    0.1865&    0.1977&    0.2057&    0.2160\\
4&    0.0860&    0.0755&    0.0823&    0.0936&    0.1077&    0.1267&    0.1496&    0.1707&    0.1861&    0.1966&    0.2050&    0.2151\\
5&    0.0846&    0.0745&    0.0819&    0.0935&    0.1079&    0.1269&    0.1495&    0.1704&    0.1857&    0.1961&    0.2046&    0.2150\\
6&    0.0826&    0.0739&    0.0821&    0.0939&    0.1079&    0.1265&    0.1494&    0.1706&    0.1855&    0.1958&    0.2045&    0.2149\\
7&    0.0816&    0.0734&    0.0823&    0.0937&    0.1078&    0.1261&    0.1490&    0.1702&    0.1853&    0.1957&    0.2043&    0.2147\\
8&    0.0802&    0.0729&    0.0817&    0.0935&    0.1075&    0.1261&    0.1489&    0.1702&    0.1852&    0.1957&    0.2041&    0.2145\\
9&    0.0795&    0.0721&    0.0815&    0.0933&    0.1075&    0.1262&    0.1490&    0.1699&    0.1849&    0.1955&    0.2039&    0.2142\\
10&   0.0783&    \pmb{0.0714}&    0.0811&    0.0930&    0.1072&    0.1258&    0.1485&    0.1695&    0.1845&    0.1951&    0.2037&    0.2140\\
	\hline
	\end{tabular}
\end{table*}

\begin{table*}[t]
	\center
	\caption{Median error of LCSS in Table \ref{table KNN real data1} with different parameters.}
	\label{Medina error of LCSS, table KNN real data1}
	\vspace{-.1in}
	
	\begin{tabular}{|r|cccccccccccc|}
		%{|p{1.2cm}||p{0.87cm}|p{0.87cm}|p{0.87cm}|p{0.87cm}|p{0.87cm}|}
		\hline
\diagbox{$\delta$}{mean}{$\varepsilon$}&
    0.0010&  0.0050&  0.0100&  0.0150&  0.0200&  0.0250&  0.0300&  0.0350&  0.0400&  0.0450&  0.0500&  0.0550\\
		\hline
1&    0.0869&    0.0577&    0.0589&    0.0652&    0.0741&    0.0889&    0.1029&    0.1143&    0.1240&    0.1325&    0.1387&    0.1474\\
2&    0.0707&    0.0536&    0.0576&    0.0643&    0.0722&    0.0868&    0.1000&    0.1118&    0.1222&    0.1304&    0.1360&    0.1458\\
3&    0.0667&    0.0531&    0.0571&    0.0640&    0.0720&    0.0867&    0.1000&    0.1103&    0.1200&    0.1297&    0.1357&    0.1464\\
4&    0.0625&    0.0526&    0.0565&    0.0640&    0.0720&    0.0864&    0.1000&    0.1094&    0.1200&    0.1278&    0.1353&    0.1449\\
5&    0.0608&    0.0524&    0.0564&    0.0643&    0.0728&    0.0865&    0.1000&    0.1087&    0.1194&    0.1274&    0.1357&    0.1449\\
6&    0.0590&    0.0516&    0.0567&    0.0649&    0.0729&    0.0857&    0.1000&    0.1088&    0.1189&    0.1267&    0.1353&    0.1449\\
7&    0.0583&    0.0512&    0.0571&    0.0647&    0.0728&    0.0857&    0.0987&    0.1088&    0.1187&    0.1267&    0.1353&    0.1446\\
8&    0.0571&    0.0506&    0.0568&    0.0646&    0.0730&    0.0857&    0.0984&    0.1083&    0.1187&    0.1266&    0.1346&    0.1444\\
9&    0.0566&    0.0500&    0.0564&    0.0645&    0.0731&    0.0857&    0.0984&    0.1079&    0.1182&    0.1261&    0.1340&    0.1444\\
10&   0.0556&    \pmb{0.0500}&    0.0563&    0.0643&    0.0728&    0.0850&    0.0976&    0.1076&    0.1179&    0.1256&    0.1336&    0.1440\\
		\hline
	\end{tabular}

\end{table*}

\begin{table*}[t]
	\center
	\caption{Error variance of LCSS in Table \ref{table KNN real data1} with different parameters.}
	\label{Error variance of LCSS, table KNN real data1}
	\vspace{-.1in}
	
	\begin{tabular}{|r|cccccccccccc|}
		%{|p{1.2cm}||p{0.87cm}|p{0.87cm}|p{0.87cm}|p{0.87cm}|p{0.87cm}|}
		\hline
\diagbox{$\delta$}{variance}{$\varepsilon$}&
    0.0010&  0.0050&  0.0100&  0.0150&  0.0200&  0.0250&  0.0300&  0.0350&  0.0400&  0.0450&  0.0500&  0.0550\\
		\hline
1&    0.0087&    0.0070&    0.0077&    0.0099&    0.0127&    0.0162&    0.0220&    0.0280&    0.0315&    0.0336&    0.0350&    0.0367\\
2&    0.0073&    0.0068&    0.0075&    0.0096&    0.0123&    0.0161&    0.0218&    0.0276&    0.0311&    0.0335&    0.0349&    0.0365\\
3&    0.0072&    0.0063&    0.0074&    0.0094&    0.0121&    0.0160&    0.0216&    0.0274&    0.0310&    0.0333&    0.0348&    0.0364\\
4&    0.0068&    0.0060&    0.0072&    0.0092&    0.0121&    0.0158&    0.0216&    0.0273&    0.0310&    0.0332&    0.0346&    0.0363\\
5&    0.0068&    0.0058&    0.0071&    0.0091&    0.0120&    0.0158&    0.0215&    0.0273&    0.0309&    0.0332&    0.0346&    0.0363\\
6&    0.0065&    0.0058&    0.0071&    0.0090&    0.0120&    0.0158&    0.0215&    0.0273&    0.0309&    0.0332&    0.0346&    0.0363\\
7&    0.0065&    0.0057&    0.0071&    0.0090&    0.0120&    0.0158&    0.0214&    0.0272&    0.0308&    0.0331&    0.0346&    0.0363\\
8&    0.0063&    0.0056&    0.0070&    0.0090&    0.0120&    0.0158&    0.0215&    0.0272&    0.0308&    0.0331&    0.0346&    0.0363\\
9&    0.0063&    0.0056&    0.0070&    0.0090&    0.0120&    0.0158&    0.0215&    0.0273&    0.0308&    0.0332&    0.0346&    0.0363\\
10&   0.0060&    \pmb{0.0054}&    0.0070&    0.0089&    0.0119&    0.0157&    0.0214&    0.0272&    0.0308&    0.0332&    0.0346&    0.0363\\
		\hline
	\end{tabular}

\end{table*}

\begin{table*}[t]
	\center
	\caption{Classification Error of EDR in Table \ref{table KNN real data1} with different parameters.}
	\label{Error of EDR, table KNN real data1}
	\vspace{-.1in}
	
	\begin{tabular}{r|cccccccccccc}
		%{|p{1.2cm}||p{0.87cm}|p{0.87cm}|p{0.87cm}|p{0.87cm}|p{0.87cm}|}
		\hline
$\varepsilon$ & 0.0010&  0.0050&  0.0100&  0.0150&  0.0200&  0.0250&  0.0300&  0.0350&  0.0400&  0.0450&  0.0500&  0.0550\\
		\hline
mean          & 0.1070&  \pmb{0.0802}&  0.0846&  0.0957&  0.1096&  0.1289&  0.1521&  0.1744&  0.1894&  0.1997&  0.2078&  0.2175\\
median        & 0.0833&  \pmb{0.0554}&  0.0581&  0.0640&  0.0731&  0.0875&  0.1009&  0.1139&  0.1229&  0.1319&  0.1378&  0.1462\\
variance      & 0.0084&  \pmb{0.0070}&  0.0077&  0.0098&  0.0127&  0.0162&  0.0219&  0.0279&  0.0314&  0.0336&  0.0350&  0.0367\\
		\hline
		
	\end{tabular}

\end{table*}

\begin{table*}[t]
	\center
	\caption{Classification Error of LSH1$_Q$ and LSH2$_Q$ in Table \ref{table KNN real data1} with different parameters.}
	\label{Error of LSH, table KNN real data1}
	\vspace{-.1in}
	
	\begin{tabular}{rr|cccccccccccc}
		%{|p{1.2cm}||p{0.87cm}|p{0.87cm}|p{0.87cm}|p{0.87cm}|p{0.87cm}|}
		\hline
		 &$r$ & 0.0050& 0.0100 & 0.0200 & 0.0300 & 0.0400 & 0.0500 & 0.0600  &  0.0700 & 0.0800 & 0.0900 & 0.1000 &0.1100\\
		 \hline
         &mean    &  0.4145& 0.3792&  0.3143&   0.2645&  0.2197&  0.1374&  \pmb{0.1290}&  0.1501&  0.1487&  0.1774&  0.1680&  0.1633\\
LSH1$_Q$ &median  &  0.3913& 0.3500&  0.2693&   0.2121&  0.1667&  0.1000&  \pmb{0.0949}&  0.1114&  0.1046&  0.1133&  0.1154&  0.1179\\
		 &variance&  0.0616& 0.0530&  0.0448&   0.0404&  0.0315&  0.0153&  \pmb{0.0128}&  0.0168&  0.0176&  0.0286&  0.0232&  0.0207\\
		 \hline
		 &mean    &  0.4161&  0.3873&  0.3449&  0.3043&  0.2798&  0.2637&  0.2574&  0.2494&  0.2445&  0.2415&  \pmb{0.2409}&  0.2426\\
LSH2$_Q$ &median  &  0.3919&  0.3644&  0.3154&  0.2605&  0.2333&  0.2281&  0.2255&  0.2275&  0.2216&  0.2195&  \pmb{0.2182}&  0.2191\\
		 &variance&  0.0621&  0.0531&  0.0457&  0.0405&  0.0381&  0.0301&  0.0271&  0.0224&  0.0220&  0.0210&  \pmb{0.0210}&  0.0215\\
		 \hline
	\end{tabular}

\end{table*}

\begin{table*}[t]
	\center
	\caption{Mean error of LCSS in Table \ref{table KNN and SVM real data2 Q1} with different parameters.}
	\label{Mean error of LCSS, table KNN real data2 Q1}
	\vspace{-.1in}
	\begin{tabular}{|r|cccccccccccc|}
	%{|p{1.2cm}||p{0.87cm}|p{0.87cm}|p{0.87cm}|p{0.87cm}|p{0.87cm}|}
	\hline
	\diagbox{$\delta$}{mean}{$\varepsilon$}
 &  0.0010&  0.0050&  0.0100&  0.0150&  0.0200&  0.0250&  0.0300&  0.0350&  0.0400&  0.0450&  0.0500&  0.0550\\
 	\hline
1&  0.2761&  0.2956&  0.2634&  0.2676&  0.2839&  0.3046&  0.3049&  0.3147&  0.3256&  0.3459&  0.3527&  0.3582\\
2&  0.3123&  0.2761&  0.2718&  0.2647&  0.2866&  0.3148&  0.3112&  0.3160&  0.3267&  0.3458&  0.3552&  0.3598\\
3&  0.3116&  0.2905&  0.2685&  \pmb{0.2448}&  0.2941&  0.3350&  0.3129&  0.3160&  0.3267&  0.3458&  0.3542&  0.3598\\
4&  0.3112&  0.2911&  0.2552&  0.2556&  0.2937&  0.3347&  0.3135&  0.3160&  0.3267&  0.3458&  0.3542&  0.3598\\
5&  0.2823&  0.2919&  0.2680&  0.2656&  0.2965&  0.3352&  0.3135&  0.3160&  0.3267&  0.3458&  0.3542&  0.3598\\
6&  0.2883&  0.2904&  0.2691&  0.2726&  0.2965&  0.3352&  0.3135&  0.3160&  0.3267&  0.3458&  0.3542&  0.3596\\
7&  0.2931&  0.2887&  0.2645&  0.2726&  0.2965&  0.3352&  0.3135&  0.3160&  0.3267&  0.3458&  0.3542&  0.3598\\
8&  0.2952&  0.2828&  0.2655&  0.2726&  0.2965&  0.3352&  0.3135&  0.3160&  0.3267&  0.3458&  0.3542&  0.3598\\
19& 0.2946&  0.2831&  0.2655&  0.2726&  0.2965&  0.3352&  0.3135&  0.3160&  0.3267&  0.3458&  0.3542&  0.3598\\
10& 0.2934&  0.2831&  0.2655&  0.2726&  0.2965&  0.3352&  0.3135&  0.3160&  0.3267&  0.3458&  0.3542&  0.3598\\	
	\hline
\end{tabular}

\end{table*}

  \begin{table*}[t]
  	\center
  	\caption{Median error of LCSS in Table \ref	{table KNN and SVM real data2 Q1} with different parameters.}
  	\label{Median error of LCSS, table KNN real data2 Q1}
  	\vspace{-.1in}
  	\begin{tabular}{|r|cccccccccccc|}
  		%{|p{1.2cm}||p{0.87cm}|p{0.87cm}|p{0.87cm}|p{0.87cm}|p{0.87cm}|}
  		\hline
	\diagbox{$\delta$}{median}{$\varepsilon$}
 &  0.0010&  0.0050&  0.0100&  0.0150&  0.0200&  0.0250&  0.0300&  0.0350&  0.0400&  0.0450&  0.0500&  0.0550\\
		\hline
1&  0.2778&  0.3056&  0.2500&  0.2778&  0.2778&  0.3056&  0.3056&  0.3056&  0.3333&  0.3611&  0.3611&  0.3611\\
2&  0.3056&  0.2778&  0.2778&  0.2500&  0.2778&  0.3056&  0.3056&  0.3056&  0.3333&  0.3611&  0.3611&  0.3611\\
3&  0.3056&  0.2778&  0.2778&  \pmb{0.2500}&  0.2778&  0.3333&  0.3056&  0.3056&  0.3333&  0.3611&  0.3611&  0.3611\\
4&  0.3056&  0.2778&  0.2500&  0.2500&  0.2778&  0.3333&  0.3056&  0.3056&  0.3333&  0.3611&  0.3611&  0.3611\\
5&  0.2778&  0.2778&  0.2778&  0.2500&  0.3056&  0.3333&  0.3056&  0.3056&  0.3333&  0.3611&  0.3611&  0.3611\\
6&  0.2778&  0.2778&  0.2778&  0.2778&  0.3056&  0.3333&  0.3056&  0.3056&  0.3333&  0.3611&  0.3611&  0.3611\\
7&  0.3056&  0.2778&  0.2500&  0.2778&  0.3056&  0.3333&  0.3056&  0.3056&  0.3333&  0.3611&  0.3611&  0.3611\\
8&  0.3056&  0.2778&  0.2500&  0.2778&  0.3056&  0.3333&  0.3056&  0.3056&  0.3333&  0.3611&  0.3611&  0.3611\\
9&  0.3056&  0.2778&  0.2500&  0.2778&  0.3056&  0.3333&  0.3056&  0.3056&  0.3333&  0.3611&  0.3611&  0.3611\\
10& 0.3056&  0.2778&  0.2500&  0.2778&  0.3056&  0.3333&  0.3056&  0.3056&  0.3333&  0.3611&  0.3611&  0.3611\\
  		\hline
  	\end{tabular}
  \end{table*}

\begin{table*}[t]
	\center
	\caption{Error variance of LCSS in Table \ref{table KNN and SVM real data2 Q1} with different parameters.}
	\label{Error variance of LCSS, table KNN real data2 Q1}
	\vspace{-.1in}
	\begin{tabular}{|r|cccccccccccc|}
		%{|p{1.2cm}||p{0.87cm}|p{0.87cm}|p{0.87cm}|p{0.87cm}|p{0.87cm}|}
		\hline
	\diagbox{$\delta$}{variance}{$\varepsilon$}
 &  0.0010&  0.0050&  0.0100&  0.0150&  0.0200&  0.0250&  0.0300&  0.0350&  0.0400&  0.0450&  0.0500&  0.0550\\
		\hline
1&  0.0030&  0.0037&  0.0038&  0.0041&  0.0036&  0.0034&  0.0025&  0.0013&  0.0008&  0.0004&  0.0002&  0.0003\\
2&  0.0034&  0.0035&  0.0038&  0.0038&  0.0036&  0.0038&  0.0026&  0.0012&  0.0007&  0.0003&  0.0002&  0.0003\\
3&  0.0030&  0.0035&  0.0038&  \pmb{0.0037}&  0.0037&  0.0038&  0.0025&  0.0012&  0.0007&  0.0003&  0.0002&  0.0003\\
4&  0.0030&  0.0036&  0.0035&  0.0037&  0.0036&  0.0038&  0.0025&  0.0012&  0.0007&  0.0003&  0.0002&  0.0003\\
5&  0.0031&  0.0035&  0.0037&  0.0039&  0.0035&  0.0038&  0.0025&  0.0012&  0.0007&  0.0003&  0.0002&  0.0003\\
6&  0.0029&  0.0033&  0.0036&  0.0039&  0.0035&  0.0038&  0.0025&  0.0012&  0.0007&  0.0003&  0.0002&  0.0003\\
7&  0.0029&  0.0034&  0.0035&  0.0040&  0.0035&  0.0038&  0.0025&  0.0012&  0.0007&  0.0003&  0.0002&  0.0003\\
8&  0.0030&  0.0031&  0.0035&  0.0040&  0.0035&  0.0038&  0.0025&  0.0012&  0.0007&  0.0003&  0.0002&  0.0003\\
9&  0.0029&  0.0031&  0.0035&  0.0040&  0.0035&  0.0038&  0.0025&  0.0012&  0.0007&  0.0003&  0.0002&  0.0003\\
10& 0.0029&  0.0031&  0.0035&  0.0040&  0.0035&  0.0038&  0.0025&  0.0012&  0.0007&  0.0003&  0.0002&  0.0003\\
		\hline
	\end{tabular}
\end{table*}

\begin{table*}[t]
	\center
	\caption{Classification error of EDR in Table \ref{table KNN and SVM real data2 Q1} with different parameters.}
	\label{Classification error of EDR, table KNN real data2 Q1}
	\vspace{-.1in}
	
	\begin{tabular}{r|cccccccccccccc}
	%{|p{1.2cm}||p{0.87cm}|p{0.87cm}|p{0.87cm}|p{0.87cm}|p{0.87cm}|}
	\hline
$\varepsilon$ & 0.0010&  0.0050&  0.0100&  0.0150&  0.0200&  0.0250&  0.0300&  0.0350&  0.0400&  0.0450&  0.0500&  0.0550\\
	\hline
mean          & 0.2748&  0.2932&  0.2661&  \pmb{0.2640}&  0.2854&  0.3036&  0.3050&  0.3147&  0.3256&  0.3459&  0.3527&  0.3582\\
median        & 0.2778&  0.2778&  0.2639&  \pmb{0.2500}&  0.2778&  0.3056&  0.3056&  0.3056&  0.3333&  0.3611&  0.3611&  0.3611\\
variance      & 0.0028&  0.0038&  0.0037&  \pmb{0.0039}&  0.0035&  0.0035&  0.0025&  0.0013&  0.0008&  0.0004&  0.0002&  0.0003\\
	\hline	
\end{tabular}
\end{table*}

\begin{table*}[t]
	\center
	\caption{Classification error of  LSH1$_Q$ and LSH2$_Q$ in Table \ref{table KNN and SVM real data2 Q1} with different parameters.}
	\label{Classification error of LSH, table KNN real data2 Q1}
	\vspace{-.1in}
	
	\begin{tabular}{rr|cccccccccccc}
	%{|p{1.2cm}||p{0.87cm}|p{0.87cm}|p{0.87cm}|p{0.87cm}|p{0.87cm}|}
	\hline
            &$r$      &0.0050&  0.0100&        0.0200&  0.0300&  0.0400&  0.0500&  0.0600&  0.0700&  0.0800&  0.0900&  0.1000&  0.1100\\
    \hline
            &mean     &0.3360&  0.2767&  \pmb{0.2673}&  0.2784&  0.3211&  0.3804&  0.3647&  0.3707&  0.3627&  0.3616&  0.3611&  0.3659\\
LSH1$_{Q_1}$&median   &0.3611&  0.2778&  \pmb{0.2778}&  0.2778&  0.3333&  0.3611&  0.3611&  0.3611&  0.3611&  0.3611&  0.3611&  0.3611\\
            &variance &0.0013&  0.0026&  \pmb{0.0020}&  0.0028&  0.0038&  0.0012&  0.0007&  0.0012&  0.0002&  0.0000&  0.0000&  0.0004\\
	\hline
	        &mean     &0.3361&  0.2959&  0.2789&  0.2997&  0.2869&  0.2830&  0.2811&  0.2543&  \pmb{0.2516}&  0.2619&  0.2684&  0.2834\\
LSH2$_{Q_1}$&median   &0.3611&  0.3056&  0.2778&  0.3056&  0.2778&  0.2778&  0.2778&  0.2500&  \pmb{0.2500}&  0.2778&  0.2778&  0.2778\\
			&variance &0.0013&  0.0020&  0.0016&  0.0026&  0.0027&  0.0031&  0.0022&  0.0021&  \pmb{0.0022}&  0.0017&  0.0016&  0.0014\\
	\hline
			&mean     &0.3365&  0.2517&  0.2352&  \pmb{0.2209}&  0.2468&  0.2642&  0.3334&  0.3020&  0.3754&  0.3668&  0.3626&  0.3611\\
LSH1$_{Q_2}$&median   &0.3333&  0.2500&  0.2222&  \pmb{0.2222}&  0.2500&  0.2500&  0.3333&  0.3056&  0.3611&  0.3611&  0.3611&  0.3611\\
			&variance &0.0007&  0.0026&  0.0037&  \pmb{0.0039}&  0.0035&  0.0027&  0.0028&  0.0024&  0.0009&  0.0005&  0.0004&  0.0000\\
	\hline
         	&mean     &0.3472&  0.3480&  0.3428&  0.2879&  0.3131&  \pmb{0.2690}&  0.2945&  0.2857&  0.3217&  0.3072&  0.3249&  0.3164\\
LSH2$_{Q_2}$&median   &0.3611&  0.3611&  0.3333&  0.2778&  0.3056&  \pmb{0.2778}&  0.3056&  0.2778&  0.3333&  0.3056&  0.3333&  0.3056\\
			&variance &0.0008&  0.0006&  0.0022&  0.0018&  0.0028&  \pmb{0.0022}&  0.0023&  0.0023&  0.0015&  0.0021&  0.0022&  0.0015\\
\hline	
\end{tabular}
\end{table*}

\begin{table*}[t]
	\center
	\caption{Mean error of LCSS in Table \ref{table KNN the synthetic_data} with different parameters.}
	\label{Mean error of LCSS, table KNN the synthetic_data}
	\vspace{-.1in}
	\begin{tabular}{|r|cccccccccccc|}
		%{|p{1.2cm}||p{0.87cm}|p{0.87cm}|p{0.87cm}|p{0.87cm}|p{0.87cm}|}
		\hline
		\diagbox{$\delta$}{mean}{$\varepsilon$}
 &  0.0010&  0.0050&  0.0100&  0.0150&  0.0200&  0.0250&  0.0300&  0.0350&  0.0400&  0.0450&  0.0500&  0.0550\\
		\hline
1&  0.4961&  0.4776&  0.4484&  0.4584&  0.4068&  0.4395&  0.4412&  0.4033&  0.4233&  0.4585&  0.5073&  0.5142\\
2&  0.4112&  0.3867&  0.4405&  0.4520&  0.4363&  0.4539&  0.4238&  0.4611&  0.4999&  0.5007&  0.5062&  0.5299\\
3&  0.4025&  0.4163&  0.4903&  0.4728&  0.4343&  0.4307&  0.4389&  0.4448&  0.4656&  0.4737&  0.4814&  0.5158\\
4&  0.3504&  0.4115&  0.4320&  0.4481&  0.4435&  0.4066&  0.4319&  0.4546&  0.4397&  0.4511&  0.4631&  0.4901\\
5&  0.3509&  0.4190&  0.4082&  0.4217&  0.4177&  0.4061&  0.4378&  0.4453&  0.4606&  0.4389&  0.4738&  0.4983\\
6&  0.3481&  0.4117&  0.3961&  0.4000&  0.3939&  0.4045&  0.4368&  0.4465&  0.4592&  0.4391&  0.4759&  0.4993\\
7&  0.3527&  0.4241&  0.3996&  0.4009&  0.3947&  0.4071&  0.4308&  0.4387&  0.4569&  0.4397&  0.4780&  0.4993\\
8&  \pmb{0.3437}&  0.4141&  0.3998&  0.4009&  0.3947&  0.4064&  0.4308&  0.4324&  0.4554&  0.4390&  0.4780&  0.4993\\
9&  0.3499&  0.4244&  0.4039&  0.3969&  0.3961&  0.4064&  0.4308&  0.4324&  0.4554&  0.4390&  0.4780&  0.4993\\
10& 0.3582&  0.4329&  0.4041&  0.3969&  0.3961&  0.4064&  0.4308&  0.4324&  0.4554&  0.4390&  0.4780&  0.4993\\
		\hline
	\end{tabular}
\end{table*}

\begin{table*}[t]
	\center
	\caption{Median error of LCSS in Table \ref{table KNN the synthetic_data} with different parameters.}
	\label{Median error of LCSS, table KNN the synthetic_data}
	\vspace{-.1in}
	\begin{tabular}{|r|cccccccccccc|}
		%{|p{1.2cm}||p{0.87cm}|p{0.87cm}|p{0.87cm}|p{0.87cm}|p{0.87cm}|}
		\hline
		\diagbox{$\delta$}{median}{$\varepsilon$}
 &  0.0010&  0.0050&  0.0100&  0.0150&  0.0200&  0.0250&  0.0300&  0.0350&  0.0400&  0.0450&  0.0500&  0.0550\\
 		\hline
1&  0.5000&  0.5000&  0.4444&  0.4444&  0.3889&  0.4444&  0.4444&  0.3889&  0.4444&  0.4444&  0.5000&  0.5000\\
2&  0.3889&  0.3889&  0.4444&  0.4444&  0.4444&  0.4444&  0.4444&  0.4444&  0.5000&  0.5000&  0.5000&  0.5000\\
3&  0.3889&  0.4444&  0.5000&  0.4444&  0.4444&  0.4444&  0.4444&  0.4444&  0.4444&  0.4444&  0.5000&  0.5000\\
4&  0.3333&  0.3889&  0.4444&  0.4444&  0.4444&  0.3889&  0.4444&  0.4444&  0.4444&  0.4444&  0.4444&  0.5000\\
5&  0.3333&  0.4444&  0.3889&  0.4444&  0.4444&  0.3889&  0.4444&  0.4444&  0.4444&  0.4444&  0.5000&  0.5000\\
6&  0.3333&  0.3889&  0.3889&  0.3889&  0.3889&  0.3889&  0.4444&  0.4444&  0.4444&  0.4444&  0.5000&  0.5000\\
7&  0.3333&  0.4444&  0.3889&  0.3889&  0.3889&  0.3889&  0.4444&  0.4444&  0.4444&  0.4444&  0.5000&  0.5000\\
8&  \pmb{0.3333}&  0.4167&  0.3889&  0.3889&  0.3889&  0.3889&  0.4444&  0.4444&  0.4444&  0.4444&  0.5000&  0.5000\\
9&  0.3333&  0.4444&  0.3889&  0.3889&  0.3889&  0.3889&  0.4444&  0.4444&  0.4444&  0.4444&  0.5000&  0.5000\\
10& 0.3333&  0.4444&  0.3889&  0.3889&  0.3889&  0.3889&  0.4444&  0.4444&  0.4444&  0.4444&  0.5000&  0.5000\\
		\hline
	\end{tabular}

\end{table*}

\begin{table*}[t]
	\center
	\caption{Error variance of LCSS in Table \ref{table KNN the synthetic_data} with different parameters.}
	\label{Error variance of LCSS, table KNN the synthetic_data}
	\vspace{-.1in}
	\begin{tabular}{|r|cccccccccccc|}
		%{|p{1.2cm}||p{0.87cm}|p{0.87cm}|p{0.87cm}|p{0.87cm}|p{0.87cm}|}
		\hline
		\diagbox{$\delta$}{variance}{$\varepsilon$}
 &  0.0010&  0.0050&  0.0100&  0.0150&  0.0200&  0.0250&  0.0300&  0.0350&  0.0400&  0.0450&  0.0500&  0.0550\\
 		\hline
1&  0.0015&  0.0036&  0.0064&  0.0062&  0.0070&  0.0068&  0.0066&  0.0058&  0.0061&  0.0068&  0.0076&  0.0077\\
2&  0.0051&  0.0063&  0.0066&  0.0070&  0.0074&  0.0062&  0.0069&  0.0069&  0.0069&  0.0073&  0.0074&  0.0081\\
3&  0.0053&  0.0058&  0.0071&  0.0070&  0.0068&  0.0056&  0.0069&  0.0069&  0.0079&  0.0084&  0.0079&  0.0093\\
4&  0.0058&  0.0065&  0.0073&  0.0072&  0.0073&  0.0058&  0.0074&  0.0072&  0.0078&  0.0080&  0.0077&  0.0090\\
5&  0.0069&  0.0078&  0.0077&  0.0073&  0.0063&  0.0059&  0.0072&  0.0072&  0.0076&  0.0078&  0.0078&  0.0091\\
6&  0.0070&  0.0075&  0.0077&  0.0075&  0.0067&  0.0059&  0.0074&  0.0070&  0.0072&  0.0078&  0.0077&  0.0090\\
7&  0.0066&  0.0072&  0.0076&  0.0079&  0.0070&  0.0058&  0.0072&  0.0068&  0.0070&  0.0078&  0.0076&  0.0090\\
8&  \pmb{0.0066}&  0.0074&  0.0076&  0.0079&  0.0070&  0.0057&  0.0072&  0.0063&  0.0069&  0.0077&  0.0076&  0.0090\\
9&  0.0066&  0.0072&  0.0076&  0.0078&  0.0068&  0.0057&  0.0072&  0.0063&  0.0069&  0.0077&  0.0076&  0.0090\\
10& 0.0066&  0.0070&  0.0076&  0.0078&  0.0068&  0.0057&  0.0072&  0.0063&  0.0069&  0.0077&  0.0076&  0.0090\\
		\hline
	\end{tabular}

\end{table*}

\begin{table*}[t]
	\center
	\caption{Classification error of EDR in Table \ref{table KNN the synthetic_data} with different parameters.}
	\label{Classification error of EDR, table KNN the synthetic_data}
	\vspace{-.1in}
	\begin{tabular}{r|cccccccccccc}
		%{|p{1.2cm}||p{0.87cm}|p{0.87cm}|p{0.87cm}|p{0.87cm}|p{0.87cm}|}
		\hline
$\varepsilon$ & 0.0010&  0.0050&  0.0100&  0.0150&  0.0200&  0.0250&  0.0300&  0.0350&  0.0400&  0.0450&  0.0500&  0.0550\\
		\hline
mean          & 0.4632&  0.4541&  0.4171&  0.4450&  \pmb{0.3916}&  0.4134&  0.4422&  0.4259&  0.4618&  0.4559&  0.4681&  0.5123\\
median		  & 0.4444&  0.4444&  0.4444&  0.4444&  \pmb{0.3889}&  0.3889&  0.4444&  0.4444&  0.4444&  0.4444&  0.4444&  0.5000\\
variance      & 0.0026&  0.0043&  0.0065&  0.0068&  \pmb{0.0068}&  0.0064&  0.0058&  0.0060&  0.0062&  0.0079&  0.0081&  0.0087\\
		\hline
	\end{tabular}

\end{table*}

\begin{table*}[t]
	\center
	\caption{Classification Error of LSH1$_Q$ and LSH2$_Q$ in Table \ref{table KNN the synthetic_data} with different parameters.}
	\label{Error of LSH, table KNN the synthetic_data}
	\vspace{-.1in}
	
	\begin{tabular}{rr|cccccccccccc}
		%{|p{1.2cm}||p{0.87cm}|p{0.87cm}|p{0.87cm}|p{0.87cm}|p{0.87cm}|}
		\hline
		&$r$ & 0.0050 & 0.0100 & 0.0200 & 0.0300 & 0.0400 & 0.0500 & 0.0600  &  0.0700 & 0.0800 & 0.0900 & 0.1000 & 0.1100\\
		\hline
		 &mean     & 0.5098&  \pmb{0.2524}&  0.2950&  0.4878&  0.4443&  0.4691&  0.4494&  0.4558&  0.5046&  0.5103&  0.4439&  0.4305\\
LSH1$_Q$ &median   & 0.5000&  \pmb{0.2222}&  0.2778&  0.5000&  0.4444&  0.4444&  0.4444&  0.4444&  0.5000&  0.5000&  0.4444&  0.4444\\
		 &variance & 0.0006&  \pmb{0.0098}&  0.0067&  0.0064&  0.0062&  0.0066&  0.0085&  0.0059&  0.0045&  0.0068&  0.0068&  0.0061\\
		\hline
	  	 &mean     &0.5000&  0.4547&  \pmb{0.3248}&  0.3850&  0.5271&  0.5400&  0.5216&  0.5130&  0.4828&  0.4943&  0.4406&  0.4865\\
LSH2$_Q$ &median   &0.5000&  0.4444&  \pmb{0.3333}&  0.3889&  0.5278&  0.5556&  0.5000&  0.5000&  0.5000&  0.5000&  0.4444&  0.5000\\
		 &variance &0     &  0.0074&  \pmb{0.0084}&  0.0076&  0.0068&  0.0049&  0.0046&  0.0072&  0.0052&  0.0049&  0.0076&  0.0070\\

		\hline
	\end{tabular}
\end{table*}

\end{document}